\documentclass[11pt]{article}

\usepackage[letterpaper,margin=1in]{geometry}
\usepackage{amsmath,amssymb,amsthm,mathtools}
\usepackage{algorithm}
\usepackage[noend]{algpseudocode}
\usepackage{booktabs}
\usepackage{tabularx}
\usepackage{enumitem}
\usepackage{microtype}
\usepackage{aliascnt}
\usepackage{hyperref}
\usepackage[nameinlink,capitalize,noabbrev]{cleveref}

\hypersetup{
 colorlinks=true,
 linkcolor=blue,
 citecolor=blue,
 urlcolor=blue,
 pdftitle={Random-Order Online Facility Location Beyond Uniform Opening Costs},
 pdfauthor={Bo Peng and Zhihao Gavin Tang}
}
\allowdisplaybreaks

\makeatletter
\providecommand*{\theHALG@line}{\thealgorithm.\arabic{ALG@line}}
\makeatother

\newtheorem{theorem}{Theorem}[section]

\newaliascnt{lemma}{theorem}
\newtheorem{lemma}[lemma]{Lemma}
\aliascntresetthe{lemma}

\newaliascnt{proposition}{theorem}
\newtheorem{proposition}[proposition]{Proposition}
\aliascntresetthe{proposition}

\newaliascnt{corollary}{theorem}
\newtheorem{corollary}[corollary]{Corollary}
\aliascntresetthe{corollary}

\newaliascnt{claim}{theorem}

\aliascntresetthe{claim}

\theoremstyle{definition}
\newaliascnt{definition}{theorem}
\newtheorem{definition}[definition]{Definition}
\aliascntresetthe{definition}

\newaliascnt{remark}{theorem}
\newtheorem{remark}[remark]{Remark}
\aliascntresetthe{remark}

\crefname{theorem}{Theorem}{Theorems}
\crefname{lemma}{Lemma}{Lemmas}
\crefname{proposition}{Proposition}{Propositions}
\crefname{corollary}{Corollary}{Corollaries}
\crefname{claim}{Claim}{Claims}
\crefname{definition}{Definition}{Definitions}
\crefname{remark}{Remark}{Remarks}

\newcommand{\F}{\mathcal F}

\newcommand{\OPT}{\mathrm{OPT}}
\newcommand{\ALG}{\mathrm{ALG}}
\newcommand{\E}{\mathbb E}
\newcommand{\Prb}{\mathbb P}
\newcommand{\1}{\mathbf 1}
\newcommand{\R}{\mathbb R}

\newcommand{\eps}{\varepsilon}

\title{Random-Order Online Facility Location\\
Beyond Uniform Opening Costs}
\author{
Bo Peng\thanks{
Email: \texttt{ahqspbo@gmail.com}
}
\qquad
Zhihao Gavin Tang\thanks{
Email: \texttt{tang.zhihao@mail.shufe.edu.cn}
}\\[1mm]
Shanghai University of Finance and Economics
}
\date{}

\begin{document}
\hypersetup{pageanchor=false}
\pagenumbering{gobble}

\begin{titlepage}
\maketitle
\vspace{-1.5em}

\begin{abstract}
We study online metric facility location in the random-order model with
arbitrary positive opening costs. A finite set of candidate facilities and
their costs is known in advance, while an adversary fixes a multiset of demand
points that arrives in a uniformly random order. This setting includes both
prescribed candidate sites and the classical finite full-space node-cost
model.

For a known horizon, we give a deterministic $4.2674$-competitive algorithm,
improving the previous factor $33$ for nonuniform opening costs.  At rank $t$, the algorithm uses the
positive normalized rank $q_t=t/n$, chooses a candidate minimizing
$d(x,y)+\lambda_t f_y$, where
$\lambda_t=\min\{1,q_t/\mu\}$, and opens it when the current connection
distance covers this penalized objective.  The analysis uses a monotone
one-round charge and an upper-envelope decomposition to control later points
and the first point of each optimal cluster.  With unit opening costs, the
rule reduces exactly to a cutoff on the distance improvement attainable from
a nearest candidate.  A supplementary appendix gives the sharper analysis
of the closely related zero-start rank cutoff and obtains a ratio below
$3.2805$.

We also prove a $3-o(1)$ lower bound for arbitrary randomized online
algorithms. The lower bound already holds with uniform costs on a prescribed
candidate set and transfers, without loss, to the finite full-space model with
nonuniform opening costs. Together with the recent competitive ratio below $2.42$ for full-space
uniform costs, this yields a strict separation between the full-space
uniform- and nonuniform-cost models.
\end{abstract}
\thispagestyle{empty}
\end{titlepage}

\tableofcontents
\clearpage
\hypersetup{pageanchor=true}
\pagenumbering{arabic}

\section{Introduction}

In online metric facility location, demand points arrive one by one.  When a
demand point arrives, an online algorithm may irrevocably open facilities and
must immediately assign the point to an open facility.  The objective is to
minimize the sum of facility-opening costs and assignment distances.  The
problem was introduced by Meyerson~\cite{Meyerson2001} and has become a
canonical example in online clustering and random-order analysis.  Subsequent
work has studied adversarial and random arrival orders
\cite{Fotakis2008,KaplanNaoriRaz2023,HuangJiang2026}, prescribed facility
sites \cite{AnagnostopoulosEtAl2004,LiMiaoWuXu2026}, recourse and deletions
\cite{GuoEtAl2020,CyganEtAl2018}, and algorithms augmented with predictions,
advice, or samples
\cite{AlmanzaEtAl2021,JiangEtAl2022,FotakisEtAl2025,ArgueEtAl2022}.

The sharpest random-order results have focused on the classical full-space
uniform-cost model, in which every metric point can be opened at the same
cost.  Meyerson's natural distance-proportional rule was originally shown to
be $8$-competitive.  Kaplan, Naori, and Raz~\cite{KaplanNaoriRaz2023} proved
that this rule is exactly $4$-competitive, obtained a tight ratio of $3$ for
its best linear scaling, and proved a lower bound of $2$ for every online
algorithm.  Huang and Jiang~\cite{HuangJiang2026} subsequently incorporated
the normalized arrival rank into the opening decision and obtained a
deterministic competitive ratio below $2.42$.

We study the problem in the explicit \emph{candidate-site formulation}.  The
algorithm is given a finite set $\F\subseteq M$, may open facilities only in
$\F$, and each $y\in\F$ has an arbitrary positive opening cost $f_y$.  On a
finite metric, this is equivalent at the level of facility feasibility to the
classical node-cost formulation: assigning cost $+\infty$ outside $\F$
forbids those points, while the finite-cost nodes form the candidate set.
Thus $\F=M$ recovers the finite full-space model, and restricting the costs on
$\F$ to a common value recovers the prescribed-site uniform-cost model.

For nonuniform opening costs, Meyerson~\cite{Meyerson2001} gave a
$33$-competitive random-order algorithm in the node-cost formulation.  By the
equivalence above, that guarantee also covers restricted candidate sets,
although it was not stated using an explicit candidate set.  More recently,
Li, Miao, Wu, and Xu~\cite{LiMiaoWuXu2026} studied the prescribed-site
uniform-cost restriction explicitly and obtained an $8$-competitive
algorithm.  Our formulation simultaneously contains these two settings:
Meyerson's finite nonuniform model is obtained by taking $\F$ to be the
finite-cost nodes, whereas the model of Li et al.\ is obtained by assigning
one common cost to all sites in $\F$.

Our main upper-bound result is a deterministic competitive ratio below
$4.2674$.  This improves Meyerson's factor $33$ for nonuniform opening costs
and, as a special case, the factor $8$ of Li et
al.~\cite{LiMiaoWuXu2026}.  The algorithm uses a rank-dependent
penalized-distance comparison.  At round $t$, it selects a candidate
minimizing its connection distance plus an opening-cost penalty scaled by the
normalized rank $t/n$, and opens the selected candidate when the resulting
penalized distance does not exceed the current connection distance.  The
analysis combines a monotone one-round charge with an upper-envelope
decomposition that controls both later requests in an optimal cluster and the
excess cost of the facility selected by the first request.

The normalized rank $t/n$ requires the horizon $n$ to be known in advance.
This is an additional information assumption relative to Meyerson's original
algorithm and the $8$-competitive algorithm of Li et al., neither of which
uses the eventual number of requests.  Apart from the horizon, the online
information pattern is unchanged: the metric, candidate sites, and opening
costs are known initially, whereas the demand multiset is hidden and revealed
in uniformly random order.  We do not claim the same $4.2674$ guarantee when
the horizon is unknown.

Under unit opening costs, the penalized-distance choice reduces to selecting a
nearest candidate, and its opening condition becomes a cutoff on the
attainable reduction in connection distance.  Appendix
\ref{app:uniform-candidates} studies this specialization using a zero-start
rank convention.  By adapting the rank-cutoff framework of Huang and Jiang
and isolating the effect of prescribed candidate locations, it gives the
supplementary bound below $3.2805$.

Our second main result is a $3-o(1)$ lower bound for arbitrary randomized
online algorithms.  The lower bound already holds in the prescribed-site
model with uniform costs, even when the horizon is revealed to the algorithm.
A replacement reduction transfers it to the finite full-space model with
nonuniform opening costs.  Together with the sub-$2.42$ full-space
uniform-cost upper bound of Huang and Jiang~\cite{HuangJiang2026}, this yields
a strict separation between the optimal competitive ratios of the full-space
uniform- and nonuniform-cost models.

Appendix~\ref{app:one-center} further shows that the value three is tight for
a broad single-facility benchmark.  Consequently, any lower bound strictly
above three must exploit an optimal solution with a genuinely hidden
multi-facility structure.

\paragraph{Results at a glance.}
The table summarizes the finite random-order models most relevant here.
The prescribed-site lower bound of $3$ already holds with uniform opening
costs, and its full-space nonuniform counterpart follows from the
all-finite-cost transfer proved below.  Since the two nonuniform-cost
formulations consequently have the same entries, we combine them in one row.
The $3.2805$ entry is the sharpened unit-cost analysis in
Appendix~\ref{app:uniform-candidates}, whereas the main upper-bound theorem is
the $4.2674$ guarantee for arbitrary positive opening costs.  The upper-bound results established in this paper assume that the algorithm
is given the horizon $n$ in advance.  The lower bounds proved here remain
valid even when this information is available to the algorithm.

\begin{center}
\small
\begin{tabularx}{\textwidth}{@{}lXX@{}}
\toprule
Model & Best upper bound & Lower bound \\
\midrule
Full space, uniform costs
  & $<2.42$~\cite{HuangJiang2026}
  & $2$~\cite{KaplanNaoriRaz2023} \\
Prescribed sites, uniform costs
  & $<3.2805$ (Appendix~\ref{app:uniform-candidates});
    previously $8$~\cite{LiMiaoWuXu2026}
  & $3$ (this paper) \\
\shortstack[l]{Full space or prescribed sites,\\nonuniform costs}
  & $<4.2674$ (this paper);
    previously $33$~\cite{Meyerson2001}
  & $3$ (this paper) \\
\bottomrule
\end{tabularx}
\end{center}

The main algorithmic issue is how one arrival should compare candidates with
very different opening costs.  We address it directly through a Lagrangian
objective: early requests discount opening cost more strongly, and the
algorithm chooses the candidate with minimum penalized distance.  This avoids
separate decisions at multiple cost scales and always opens at most one
facility.  The proof must nevertheless control how the selected candidate
changes with time; the upper concave envelope of its distance improvement as
a function of opening cost supplies exactly this structure.

\subsection{Main results and technical overview}

\paragraph{A deterministic upper bound for arbitrary opening costs.}
We introduce \emph{Penalized-Distance RankCut}, a deterministic algorithm for
arbitrary positive opening costs on a prescribed candidate set.  Suppose that
the current demand point is $x$ and that it appears in position $t$ among the
$n$ requests.  For every candidate facility $y\in\F$, let $f_y$ be its opening
cost and form the penalized distance
\[
 d(x,y)+\lambda_t f_y,
 \qquad
 \lambda_t:=
 \min\left\{1,\frac{t}{\mu n}\right\},
\]
where $\mu\in(0,1]$ is a fixed parameter.  The algorithm selects a candidate
minimizing this quantity, using fixed deterministic tie-breaking.  It opens
the selected candidate if the distance from $x$ to the nearest currently open
facility is at least the selected penalized distance; otherwise it opens no
facility.  Thus each arrival compares all candidates in a single optimization
and opens at most one facility, without discretizing the opening costs or
making separate decisions at different cost scales.

The multiplier $\lambda_t$ makes the algorithm more willing to invest in a
facility early in the permutation, when it may serve many subsequent
requests, and more conservative later.  Optimizing this tradeoff yields a
competitive ratio below $4.2674$.  The algorithm has no internal randomness;
the expectation in the guarantee is taken only over the uniformly random
input order.

\paragraph{Structure of the upper-bound analysis.}
We fix an optimal solution, partition the requests into its facility clusters,
and use the first arrival in each cluster as an anchor.  The principal
difficulty is that the candidate selected by a later request and its current
connection distance depend on the entire intervening history.

The first ingredient is a monotone one-round upper bound.  For a fixed request
and rank, the algorithm's cost is dominated by a two-piece function of the
current connection distance.  This function changes from connection cost to
opening-and-connection cost at the algorithm's cutoff, with a nonnegative jump
at the transition.  Once the anchor has been served, a facility serving it
remains open.  Monotonicity and the triangle inequality then allow the
history-dependent connection distance of each later request to be replaced by
a bound determined by the anchor.

The second ingredient is an upper-envelope decomposition of the candidate
choices.  Each candidate is represented by its opening cost and the connection
improvement it can provide.  The penalized-distance minimization selects a point on the resulting
upper envelope. Lemma~\ref{lem:envelope} decomposes the selected
point into consecutive coordinate increments and shows that an
increment can be included only when its slope is at least the current
penalty parameter. Random-order rank estimates therefore charge its
expected cost to the corresponding connection improvement.  The same
decomposition controls the expensive part of the facility opened at the
anchor, while a separate first-rank certificate accounts for its base cost and
residual connection distance.

Together, these arguments bound the expected online cost of each optimal
cluster by a linear combination of its optimal opening and connection costs.
Balancing the two coefficients through the parameter $\mu$ and summing over
the clusters gives the stated upper bound.

\paragraph{The unit-cost specialization.}
When all opening costs are equal, the penalized term is common to every
candidate.  The algorithm therefore selects a nearest candidate, and its
opening test reduces to a cutoff on the connection-distance improvement
attainable from that candidate.  Appendix~\ref{app:uniform-candidates}
analyzes the closely related zero-start rank convention and, by adapting the
rank-cutoff framework of Huang and Jiang~\cite{HuangJiang2026}, obtains the
sharper supplementary bound below $3.2805$.

The positive rank used in the main theorem and the zero-start rank used in
the appendix are, respectively, the right- and left-endpoint discretizations
of the same continuous time scale.  They define distinct deterministic
clocks and are analyzed separately in their respective settings; see
\cref{rem:continuous-clock}.

\paragraph{The lower bound and full-space separation.}
The lower bound is first established in the prescribed-candidate model with
uniform opening costs.  The construction has a large universe of possible
request locations and a candidate facility associated with every support of a
prescribed size.  The offline optimum can select the facility tailored to the
realized support, whereas an online algorithm must make its opening decisions
before that support is known.

A pointwise accounting identity relates the online cost to the facilities it
opens and the requests covered by them.  A coverage-credit argument shows that
the expected benefit of additional openings cannot offset their cost.
Conditioning on the realized request multiset then produces a fixed
random-order instance on which every randomized online algorithm has
competitive ratio $3-o(1)$.

Finally, we transfer this construction to the classical full-space
nonuniform-cost model.  Every metric point is made feasible and assigned a
finite opening cost, and any opening at a demand location can be replaced
online by an original candidate facility without increasing the total cost.
The factor-three lower bound therefore persists even when no location is
forbidden.  Together with the full-space uniform-cost result of Huang and
Jiang, this gives
\[
 \operatorname{CR}^{\mathrm{full}}_{\mathrm{unif}}
 <2.42<3
 \le
 \operatorname{CR}^{\mathrm{full}}_{\mathrm{nonunif}},
\]
and hence a strict separation between the uniform- and nonuniform-cost
full-space models.

\subsection{Related work}\label{sec:related}

\paragraph{Facility location under adversarial and random order.}
Meyerson~\cite{Meyerson2001} introduced online metric facility location and
studied both adversarial and random arrival orders. For adversarial order,
Fotakis established the asymptotically optimal
$\Theta(\log n/\log\log n)$ competitive ratio~\cite{Fotakis2008} and later
developed a primal--dual algorithm for nonuniform opening
costs~\cite{Fotakis2007}. \mbox{Anagnostopoulos} et
al.~\cite{AnagnostopoulosEtAl2004} gave a simple deterministic
$O(\log n)$-competitive algorithm that applies to several
facility-feasibility models. Fotakis' survey~\cite{FotakisSurvey2011}
provides a broader account of online and incremental facility location.

In the random-order model, Kaplan, Naori, and
Raz~\cite{KaplanNaoriRaz2023} proved that Meyerson's uniform-cost
DistProb rule is exactly $4$-competitive, identified the best scaling in the
$q$-DistProb family with tight ratio $3$, and established the classical lower
bound of $2$ for arbitrary algorithms. Huang and
Jiang~\cite{HuangJiang2026} subsequently moved beyond time-oblivious
DistProb rules and obtained a deterministic ratio below $2.42$ by using the
arrival time in the opening decision.  They also proved their cutoff rule
optimal among TimeDist algorithms by balancing a sparse-star instance against
repeated dense locations.  Appendix~\ref{app:uniform-candidates} applies the
same single-scale architecture to the prescribed-candidate marginal signal
$(d(x,O)-d(x,\F))^+$.  This signal is also exactly what the main
penalized-distance rule becomes at unit costs.  The appendix uses the same
zero-start rank normalization as Huang and Jiang for the direct comparison:
the common cutoff and class-lower-bound steps are parallel to theirs, while the
candidate-specific difference is the residual connection distance after
opening a nearest feasible site.
Lang~\cite{Lang2018} analyzed
Meyerson's rule against a $t$-bounded adversary. For uniform costs on an explicit prescribed set of potential facility
locations, Li, Miao, Wu, and Xu~\cite{LiMiaoWuXu2026} obtained an
$8$-competitive horizon-independent algorithm.  Their feasibility model is
the unit-cost special case of ours, whereas our rank-based upper bound allows
nonuniform costs but assumes that the horizon is known.

\paragraph{The random-order framework and robustness of arrival assumptions.}
Random order is a central beyond-worst-case model for online algorithms; see
the survey of Gupta and Singla~\cite{GuptaSingla2021}. Representative results
show substantial improvements for online covering, packing, matching,
knapsack, and load balancing
\cite{GuptaKehneLevin2022,KesselheimEtAl2018,KaplanNaoriRaz2022,
AlbersKhanLadewig2021,Molinaro2017}. Several works study guarantees that
interpolate between a uniformly random permutation and more adversarial
arrival processes, including nonuniform arrival distributions, robust
secretary models, bursty adversaries, and algorithms that perform well in
both stochastic and adversarial regimes
\cite{KesselheimKleinbergNiazadeh2015,BradacEtAl2020,
KesselheimMolinaro2020,MirrokniEtAl2012}. These works motivate asking which
parts of a random-order proof genuinely use the uniform permutation.  Our
main algorithm is deterministic and uses the positive normalized rank
$q_t=t/n$.  The proof uses only two permutation facts: conditional on the
first cluster rank, each fixed later copy is uniform among the remaining
ranks, and a union bound controls how early the first cluster rank can be.
The one-rank shift from Huang and Jiang's $z_t=(t-1)/n$ is essential with
nonuniform costs by \cref{prop:zero-start-unbounded}.  At unit costs the
obstruction disappears, and Appendix~\ref{app:uniform-candidates} uses the
zero-start clock to stay in their rank-based comparison class.

\paragraph{Streaming, recourse, deletions, and leasing.}
Facility-location ideas have also played an important role in streaming
clustering, where online openings are used to maintain compact summaries of
large data streams~\cite{CharikarEtAl2003,GuhaEtAl2003}. A complementary
line of work relaxes irrevocability or changes the temporal model. Guo,
Kulkarni, Li, and Xian~\cite{GuoEtAl2020} studied online and dynamic facility
location with recourse. Cygan et al.~\cite{CyganEtAl2018} considered facility
location with deletions. Nagarajan and
Williamson~\cite{NagarajanWilliamson2013} studied facility leasing, which
generalizes online facility location by allowing several lease durations.
These models obtain additional flexibility through recourse, departures, or
time-limited openings, whereas our facilities and assignments remain
irrevocable.

\paragraph{Predictions, advice, and samples.}
Another line of work supplements the online input with auxiliary information.
Almanza et al.~\cite{AlmanzaEtAl2021} considered multiple advice sources for
online facility location. Jiang et al.~\cite{JiangEtAl2022} studied
predictions of the optimal serving facility, and Fotakis et
al.~\cite{FotakisEtAl2025} obtained improved prediction-error guarantees.
Argue et al.~\cite{ArgueEtAl2022} gave sample-assisted algorithms that include
facility-location clustering, building on the broader sample-based online
framework~\cite{KaplanNaoriRaz2020}. Learning-augmented online graph
algorithms provide a related general perspective on consistency and
robustness~\cite{AzarPanigrahiTouitou2022}. These information-augmented
models are orthogonal to the present work: our algorithm receives no sample,
prediction, advice, or recourse and exploits only the uniformly random arrival
order.

\subsection{Organization}

The remainder of the paper is organized as follows.
\Cref{sec:model} defines the explicit candidate-site model and records its
precise relation to Meyerson's node-cost formulation.  \Cref{sec:algorithm}
presents the positive normalized-rank clock and Penalized-Distance RankCut.
\Cref{sec:analysis} proves the $4.2674$ upper bound using monotonicity,
upper-envelope decompositions, and a cluster-based analysis.
\Cref{sec:lower} proves the restricted-candidate lower bound, transfers it to
the full-space nonuniform node-cost model, and derives the uniform/nonuniform
separation.  \Cref{sec:discussion} discusses the implications of these results
and directions for improving the upper bound.
Appendix~\ref{app:uniform-candidates} gives the sharper zero-start unit-cost
analysis and its comparison with the full-space TimeDist framework.
Appendix~\ref{app:one-center} explains why lower bounds based on a single
hidden optimal center cannot exceed three.

\section{Problem definition}\label{sec:model}

Let $(M,d)$ be a metric space, and let $\F\subseteq M$ be a finite, nonempty
set of candidate facilities.  Each $y\in\F$ has a positive opening cost
$f_y>0$.  An adversary fixes a multiset $U$ of $n$ demand-point copies in $M$.
For the purposes of permutation ranks, copies located at the same metric point
are treated as distinct labeled objects.  These labels are used only in the
analysis and are not revealed to the algorithm.  The $n$ copies arrive in a
uniformly random order.  The algorithm knows $(M,d)$, $\F$, all opening costs,
and the horizon $n$, but not the multiset $U$.

Let $O_t\subseteq\F$ be the set of facilities open after round $t$.  When demand point $v_t$ arrives, the algorithm may open facilities and then
irrevocably assigns $v_t$ to an open facility.  Its total cost is
\[
 \ALG
 =
 \sum_{y\in O_n}f_y+\sum_{t=1}^n d(v_t,O_t),
 \qquad
 d(x,O):=\min_{y\in O}d(x,y),
\]
where $d(x,\varnothing)=+\infty$.

The offline optimum sees the entire multiset and chooses a set of facilities
afterward:
\[
 \OPT
 =
 \min_{O\subseteq\F}
 \left\{
 \sum_{y\in O}f_y+\sum_{u\in U}d(u,O)
 \right\}.
\]
An online algorithm is $c$-competitive in the random-order model if, for every
fixed multiset $U$,
\[
 \E[\ALG]\le c\,\OPT,
\]
where the expectation is over the uniformly random arrival order and the
algorithm's internal randomness.

We use an explicit candidate set because it cleanly separates facility
feasibility from opening cost.  On a finite metric, this is equivalent to
Meyerson's node-cost formulation: assigning cost $+\infty$ to every point in
$M\setminus\F$ forbids those points, while the finite-cost nodes form the
candidate set.  Thus taking $\F=M$ recovers the classical full-space model,
and taking $\F$ to be the finite-cost nodes recovers a finite nonuniform
node-cost instance.  All costs manipulated by our algorithm are the positive,
finite costs on $\F$.

% \begin{remark}[Relation to the classical node-cost formulation]
% Meyerson~\cite{Meyerson2001} did not state a separate candidate set
% $\F\subseteq M$; instead, every metric node carries an opening cost.  Because
% his nonuniform formulation allows infinite-cost nodes, it implicitly includes
% restricted candidate-site instances.  Our upper bound is therefore an
% improvement for that general node-cost formulation, expressed in the explicit
% candidate-site notation used by our penalized-distance rule.
% \end{remark}
\begin{remark}[Information assumptions]
Our upper-bound algorithm does make an additional information assumption: the
horizon $n$ is known in advance.  This information is used to form the
normalized rank $q_t=t/n$.  Meyerson's original rule and the
$8$-competitive rule of Li et al. do not require this information.  We do not
claim the same $4.2674$ guarantee for an unknown horizon.  The lower bounds,
on the other hand, continue to hold even when the algorithm is given the
horizon.
\end{remark}

\section{The algorithm}\label{sec:algorithm}

Our algorithm uses the normalized arrival rank as a multiplier on opening
cost.  It compares all candidate facilities in one optimization, rather than
rounding their costs or separating them into levels.  The horizon $n$ is
known, and at round $t$ we use the positive deterministic clock
\begin{equation}
 q_t:=\frac{t}{n},
 \qquad
 \lambda_t:=\min\left\{1,\frac{q_t}{\mu}\right\},
 \label{eq:shifted-rank-clock}
\end{equation}
where $\mu\in(0,1]$, to be optimized in the competitive
analysis.\footnote{The truncation is needed for
\cref{lem:monotone}: without it, $\lambda_t>1$ whenever $q_t>\mu$, and the
one-round charge is no longer nondecreasing.}  Thus the algorithm itself has no random coins; only the
input order is random.

\subsection{Random-order rank facts}

\begin{lemma}[First and later ranks]\label{lem:clock}
Fix a set $C$ of $k$ labeled demand-point copies in a uniformly random
permutation of $n$ copies, where $1\le k\le n$.  Let $T$ be the first rank
occupied by $C$, and let $p$ be the copy at that rank.  Then $p$ is uniform
in $C$ and independent of the set of ranks occupied by $C$.  For every
integer $j$ with $1\le j\le n-k+1$,
\begin{equation}
 \Prb(T\ge j)
 =
 \frac{\binom{n-j+1}{k}}{\binom nk}.
 \label{eq:first-rank-tail}
\end{equation}
Conditional on the ordered prefix through rank $T$ and on the identity of
$p$, the rank of each fixed $u\in C\setminus\{p\}$ is uniform on
$\{T+1,\ldots,n\}$.
\end{lemma}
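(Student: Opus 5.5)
The whole lemma comes from one symmetry: a uniformly random permutation of the $n$ labeled copies can be generated in two independent steps. First draw the set $S\subseteq\{1,\ldots,n\}$ of ranks occupied by $C$, uniform among all $\binom nk$ subsets of size $k$. Then, independently, draw a uniformly random bijection from $C$ to $S$ and a uniformly random bijection from the complement $U\setminus C$ to $\{1,\ldots,n\}\setminus S$. The map from a permutation to this triple is a bijection onto the product of the three choice sets. Since every permutation has probability $1/n!$, the product measure is uniform, which gives the independence of the three components.

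\emph{First copy.} We have $T=\min S$, and $p$ is the image of $T$ under the bijection $C\to S$. Because that bijection is uniform and independent of $S$, the preimage of any fixed rank is uniform in $C$ conditional on $S$. Hence $p$ is uniform in $C$ and independent of $S$.

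\emph{Tail formula.} The event $T\ge j$ is exactly the event $S\subseteq\{j,\ldots,n\}$. This set has $n-j+1$ elements, so the event contains $\binom{n-j+1}{k}$ of the $\binom nk$ equally likely subsets. This proves \eqref{eq:first-rank-tail}. The range $j\le n-k+1$ guarantees $n-j+1\ge k$, so the binomial coefficient is nonzero and the formula is meaningful.

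\emph{Later copies.} This is the only step needing care, because the conditioning event mixes information about $S$, the bijection $C\to S$, and the placement of non-$C$ copies. I would condition on a complete ordered prefix $\pi_1,\ldots,\pi_T$ whose last entry $\pi_T=p$ is the first $C$-element. Such a prefix fixes $T$ and $p$. It also fixes which non-$C$ copies occupy ranks $1,\ldots,T-1$, and it forces $S\cap\{1,\ldots,T\}=\{T\}$. The permutations consistent with the prefix are exactly the arbitrary arrangements of the remaining $n-T$ copies, which include the $k-1$ copies of $C\setminus\{p\}$, on ranks $T+1,\ldots,n$. All these arrangements are equally likely, since each full permutation has the same probability. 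Therefore the conditional law of the suffix is uniform over orderings of the remaining copies. The rank of any fixed $u\in C\setminus\{p\}$ is then uniform on $\{T+1,\ldots,n\}$, by symmetry among the suffix positions. Conditioning on a coarser event, such as $(T,p)$ alone, preserves this statement by averaging over prefixes. The only thing to verify is that counting consistent permutations gives the same number $(n-T)!$ for every admissible prefix, and this is immediate.
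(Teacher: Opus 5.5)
Your proposal is correct and follows essentially the same route as the paper's proof: the occupied rank set is a uniform $k$-subset with an independent uniform bijection from $C$ onto it, the tail formula counts subsets of $\{j,\ldots,n\}$, and the unrevealed suffix is uniformly ordered given the exposed prefix. You spell out the product decomposition and the $(n-T)!$ count of consistent permutations, both of which the paper leaves implicit.
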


\begin{proof}
The $k$ ranks occupied by $C$ form a uniformly random $k$-subset of $[n]$.
Equation~\eqref{eq:first-rank-tail} counts the subsets contained in
$\{j,\ldots,n\}$.  Conditional on any occupied rank set, all bijections from
the copies in $C$ to those ranks are equally likely; this proves that $p$ is uniform in $C$ and independent of the occupied rank set.  After the
ordered prefix through $T$ is exposed, the unrevealed copies are still in a
uniformly random order, which gives the conditional distribution of every
fixed later copy.
\end{proof}

\subsection{Penalized-Distance RankCut}

Fix once and for all a deterministic ordering of the finite candidate set
$\F$.  Every minimization below uses this ordering to break ties.  In
particular, the tie-breaking is independent of the current open set and of
the current connection distance.

When $x$ arrives at rank $t$, choose
\begin{equation}
 y_t(x)
 \in
 \arg\min_{y\in\F}
 \left\{d(x,y)+\lambda_t f_y\right\}.
 \label{eq:penalized-choice}
\end{equation}
Let $D=d(x,O)$ be the distance to the facilities already open.  The algorithm
opens $y_t(x)$ if
\begin{equation}
 D
 \ge
 d\bigl(x,y_t(x)\bigr)
 +\lambda_t f_{y_t(x)}.
 \label{eq:penalized-cutoff}
\end{equation}
It then connects $x$ to a nearest open facility.  Thus the choice in
\eqref{eq:penalized-choice} trades connection distance against a
rank-discounted opening cost, while \eqref{eq:penalized-cutoff} opens only
when the current connection distance covers that penalized objective.

\begin{algorithm}[H]
\caption{Penalized-Distance RankCut}\label{alg:main}
\begin{algorithmic}[1]
\State $O_0\gets\varnothing$.
\For{$t=1,\ldots,n$}
  \State Receive demand point $x=v_t$ and set
         $D\gets d(x,O_{t-1})$, $q_t\gets t/n$, and
         $\lambda_t\gets\min\{1,q_t/\mu\}$.
  \State Choose, with the fixed tie-breaking,
  \[
   y\in\arg\min_{z\in\F}
   \{d(x,z)+\lambda_t f_z\}.
  \]
  \State $O_t\gets O_{t-1}$.
  \If{$D\ge d(x,y)+\lambda_t f_y$}
    \State $O_t\gets O_t\cup\{y\}$.
  \EndIf
  \State Connect $x$ to its nearest facility in $O_t$.
\EndFor
\end{algorithmic}
\end{algorithm}

When $O_0=\varnothing$, the value $D=+\infty$ makes the first opening
automatic, so every demand point is served.  Because $f_y>0$ and
$q_t>0$, a candidate already in $O_{t-1}$ cannot satisfy
\eqref{eq:penalized-cutoff}: in that case
$D\le d(x,y)<d(x,y)+\lambda_t f_y$.  Hence every opening cost charged by
the algorithm corresponds to a newly opened facility.

\begin{proposition}[Exact unit-cost specialization]
\label{prop:unit-specialization}
Suppose that $f_y=1$ for every $y\in\F$, and put
\[
 a_x:=d(x,\F),
 \qquad
 \Delta(x;D):=(D-a_x)^+.
\]
Then Algorithm~\ref{alg:main} chooses a nearest candidate to $x$ and opens it
if and only if
\[
 \Delta(x;D)\ge\lambda_t.
\]
Consequently, in the unit-cost prescribed-candidate model,
Algorithm~\ref{alg:main} is exactly the shifted-rank Candidate Marginal
DistCut rule with $q_t=t/n$.
\end{proposition}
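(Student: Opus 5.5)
The argument is a direct substitution into the two defining rules of Algorithm~\ref{alg:main}. I will set $f_y=1$ and treat the selection rule \eqref{eq:penalized-choice} and the opening test \eqref{eq:penalized-cutoff} separately.

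\emph{Selection.} With unit costs the penalized objective becomes $d(x,y)+\lambda_t$. The additive term $\lambda_t$ does not depend on $y$, so the argmin in \eqref{eq:penalized-choice} coincides with $\arg\min_{y\in\F}d(x,y)$. The fixed tie-breaking order on $\F$ is the same in both problems, so $y_t(x)$ is the tie-broken nearest candidate. In particular $d(x,y_t(x))=a_x$, which is attained because $\F$ is finite and nonempty.

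\emph{Opening test.} Condition \eqref{eq:penalized-cutoff} now reads $D\ge a_x+\lambda_t$, that is, $D-a_x\ge\lambda_t$. I need this to be equivalent to $(D-a_x)^+\ge\lambda_t$, and I would split on the sign of $D-a_x$.
\begin{itemize}
\item If $D-a_x\ge 0$, then $(D-a_x)^+=D-a_x$ and the two conditions are literally the same.
\item If $D-a_x<0$, then $(D-a_x)^+=0$. Since $q_t=t/n>0$, we have $\lambda_t>0$, so $(D-a_x)^+\ge\lambda_t$ fails. The condition $D-a_x\ge\lambda_t$ fails as well, because its left side is negative.
\end{itemize}
For the case $D=+\infty$ (when $O_{t-1}=\varnothing$), I adopt the convention $(+\infty)^+=+\infty$, so both conditions hold and the first opening is still automatic. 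This is the only point that needs care: the positive part alone cannot produce a spurious opening exactly because the positive clock keeps $\lambda_t$ strictly positive.

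With both rules matched, Algorithm~\ref{alg:main} at unit costs opens the nearest candidate precisely when the attainable improvement $\Delta(x;D)$ reaches the cutoff $\lambda_t=\min\{1,t/(\mu n)\}$, and otherwise opens nothing. In both cases it then connects $x$ to its nearest open facility. The last sentence of the proposition is therefore definitional: I would write out the Candidate Marginal DistCut rule, which selects a nearest candidate and opens it iff $(d(x,O)-d(x,\F))^+$ is at least the rank cutoff. Instantiated with $q_t=t/n$, it has the same selection, the same opening condition, and the same connection step, so the two algorithms produce identical runs on every input sequence.
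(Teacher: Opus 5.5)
Your proposal is correct and follows the paper's proof: the common term $\lambda_t$ makes the penalized choice a fixed tie-broken nearest candidate, and the cutoff \eqref{eq:penalized-cutoff} becomes $D-a_x\ge\lambda_t$, which you match to the positive-part condition. Your case split relies on $\lambda_t>0$, which holds because $q_t=t/n>0$; this strict positivity is exactly what the equivalence needs, so your argument is slightly sharper than the paper's remark that the right-hand side is nonnegative.
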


\begin{proof}
With unit costs, the additive term $\lambda_t f_y=\lambda_t$ is
independent of $y$.  The fixed tie-breaking in
\eqref{eq:penalized-choice} therefore selects a fixed nearest candidate, and
the minimum penalized objective is $a_x+\lambda_t$.  Thus
\eqref{eq:penalized-cutoff} is equivalent to
$D-a_x\ge\lambda_t$.  Since the right-hand side is nonnegative, this is
equivalent to the displayed marginal cutoff.
\end{proof}

The positive shift in \eqref{eq:shifted-rank-clock} is essential for arbitrary
opening costs.  The zero-start clock used by Huang and Jiang~\cite{HuangJiang2026} in the uniform
full-space model can be unbounded here.

\begin{proposition}[Failure of the zero-start clock for nonuniform costs]
\label{prop:zero-start-unbounded}
Replace $q_t=t/n$ in Algorithm~\ref{alg:main} by
$z_t=(t-1)/n$.  Even for one request and finite positive opening costs, the
competitive ratio of the resulting rule is unbounded.
\end{proposition}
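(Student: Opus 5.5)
With $z_1=0$ we get $\lambda_1=0$, so at the very first arrival the rule ignores opening costs entirely. The plan is a one-request instance that exploits this. Take $n=1$ and a metric containing a single demand point $x$ and two candidates $y_0,y_1$. Put $y_0$ at $x$ itself, so $d(x,y_0)=0$, and give it opening cost $f_{y_0}=L$ for a large parameter $L$. Put $y_1$ at distance $1$ from $x$, with cost $f_{y_1}=1$. Concretely, the metric can be the three points $x=y_0$ and $y_1$ with $d(x,y_1)=1$. The only requirement on the candidate set is that it be finite and nonempty with positive finite costs, and this instance meets it.

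Now run the zero-start rule. At $t=1$ we have $\lambda_1=\min\{1,z_1/\mu\}=0$, so the penalized choice \eqref{eq:penalized-choice} minimizes $d(x,y)$ alone. It selects $y_0$, whose penalized value is $0$. This minimizer is unique, since $d(x,y_1)=1>0$, so tie-breaking plays no role. Because $O_0=\varnothing$, we have $D=+\infty$, and the cutoff \eqref{eq:penalized-cutoff} holds. The rule therefore opens $y_0$ and pays $L+0=L$. With one request there is no randomness in the order, so $\E[\ALG]=L$.

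On the other hand, $\OPT\le f_{y_1}+d(x,y_1)=2$. The ratio is therefore at least $L/2\to\infty$, which shows the competitive ratio is unbounded.

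The only point to handle with care is that the ratio must be unbounded over instances with finite costs, rather than infinite on a single instance. Letting $L$ range over all positive reals takes care of this. It is also worth noting, for contrast, why the positive clock avoids the failure: under $q_1=1/n$ the same instance gives $\lambda_1=\min\{1,1/\mu\}=1$ when $n=1$, and then $y_1$ is chosen once $L>2$.
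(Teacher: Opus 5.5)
Your proof is correct and is essentially the paper's argument: one request placed on a candidate of cost $M$ (your $L$), a unit-cost candidate at distance one, $\lambda_1=0$ forcing the expensive choice, and $\OPT\le 2$ giving ratio at least $M/2$. One trivial slip: since $x=y_0$, your metric has two points, not three.
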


\begin{proof}
Let $n=1$ and let the unique request be the candidate point $x=y_M$.  The
metric has the two points $y_M$ and $c$ at distance one, and their costs are
\[
 f_{y_M}=M,
 \qquad
 f_c=1.
\]
All opening costs are positive and finite.  At the first
rank, $z_1=0$, so the penalized objective ignores opening costs and uniquely
selects $y_M$.  Since the initial connection distance is infinite, the rule
opens $y_M$ and pays $M$.  For $M>2$, the optimum opens $c$ and pays $2$.
The ratio is at least $M/2$, which is unbounded as $M\to\infty$.
\end{proof}

In the unit-cost model this obstruction disappears because the opening-cost
penalty is common to all candidates.  Appendix~\ref{app:uniform-candidates}
therefore uses Huang and Jiang's zero-start normalization for the sharper
single-scale analysis.

\begin{remark}[One continuous clock, two deterministic discretizations]
\label{rem:continuous-clock}
Partition the continuous time interval $[0,1]$ into $n$ equal subintervals.  At rank $t$, the corresponding subinterval has endpoints
\[
  z_t=\frac{t-1}{n}
  \qquad\text{and}\qquad
  q_t=\frac{t}{n}.
\]
Thus $z_t$ and $q_t$ are, respectively, the left- and right-endpoint
discretizations of the same continuous clock.  Although they differ only in
the choice of endpoint, they define distinct deterministic clocks:
$z_1=0$, whereas $q_1=1/n>0$.  We use the right-endpoint convention in the
main nonuniform-cost algorithm and the left-endpoint convention in the
unit-cost analysis of Appendix~\ref{app:uniform-candidates}.  The guarantees
for the two conventions are established by their respective discrete
analyses.
\end{remark}

\begin{proposition}[Polynomial-time implementation]\label{prop:runtime}
Assume that $d(x,y)$ can be evaluated for each arriving $x$ and
$y\in\F$.  Algorithm~\ref{alg:main} uses $O(|\F|)$ distance evaluations and
arithmetic operations per demand point, and is therefore polynomial-time in
$n$ and $|\F|$.
\end{proposition}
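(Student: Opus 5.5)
The plan is to count the work in one round of Algorithm~\ref{alg:main} directly, under the model's standing assumption that each distance $d(x,y)$ with $y\in\F$ is available at unit cost. Since the facilities open after round $t-1$ form a subset $O_{t-1}\subseteq\F$, one loop can handle both the penalized minimization \eqref{eq:penalized-choice} and the current connection distance $D=d(x,O_{t-1})$.

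First, the algorithm keeps a Boolean membership flag for each candidate, marking whether it is in $O_{t-1}$. It also keeps the precomputed value $\lambda_t=\min\{1,t/(\mu n)\}$, which takes $O(1)$ arithmetic from $t$, $n$ and $\mu$.

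Then, on the arrival of $x=v_t$, we scan $\F$ once in the fixed tie-breaking order. For each $z\in\F$ we evaluate $d(x,z)$ once and form $d(x,z)+\lambda_t f_z$. We update the running minimizer with a strict comparison, so that the first minimizer in the fixed order is kept and the tie-breaking rule is matched exactly. If $z$ is flagged open, we also update the running minimum of $d(x,z)$. Initializing that running minimum at $+\infty$ yields $D=d(x,O_{t-1})$, including the convention $d(x,\varnothing)=+\infty$. This scan costs $|\F|$ distance evaluations and $O(|\F|)$ comparisons and additions.

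Next, the opening test \eqref{eq:penalized-cutoff} is a single comparison, and setting the flag of $y$ is $O(1)$. The final connection step needs $d(x,O_t)$. This equals $\min\{D,\,d(x,y)\}$ if $y$ was opened and equals $D$ otherwise, so it reuses values already computed. Keeping the argmin of the open distances during the scan also identifies the assigned facility. The per-round total is therefore $O(|\F|)$ distance evaluations and operations. Summing over $n$ rounds gives $O(n|\F|)$ operations overall, plus $O(|\F|)$ initialization of the flags.

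The only point needing care is that computing $D$ does not require a separate pass or a data structure over the open set. That is why I fold it into the same scan through membership flags. Everything else is direct bookkeeping, so I do not expect a real obstacle.
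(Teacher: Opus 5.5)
Your proposal is correct and follows essentially the paper's argument: a single pass over $\mathcal F$ in the fixed order finds the penalized minimizer, and a pass over the open set (a subset of $\mathcal F$) gives $D$, for $O(|\mathcal F|)$ work per request. Merging the two passes with membership flags, and noting that $d(x,O_t)$ reuses values already computed, is only extra bookkeeping on the same idea.
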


\begin{proof}
One scan of the finite candidate set finds the minimizer in
\eqref{eq:penalized-choice}; a scan of the open subset computes the current
connection distance.  The fixed candidate ordering resolves all ties without
additional state.
\end{proof}

\section{Competitive analysis}\label{sec:analysis}

We analyze the algorithm on the clusters of a fixed optimal solution.  The
proof uses two deterministic tools.  A one-round upper charge is monotone in
the current connection distance, so the history seen by a later cluster point
can be replaced by a distance through the first cluster point. An upper-envelope decomposition then charges the extra cost of the
selected candidate one coordinate increment at a time.

\subsection{One-round monotonicity and a linear-penalty envelope}

Fix $x$ and rank $t$, and abbreviate
\[
 y=y_t(x),
 \qquad
 r=d(x,y),
 \qquad
 f=f_y,
 \qquad
 m=r+\lambda_t f.
\]
For a hypothetical current connection distance
$D\in\R_{\ge0}\cup\{+\infty\}$, define
\begin{equation}
 W_x(D,t)
 :=
 \begin{cases}
  D, & D<m,\\[1mm]
  r+f, & D\ge m.
 \end{cases}
 \label{eq:one-round-charge}
\end{equation}

\begin{lemma}[One-round monotonicity]\label{lem:monotone}
For fixed $x$ and $t$, the function $D\mapsto W_x(D,t)$ is
nondecreasing.  If $D$ is the algorithm's actual pre-service connection
distance, its opening plus connection cost on $x$ is at most
$W_x(D,t)$.
\end{lemma}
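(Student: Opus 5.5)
Both claims follow directly from the definition \eqref{eq:one-round-charge} and the opening rule \eqref{eq:penalized-cutoff}. The key observation is that for fixed $x$ and $t$, the selected candidate $y=y_t(x)$ and the quantities $r,f,m$ do not depend on $D$. The choice \eqref{eq:penalized-choice} uses only $x$, $\lambda_t$, and the fixed tie-breaking order, never the current open set. Hence $W_x(\cdot,t)$ is a fixed two-piece function of $D$ with a single breakpoint at $m$.

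\emph{Monotonicity.} On $\{D<m\}$ the function equals $D$, which is increasing. On $\{D\ge m\}$ it is the constant $r+f$. It therefore suffices to check that the jump at $m$ is nonnegative, i.e., $\sup_{D<m}D=m\le r+f$. This reduces to $\lambda_t f\le f$, which holds because $\lambda_t=\min\{1,q_t/\mu\}\le1$ and $f>0$. This is precisely where the truncation of $\lambda_t$ at $1$ is used, as the footnote anticipates. The case $D=+\infty$ falls in the second branch, since $m<\infty$, so it is consistent with the ordering of the extended reals.

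\emph{Domination of the actual cost.} Let $D=d(x,O_{t-1})$ be the actual pre-service connection distance, and split on the test \eqref{eq:penalized-cutoff}.

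If $D<m$, the algorithm opens nothing. It pays no opening cost at this round and connects at distance $d(x,O_t)=d(x,O_{t-1})=D=W_x(D,t)$.

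If $D\ge m$, the algorithm opens $y$, paying $f$. (The remark after Algorithm~\ref{alg:main} shows $y\notin O_{t-1}$, so this is a genuine new opening charged once.) It then connects to the nearest facility in $O_t\ni y$, at distance at most $r$. The round cost is therefore at most $f+r=W_x(D,t)$.

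No step is a real obstacle. The only points needing care are that $y$ is independent of $D$, so the function is well defined, and that $\lambda_t\le1$ for the jump.
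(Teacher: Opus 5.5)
Your proposal is correct and follows the paper's proof essentially step for step. You use the case split $D<m$ versus $D\ge m$ for domination, and the nonnegative jump $(r+f)-m=(1-\lambda_t)f\ge0$, which relies on the truncation $\lambda_t\le1$, for monotonicity; your remark that $y$, and hence $r,f,m$, do not depend on $D$ because tie-breaking is fixed is a useful clarification the paper leaves implicit.
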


\begin{proof}
If $D<m$, the algorithm opens no facility and pays connection cost $D$.  If
$D\ge m$, it opens $y$ and pays at most $f+r$.  At the only crossing point,
\[
 (r+f)-m=(1-\lambda_t)f\ge0.
\]
The charge has slope one before the crossing and is constant afterward, so it
is nondecreasing and dominates the actual one-round cost.
\end{proof}

We next formalize the envelope fact used twice below.  The explicit treatment
of ties is useful because the algorithm itself uses deterministic
tie-breaking.

\begin{lemma}[Linear-penalty upper envelope]\label{lem:envelope}
Let $\mathcal P\subseteq\R_{\ge0}^2$ be finite, contain $(0,0)$, and have
positive first coordinate at every nonorigin point. Let
\[
 \mathcal Q:=\operatorname{conv}(\mathcal P).
\]
For each $z>0$, let 
\[
F_z
 :=
 \operatorname*{arg\,max}_{(a,b)\in\mathcal Q}\{b-za\},
 \]
and let $\Gamma:=\bigcup_{z>0}F_z$.
List, in increasing order of first coordinate, all points of
$\mathcal P\cap\Gamma$ as
\[
 v_0=(0,0),v_1,\ldots,v_m.
\]
Write
\[
 v_i=(a_i,b_i),
 \qquad
 w_i:=a_i-a_{i-1},
 \qquad
 g_i:=b_i-b_{i-1}.
\]
Then $w_i>0$, so the slopes
\[
 x_i:=\frac{g_i}{w_i}
\]
are well defined, and
\[
 g_i\ge0,
 \qquad
 x_1\ge x_2\ge\cdots\ge x_m.
\]

For any $z>0$, select a point of $\mathcal P$ maximizing $b-za$, using a
fixed tie-breaking rule.  If the selected point is $v_\ell$,
then
\[
 x_i\ge z\quad(i\le\ell),
 \qquad
 x_i\le z\quad(i>\ell).
\]
In particular, $v_\ell=\sum_{i=1}^{\ell}(w_i,g_i)$.
\end{lemma}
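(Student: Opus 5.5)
We view $\Gamma$ as the upper-left boundary of $\mathcal Q$ that is exposed by linear functionals $b-za$ with $z>0$. The proof then reduces to standard facts about concave piecewise-linear functions.

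\textbf{Step 1: the points of $\Gamma$ have distinct first coordinates.} Let $h(a):=\max\{b:(a,b)\in\mathcal Q\}$ for $a\in[0,A]$, where $A$ is the largest first coordinate in $\mathcal P$. This function $h$ is concave and piecewise linear, and its breakpoints are points of $\mathcal P$. Any maximizer of $b-za$ over $\mathcal Q$ lies on the graph of $h$, since otherwise we could raise $b$. Hence $\Gamma$ is contained in the graph of $h$, and two distinct points of $\mathcal P\cap\Gamma$ have distinct first coordinates. In particular $w_i>0$. The origin belongs to $\mathcal P\cap\Gamma$: every nonorigin point has $a>0$, so for $z$ large the origin is the unique maximizer. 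Therefore $v_0=(0,0)$ has the smallest first coordinate.

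\textbf{Step 2: monotonicity of $b$ along $\Gamma$.} If $(a,b)\in F_z$ and $a'<a$ with $(a',b')\in\Gamma$, then $b'\le b$. Otherwise the segment direction shows that $(a,b)$ could not maximize a functional with a positive $a$-penalty, because along $\Gamma$ the function $h$ is nondecreasing. More precisely, $F_z$ maximizes $h(a)-za$, so every point of $\Gamma$ lies where the right-derivative of $h$ is at most $z$ and the left-derivative is at least $z>0$. Hence $h$ is increasing up to and including the last point of $\Gamma$, which gives $g_i\ge0$.

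\textbf{Step 3: slopes decrease.} The points $v_0,\dots,v_m$ lie on the graph of the concave function $h$, with increasing first coordinates. Chord slopes of a concave function between consecutive sample points are nonincreasing, so $x_1\ge\cdots\ge x_m$.

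\textbf{Step 4: the selection statement.} Fix $z>0$. The maximum of the linear functional $b-za$ over $\mathcal Q$ is attained at an extreme point, so $\max_{\mathcal P}=\max_{\mathcal Q}$. Any maximizer in $\mathcal P$ therefore lies in $F_z\subseteq\Gamma$, which makes it some $v_\ell$; this is exactly the hypothesis of the statement. Optimality of $v_\ell$ against $v_{\ell-1}$ gives
\[
 b_\ell-za_\ell\ge b_{\ell-1}-za_{\ell-1},
\]
that is, $x_\ell\ge z$. By Step 3, $x_i\ge z$ for all $i\le\ell$. Comparing $v_\ell$ with $v_{\ell+1}$ gives $x_{\ell+1}\le z$, and Step 3 then gives $x_i\le z$ for all $i>\ell$. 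The tie-breaking rule is irrelevant, because only the optimality inequalities are used. The final identity $v_\ell=\sum_{i\le\ell}(w_i,g_i)$ is a telescoping sum starting from $v_0=(0,0)$.

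\textbf{Main obstacle.} The delicate part is Steps 1 and 2: showing rigorously that $\Gamma$ is a graph over $a$ with nondecreasing $b$. This must handle the possibility that some $F_z$ is a whole edge, so that non-vertex points of $\mathcal P$ lie in $\Gamma$. It must also handle that $\Gamma$ might include a flat or descending portion. I would handle these through the subgradient characterization of $h$. A point $a$ lies in the projection of $\Gamma$ exactly when $\partial h(a)$ contains some $z>0$. This set is an interval starting at $0$ on which $h$ is strictly increasing, except possibly at a right endpoint where a zero-slope piece begins but is excluded.
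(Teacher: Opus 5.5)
Your proof is correct, but for the structural part it takes a different route from the paper.

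\textbf{Your route.} You pass to the concave upper-boundary function $h$ of $\mathcal Q$ and read everything off standard concave-analysis facts. Points of $\Gamma$ lie on the graph of $h$, so they have distinct abscissae. A point with abscissa $a>0$ is exposed by some $z>0$ only if $h'_-(a)\ge z>0$, so $h$ is strictly increasing up to $a_m$; this even gives $g_i>0$. The three-chord inequality for concave functions then gives $x_1\ge\cdots\ge x_m$.

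\textbf{The paper's route.} The paper never introduces $h$ or one-sided derivatives. It proves each property by an elementary dominance argument on the finite list. Two points with equal first coordinate, or a later point with no larger second coordinate, are strictly beaten by the other point for every $z>0$. A point strictly below the chord of its neighbours is strictly beaten by a convex combination of them, via the identity with $\alpha=w_{i+1}/(w_i+w_{i+1})$.

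\textbf{Selection step.} Here the two arguments are essentially the same. Yours is slightly more direct: you compare $v_\ell$ only with $v_{\ell\pm1}$ and then invoke monotone slopes. You also state explicitly why a $\mathcal P$-maximizer lies in $\mathcal P\cap\Gamma$, which the paper folds into its opening remark that maximizing over $\mathcal P$ and over $\mathcal Q$ coincide.

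\textbf{What each buys.} Your approach gives the cleaner geometric picture: the $v_i$ are exactly the points of $\mathcal P$ on the strictly increasing part of the upper concave envelope, with collinear non-vertex points of $\mathcal P$ handled automatically. The cost is reliance on cited facts about superdifferentials and chord slopes, which the paper's self-contained finite argument avoids.

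\textbf{A small fix.} The first sentence of your Step~2 is circular, since it presupposes that $h$ is nondecreasing along $\Gamma$. Drop it and keep only the superdifferential argument that follows, which is what actually proves $g_i\ge0$.
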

\begin{proof}
Maximizing $b-za$ over $\mathcal P$ is equivalent to maximizing it
over $\mathcal Q=\operatorname{conv}(\mathcal P)$, since every point
of $\mathcal Q$ is a convex combination of points in $\mathcal P$.

We first claim that $(0,0)\in\Gamma$.  Indeed, if
$\mathcal P\ne\{(0,0)\}$, choose $z>
 \max_{(a,b)\in\mathcal P\setminus\{(0,0)\}}\frac{b}{a}$. Then every nonorigin point has $b-za<0$, while the origin has value
zero.

The first coordinates of $v_0,\ldots,v_m$ are strictly increasing.
Otherwise, among two distinct points with the same first coordinate,
the one with smaller second coordinate could not maximize $b-za$ for
any $z>0$.  Hence $w_i>0$.

The second coordinates are nondecreasing.  Indeed, if
$a<a'$ and $b\ge b'$, then for every $z>0$,
\[
 b-za>b'-za',
\]
so $(a',b')$ cannot belong to $\Gamma$.  Therefore $g_i\ge0$.

We next show that the slopes are nonincreasing.  Suppose
$x_i<x_{i+1}$ for some $i$.  Let
\[
 \alpha:=\frac{w_{i+1}}{w_i+w_{i+1}}.
\]
Then
\[
 a_i=\alpha a_{i-1}+(1-\alpha)a_{i+1},
\]
and
\[
 \alpha b_{i-1}+(1-\alpha)b_{i+1}-b_i
 =
 \frac{w_iw_{i+1}}{w_i+w_{i+1}}
 (x_{i+1}-x_i)
 >0.
\]
Thus, for every $z>0$,
\[
 b_i-za_i
 <
 \alpha(b_{i-1}-za_{i-1})
 +(1-\alpha)(b_{i+1}-za_{i+1}),
\]
so $v_i$ cannot maximize $b-za$, contradicting $v_i\in\Gamma$.
Hence
\[
 x_1\ge x_2\ge\cdots\ge x_m.
\]

Finally, fix $z>0$ and suppose that the selected maximizer is
$v_\ell$.  Moving from $v_{i-1}$ to $v_i$ changes the objective by
\[
 (b_i-za_i)-(b_{i-1}-za_{i-1})
 =
 w_i(x_i-z).
\]
If $x_i<z$ for some $i\le\ell$, then all slopes
$x_i,\ldots,x_\ell$ are smaller than $z$, so the objective strictly
decreases from $v_{i-1}$ to $v_\ell$, contradicting optimality.
Thus $x_i\ge z$ for every $i\le\ell$.

Similarly, if $x_i>z$ for some $i>\ell$, then all slopes
$x_{\ell+1},\ldots,x_i$ are larger than $z$, so the objective strictly
increases from $v_\ell$ to $v_i$, again a contradiction.  Hence
$x_i\le z$ for every $i>\ell$.  Both conclusions are non-strict and therefore hold for every maximizing
retained point, independently of how ties are resolved. Finally,
\[
 v_\ell
 =
 v_0+\sum_{i=1}^{\ell}(v_i-v_{i-1})
 =
 \sum_{i=1}^{\ell}(w_i,g_i).
\]
\end{proof}

In the two applications below, the map from candidate facilities to
points of $\mathcal P$ need not be injective.  We interpret
$\mathcal P$ as the set of distinct coordinate pairs: if several
candidates induce the same pair $(a,b)$, they correspond to a single
point of $\mathcal P$.  The algorithm's fixed candidate ordering still
selects one of the underlying candidate identities, but the selected
coordinate pair is the same.  Since the envelope decomposition and all
subsequent bounds depend only on the coordinates $(a,b)$, this choice
does not affect the argument.

We also add the auxiliary point $(0,0)$ to
the points associated with the candidates under consideration.  This
auxiliary point does not represent a facility and always has objective
value zero.  Ties among candidate points are resolved by the
algorithm's fixed ordering, and a candidate point is preferred to
$(0,0)$ when their objective values are equal.  If all candidate
points under consideration have negative objective value, $(0,0)$ is
selected.  This convention is used in the proofs of
\cref{lem:suffix-one,lem:high}: in the former, selecting $(0,0)$
represents zero excess over the auxiliary value $B$, while in the
latter it represents $H_p=0$.

\subsection{A cluster of the optimal solution}\label{sec:cluster-analysis}

Fix an optimal offline solution. Remove every opened facility to which no demand copy is assigned.  Since all
opening costs are positive, this does not increase the optimum.  Hence every
cluster considered below is nonempty.  Consider a facility $c\in\F$ opened by
this solution, and let $C$ be the multiset of demand-point copies assigned to
$c$.  Write
\[
 k:=|C| \ge 1,
 \qquad
 f:=f_c,
 \qquad
 r_u:=d(u,c),
 \qquad
 R(C):=\sum_{u\in C}r_u.
\]
The optimal cost associated with this cluster is $f+R(C)$.

Let $p$ be the first copy in $C$ and let $T$ be its rank.  Let $D_p$ be the
connection distance paid after $p$ is served.  By \cref{lem:clock}, $p$ is
uniform over the $k$ copies, and therefore
\[
 \E[r_p]=\frac{R(C)}k.
\]

For a demand copy $u$, let $\ALG(u)$ denote its connection cost together
with the opening cost, if any, of the facility whose opening is triggered
when $u$ is served.  For an optimal cluster $C$, define
\[
 \ALG(C):=\sum_{u\in C}\ALG(u).
\]
Thus every online opening cost is assigned to the optimal cluster containing
the demand that triggers the opening.  Every opening cost and every
connection cost is counted exactly once, and hence
\[
 \ALG=\sum_C\ALG(C).
\]

\subsection{Later demand points}

Let $\mathcal G_p$ denote all information revealed by the time $p$ has been
served: the identity and rank $T$ of $p$, the ordered request prefix through
rank $T$, and the algorithm's open set and decisions up to that time, under
the fixed tie-breaking rule.  Formally, $\mathcal G_p$ is the sigma-field
generated by these random objects.  It exposes
no position of a later copy in $C$.  For every fixed
$u\in C\setminus\{p\}$, conditional on $\mathcal G_p$, its rank $J$ is uniform
on $\{T+1,\ldots,n\}$.  Indeed, before $p$ no other point of $C$ has arrived,
and the algorithm's deterministic decisions on the exposed prefix reveal no
additional information about the remaining permutation.

\begin{lemma}[Cost after the first cluster arrival]\label{lem:suffix-one}
Conditioned on $\mathcal G_p$, for every fixed
$u\in C\setminus\{p\}$,
\[
 \E[\ALG(u)\mid\mathcal G_p]
 \le
 (1+\mu)\bigl(D_p+d(p,u)\bigr).
\]
\end{lemma}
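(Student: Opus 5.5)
The plan is to replace the history-dependent connection distance of $u$ by a bound determined by the anchor $p$, and then use the envelope decomposition to pay for any facility that $u$ opens.

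\textbf{Step 1 (reduce to a fixed distance).} Set $L:=D_p+d(p,u)$, which is $\mathcal G_p$-measurable. After $p$ is served, the facility serving $p$ lies at distance $D_p$ from $p$ and stays open. By the triangle inequality, when $u$ arrives at rank $J>T$ its pre-service distance satisfies $D\le L$. By \cref{lem:monotone},
\[
\ALG(u)\le W_u(D,J)\le W_u(L,J).
\]
It therefore suffices to prove $\E[W_u(L,J)\mid\mathcal G_p]\le(1+\mu)L$, where by the discussion before the lemma $J$ is uniform on $\{T+1,\ldots,n\}$.

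\textbf{Step 2 (envelope encoding with baseline $B=L$).} For each candidate $y$ with $d(u,y)<L$, form the point $(a,b)=(f_y,\,L-d(u,y))\in\R_{>0}\times\R_{>0}$, and add the auxiliary origin. The value $b-\lambda_J a=L-(d(u,y)+\lambda_J f_y)$ is exactly the slack in the opening test \eqref{eq:penalized-cutoff} evaluated at $D=L$. Candidates with $d(u,y)\ge L$ can never make $W_u(L,J)$ exceed $L$. The selected penalized minimizer $y_J(u)$ then corresponds to the maximizer of $b-\lambda_J a$ under the stated convention.

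If the origin is selected, the slack is negative, so $W_u(L,J)=L$. If $v_\ell$ is selected, then $W_u(L,J)=r+f=L-b_\ell+a_\ell$. By \cref{lem:envelope} with $z=\lambda_J>0$, this equals $L+\sum_{i\le\ell}(w_i-g_i)$, and every included increment has $x_i\ge\lambda_J$. The lemma's conclusions hold for every maximizing retained point, so the deterministic tie-breaking causes no trouble. Hence, pointwise,
\[
W_u(L,J)\le L+\sum_{i=1}^m (w_i-g_i)\,\1[\lambda_J\le x_i].
\]

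\textbf{Step 3 (random-order estimate).} Increments with $x_i\ge1$ have $w_i-g_i\le0$, so we drop them. For $x_i<1$ we have $\lambda_J\le x_i$ if and only if $J\le \mu x_i n$. With $J$ uniform on $\{T+1,\ldots,n\}$ and $A:=\lfloor\mu x_i n\rfloor\le n$,
\[
\Prb(\lambda_J\le x_i\mid\mathcal G_p)=\frac{(A-T)^+}{n-T}\le\frac An\le\mu x_i .
\]
The middle inequality uses $(A-T)/(n-T)\le A/n$ for $A\le n$. Dropping $-g_i\le0$, the expected excess is at most
\[
\sum_i w_i\,\mu x_i=\mu\sum_i g_i=\mu\, b_m\le\mu L,
\]
since every point has $b\le L$. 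This gives $(1+\mu)L$.

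The main obstacle is Step 2: correctly matching the algorithm's selection, including ties and the origin convention, to the envelope maximizer. One must also check that restricting to candidates with $d(u,y)<L$ loses nothing. If the minimizer has $d(u,y)\ge L$, the slack is negative and no opening occurs at distance $L$, so $W_u(L,J)=L$. Otherwise the minimizer among all candidates is also the minimizer among the retained ones. The rank estimate in Step 3 is routine once that is settled.
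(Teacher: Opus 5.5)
Your proof is correct and follows essentially the paper's argument: the monotone reduction to $W_u(B,J)$ with $B=D_p+d(p,u)$, then the envelope decomposition of the improvement points relative to $B$. Your restriction to candidates with $d(u,y)<B$ yields the same envelope as the paper's $(B-d(u,y))^+$, and the per-increment rank estimate bounds each term by $\mu g_i$, with $\sum_i g_i\le B$. One step deserves an explicit line: your signed pointwise bound $W_u(L,J)\le L+\sum_i(w_i-g_i)\1[\lambda_J\le x_i]$ also includes indices $i>\ell$ with $x_i=\lambda_J$, and when the origin is selected these are the only terms. They are harmless because $w_i-g_i=w_i(1-\lambda_J)\ge0$, whereas the paper sidesteps the issue by writing $w_i(1-x_i)^+$.
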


\begin{proof}
Set
\[
 B:=D_p+d(p,u).
\]
Choose a facility $y_p$ that serves $p$ after its opening decision, so
$d(p,y_p)=D_p$.  Facilities never close, and therefore the actual distance
$D_u^-$ immediately before $u$ is served satisfies
\[
 D_u^-
 \le d(u,y_p)
 \le d(u,p)+d(p,y_p)
 =B.
\]
By \cref{lem:monotone},
\begin{equation}
 \ALG(u)
 \le W_u(D_u^-,J)
 \le W_u(B,J).
 \label{eq:later-monotone}
\end{equation}
The value $W_u(B,J)$ is only an auxiliary charge; no open set with
connection distance exactly $B$ is required.

For each $y\in\F$, define the improvement relative to $B$ by
\[
 P_y:=(B-d(u,y))^+.
\]
If $W_u(B,J)\le B$, then
\[
 \bigl(W_u(B,J)-B\bigr)^+=0.
\]
Otherwise, let $y$ be the candidate selected by
\eqref{eq:penalized-choice}.  Since $W_u(B,J)>B$, the opening branch in
\eqref{eq:one-round-charge} applies, and hence
\[
 B\ge d(u,y)+\lambda_J f_y.
\]
It follows that
\[
 P_y-\lambda_J f_y
 =
 B-d(u,y)-\lambda_J f_y
 \ge0.
\]

Moreover, $y$ maximizes $P_{y'}-\lambda_J f_{y'}$ over all
$y'\in\F$, together with the auxiliary point $(0,0)$.  Indeed, if
$d(u,y')\le B$, then
\[
 P_{y'}-\lambda_J f_{y'}
 =
 B-\bigl(d(u,y')+\lambda_J f_{y'}\bigr),
\]
whereas if $d(u,y')>B$, then
\[
 P_{y'}-\lambda_J f_{y'}
 =
 -\lambda_J f_{y'}<0.
\]
The selected candidate has a nonnegative objective value, so neither
the latter candidates nor the auxiliary point can strictly improve
upon it.

Apply \cref{lem:envelope} to
\[
 \{(f_y,P_y):y\in\F\}\cup\{(0,0)\}.
\]
Let $v_0,\ldots,v_m$ and the increments $(w_i,g_i)$ be as in the
lemma.  If the selected point is
\[
 v_\ell=(f_y,P_y),
\]
then
\[
 (f_y,P_y)
 =
 \sum_{i=1}^{\ell}(w_i,g_i),
\]
and
\[
 x_i\ge\lambda_J
 \qquad(i\le\ell).
\]
Consequently,
\[
 f_y-P_y
 =
 \sum_{i=1}^{\ell}(w_i-g_i)
 =
 \sum_{i=1}^{\ell}w_i(1-x_i).
\]
Combining this identity with the case $W_u(B,J) \le B$ gives
\begin{align}
 \bigl(W_u(B,J)-B\bigr)^+
 &\le
 \sum_i
 w_i(1-x_i)^+
 \1\{\lambda_J\le x_i\}.
 \label{eq:later-increment-bound}
\end{align}

Only indices with $x_i<1$ can give a nonzero term.  For such an
index,
\[
 \lambda_J\le x_i
 \quad\Longleftrightarrow\quad
 \frac{J}{n}\le\mu x_i.
\]
Conditional on $\mathcal G_p$, at most
\[
 (n\mu x_i-T)^+
\]
ranks in $\{T+1,\ldots,n\}$ satisfy this inequality.  Hence the
expected value of the $i$th term in
\eqref{eq:later-increment-bound} is at most
\[
 w_i(1-x_i)
 \frac{(n\mu x_i-T)^+}{n-T}
 =
 w_i(1-x_i)
 \frac{(\mu x_i-q_T)^+}{1-q_T}.
\]
If this expression is nonzero, then
$q_T<\mu x_i\le x_i$, where the second inequality uses $\mu\le1$.
Thus
\[
 \frac{1-x_i}{1-q_T}\le1,
\]
and the expected value of this term is at most
\[
 w_i(\mu x_i-q_T)^+
 \le
 \mu w_ix_i
 =
 \mu g_i.
\]

Finally, since
\[
 v_m=\sum_i(w_i,g_i),
\]
the quantity $\sum_i g_i$ is the second coordinate of $v_m$.
Every point in the set to which \cref{lem:envelope} is applied has
second coordinate at most $\max_y P_y$, and therefore
\[
 \sum_i g_i
 \le
 \max_y P_y
 \le
 B.
\]
It follows that
\[
 \E[W_u(B,J)\mid\mathcal G_p]
 \le
 B+\mu\sum_i g_i
 \le
 (1+\mu)B.
\]
Together with \eqref{eq:later-monotone}, this proves the lemma.
\end{proof}

No independence among the costs of different later points is used.  The
cluster proof below invokes \cref{lem:suffix-one} only through conditional
linearity and the tower property.

\subsection{The first demand point}

Define
\[
 a_k:=1+(k-1)(1+\mu)=k+(k-1)\mu.
\]
Let $y=y_T(p)$ be the candidate chosen by
\eqref{eq:penalized-choice} when $p$ arrives.  Since the optimal facility $c$
is a candidate,
\begin{equation}
 d(p,y)+\lambda_T f_y
 \le r_p+\lambda_T f.
 \label{eq:first-choice}
\end{equation}
Define the anchor excess
\[
 e:=(D_p-r_p)^+.
\]
If $p$ opens $y$, split its opening cost as
\[
 L_p:=\min\{f_y,f\},
 \qquad
 H_p:=(f_y-f)^+.
\]
If $p$ opens no facility, set $L_p=H_p=0$.  Thus the opening cost triggered
by $p$ is exactly $L_p+H_p$.

\begin{lemma}[Pointwise base-cost certificate]\label{lem:first-base}
For every realization,
\[
 L_p+a_ke
 \le
 f\max\{1,a_k\lambda_T\}.
\]
\end{lemma}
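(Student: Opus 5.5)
The plan is a pointwise case analysis on whether $p$ triggers an opening, using only the selection inequality \eqref{eq:first-choice}, the opening rule \eqref{eq:penalized-cutoff}, and the fact that after service $D_p$ is a nearest-open-facility distance. Throughout, write $D_p^-$ for the connection distance immediately before $p$ is served, and $y=y_T(p)$. Write $\Phi:=f\max\{1,a_k\lambda_T\}$ for the target bound.

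\emph{Case 1: $p$ opens no facility.} Then $L_p=0$. Also $D_p=D_p^-$, and by \eqref{eq:penalized-cutoff} and \eqref{eq:first-choice},
\[
 D_p< d(p,y)+\lambda_T f_y\le r_p+\lambda_T f .
\]
Hence $e=(D_p-r_p)^+\le\lambda_T f$. Since $a_k>0$, this gives $L_p+a_ke\le a_k\lambda_T f\le\Phi$. If $D_p^-=+\infty$ the opening is automatic, so this case has finite $D_p^-$.

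\emph{Case 2: $p$ opens $y$.} Now $p$ connects to its nearest facility in $O_T\ni y$, so $D_p\le d(p,y)$ and $e\le (d(p,y)-r_p)^+$. Rearranging \eqref{eq:first-choice} gives
\[
 d(p,y)-r_p\le\lambda_T(f-f_y).
\]
Split according to the definition of $L_p$. If $f_y\ge f$, the right-hand side is nonpositive, so $e=0$ and $L_p=f\le\Phi$. If $f_y<f$, then $L_p=f_y$ and $e\le\lambda_T(f-f_y)$, hence
\[
 L_p+a_ke\le f_y+a_k\lambda_T(f-f_y).
\]
This bound is affine in $f_y\in[0,f]$, so it is at most its larger endpoint value, namely $\max\{a_k\lambda_T f,\,f\}=\Phi$.

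No step here is a real obstacle. The only care needed is bookkeeping: keep $D_p$ (after service) distinct from $D_p^-$ (before service), note that $L_p=\min\{f_y,f\}$ produces exactly the two subcases in Case 2, and use $a_k\ge1>0$ so that multiplying the bound on $e$ by $a_k$ preserves the inequality. The key observation is the convexity (affine-in-$f_y$) step, which is what produces the maximum of $1$ and $a_k\lambda_T$.
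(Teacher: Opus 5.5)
Your proof is correct and matches the paper's argument essentially step for step: the same case split into no opening, opening a cheaper facility (handled by the affine-in-$f_y$ endpoint bound), and opening a more expensive facility (giving $e=0$ and $L_p=f$). The only cosmetic difference is that you put $f_y=f$ in the expensive subcase and use the non-strict $d(p,y)\le r_p$, so you do not need $\lambda_T>0$ there, whereas the paper invokes $q_T>0$ to get the strict inequality $d(p,y)<r_p$.
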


\begin{proof}
There are three cases.

If $p$ opens no facility, the failed cutoff and
\eqref{eq:first-choice} imply
\[
 D_p<d(p,y)+\lambda_T f_y
 \le r_p+\lambda_T f.
\]
Hence $e\le\lambda_T f$, and the claim follows because $L_p=0$.

Suppose that $p$ opens $y$ and $f_y\le f$.  Then
$D_p\le d(p,y)$ and \eqref{eq:first-choice} gives
\[
 e\le\lambda_T(f-f_y).
\]
Consequently,
\[
 L_p+a_ke
 \le f_y+a_k\lambda_T(f-f_y).
\]
The right-hand side is affine in $f_y\in[0,f]$, so its maximum is attained at
an endpoint and equals at most
$f\max\{1,a_k\lambda_T\}$.

Finally, suppose that $p$ opens $y$ and $f_y>f$.  Since $q_T>0$,
$\lambda_T>0$, and \eqref{eq:first-choice} yields
$d(p,y)<r_p$.  Thus $e=0$ and $L_p=f$.
\end{proof}

\begin{lemma}[Expected base-cost certificate]\label{lem:first-expectation}
Define
\begin{equation}
 \Psi(\mu)
 :=
 1+\frac{1+\mu}{\mu}
 \exp\!\left(-\frac{\mu}{1+\mu}\right).
 \label{eq:Psi}
\end{equation}
Then
\[
 \E[L_p+a_ke]\le\Psi(\mu)f.
\]
\end{lemma}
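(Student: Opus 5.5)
The plan is to take expectations in \cref{lem:first-base} and reduce the claim to a tail estimate for the first cluster rank $T$ via \cref{lem:clock}. Since $L_p+a_ke\le f\max\{1,a_k\lambda_T\}$ pointwise, it suffices to show
\[
 \E\bigl[\max\{1,a_k\lambda_T\}\bigr]
 =1+\E\bigl[(a_k\lambda_T-1)^+\bigr]
 \le\Psi(\mu).
\]
The case $k=1$ is immediate: $a_1=1$ and $\lambda_T\le1$ make the positive part vanish. For general $k$, I will use $\lambda_T\le q_T/\mu=T/(\mu n)$ to drop the truncation. With $\tau:=\mu n/a_k$ this gives
\[
 (a_k\lambda_T-1)^+\le\frac{a_k}{\mu n}(T-\tau)^+.
\]

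Next I write the expectation as a tail integral,
\[
 \E[(T-\tau)^+]=\int_\tau^\infty\Prb(T>s)\,ds,
\]
and evaluate it with \eqref{eq:first-rank-tail} at $j=\lfloor s\rfloor+1$. The first-rank tail satisfies
\[
 \frac{\binom{n-j+1}{k}}{\binom nk}=\prod_{i=0}^{k-1}\frac{n-j+1-i}{n-i}\le\Bigl(1-\frac{j-1}{n}\Bigr)^k\le e^{-k(j-1)/n}.
\]
Ignoring discretization, this gives
\[
 \E[(a_k\lambda_T-1)^+]\lesssim\frac{a_k}{\mu}\int_{\mu/a_k}^{\infty}e^{-k\sigma}\,d\sigma=\frac{a_k}{\mu k}\,e^{-k\mu/a_k}.
\]
Two elementary facts then finish the bound: $a_k/k=1+\mu-\mu/k\le1+\mu$, and $k\mu/a_k=\mu/(1+\mu-\mu/k)\ge\mu/(1+\mu)$. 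Together they give at most $\frac{1+\mu}{\mu}e^{-\mu/(1+\mu)}$, which is exactly $\Psi(\mu)-1$.

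The main obstacle I expect is the discretization: making the passage from the sum over integer ranks to the integral rigorous without an additive $O(1/n)$ loss. The estimate must hold for every $n$, since the theorem is non-asymptotic. The naive bound $\Prb(T>s)\le e^{-k\lfloor s\rfloor/n}$ is off by one rank from $e^{-ks/n}$. To close this gap I would first try to use the unused slack in the product. For $k\ge2$, the factors with $i\ge1$ satisfy $(n-j+1-i)/(n-i)\le(n-j)/n$, which gives $\Prb(T\ge j)\le(1-(j-1)/n)\,(1-j/n)^{k-1}$; that should suffice to compare the Riemann sum $\frac1n\sum_{j>\tau}\Prb(T\ge j)$ against $\int_{\tau/n}^1(1-\sigma)^k\,d\sigma\cdot\frac{k}{k}$-type expressions. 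Alternatively, I would use the slack in $a_k/k\le1+\mu$, which is strict by $\mu/k$, and in the exponent to absorb the one-rank shift. Failing both, I would compute $\E[(T-\tau)^+]$ exactly via $\sum_{j}\binom{n-j+1}{k}=\binom{n+1}{k+1}$-type hockey-stick identities. Those identities give the closed form $\frac{1}{\binom nk}\binom{n-\lceil\tau\rceil+1}{k+1}$ plus a fractional correction, which is then bounded by $\frac{n}{k+1}(1-\tau/n)^{k+1}\le\frac{n}{k}e^{-k\tau/n}$, using $k+1$ in place of $k$ to pay for the rounding.
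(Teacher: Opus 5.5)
Your reduction is correct and is the same as the paper's. \cref{lem:first-base} reduces the claim to bounding $\E[(a_k\lambda_T-1)^+]$, and dropping the truncation gives $\gamma_k\E[(T-\tau)^+]$ with $\gamma_k=a_k/(\mu n)=1/\tau$. Your closing step via $a_k/k\le1+\mu$ is also fine, once you know
\[
 \E[(T-\tau)^+]\le\frac nk\,e^{-k\tau/n}\qquad\text{for all }n\ge k\ge1 .
\]
But this inequality is the entire content of the lemma, and none of your three routes establishes it.
\begin{itemize}
\item In the first route, the factor bound $\frac{n-j+1-i}{n-i}\le\frac{n-j}{n}$ is equivalent to $ij\ge n$, so it fails for small $j$. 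With $n=4$, $k=2$, $j=2$ one has $\Prb(T\ge2)=\tfrac12$, while your bound gives $\tfrac38$.
\item The second route cannot work for dense clusters. If $k=n$ then $T\equiv1$, and the target reads $1-\tau\le e^{-\tau}$ with $\tau\approx\rho=\mu/(1+\mu)$, a margin of only about $\rho^2/2$. Integrating the tail bound $e^{-k\lfloor s\rfloor/n}$, which ignores $T\le n-k+1$, adds $\sum_{j\ge1}e^{-j}=1/(e-1)$. An $O(1/k)$ slack cannot absorb that.
\item In the third route your closed form is right, but $\frac{n}{k+1}(1-\tau/n)^{k+1}$ is the value for a continuous minimum, and the discrete $T\ge1$ exceeds it. As $\tau\to0$ it would assert $\E[T]=\frac{n+1}{k+1}\le\frac{n}{k+1}$. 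Concretely, for $n=k=2$ and $\mu=0.31684$ (so $\tau\approx0.274$), the left side is $1-\tau\approx0.726$ against a right side of $\approx0.429$.
\end{itemize}

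The fix is to finish the exact computation, which is what the paper does. Put $\ell=\lfloor\tau\rfloor$ and $\theta=\tau-\ell$. Integrality gives $(T-\tau)^+=(T-\ell)^+-\theta\,\1\{T>\ell\}$, and the hockey-stick identity then gives, for $\ell\le n-k$,
\[
 \E[(T-\tau)^+]=P_\ell\Bigl(\frac{n-\ell+1}{k+1}-\theta\Bigr),
 \qquad
 P_\ell=\frac{\binom{n-\ell}{k}}{\binom nk}\le e^{-k\ell/n}.
\]
Here the bracket is at least $1-\theta>0$; when $\ell>n-k$ the left side is zero. Since $\frac{n-\ell+1}{k+1}\le\frac{n+1}{k+1}\le\frac nk$ and $\frac nk-\theta\le\frac nk e^{-k\theta/n}$, the displayed target follows, and your final step then yields $\Psi(\mu)$. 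The paper's quantity $P_\ell B$ is exactly $\gamma_k$ times this bracket expression. It finishes differently, by setting $s=(k-\rho)\ell/n\in[0,\rho]$ and maximizing $(1-\rho+s)e^{-s}$, and reaches the same constant.
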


\begin{proof}
Put
\[
 \rho:=\frac{\mu}{1+\mu},
 \qquad
 \gamma_k:=\frac{a_k}{n\mu}=\frac{k-\rho}{n\rho},
 \qquad
 \ell:=\left\lfloor\frac{1}{\gamma_{k}}\right\rfloor.
\]
Since $a_k=(1+\mu)(k-\rho)$,
\[
 h(T):=\max\{1,a_k\lambda_T\}
 =\max\{1,\min\{a_k,\gamma_k T\}\}.
\]
For every integer $T\ge1$,
\begin{equation}
 h(T)
 \le
 1+\gamma_k (T-\ell)^+-(1-\gamma_k \ell)\1\{T>\ell\}.
 \label{eq:discrete-first-pointwise}
\end{equation}
Indeed, if $T\le\ell$ then $\gamma_k T\le1$ and both sides equal one; if
$T>\ell$, the right-hand side is $\gamma_k T>1$ and dominates
$\min\{a_k,\gamma_k T\}$.

The first cluster rank satisfies $T\le n-k+1$.  If
$\ell\ge n-k+1$, then $h(T)=1$ and the result is immediate.  Otherwise, let
\[
 P_\ell:=\Prb(T>\ell)
 =\frac{\binom{n-\ell}{k}}{\binom nk}.
\]
The tail-sum formula and the hockey-stick identity give
\begin{align}
 \E[(T-\ell)^+]
 &=\sum_{j=\ell+1}^{n-k+1}\Prb(T\ge j)\notag\\
 &=\frac{\binom{n-\ell+1}{k+1}}{\binom nk}
 =P_\ell\frac{n-\ell+1}{k+1}.
 \label{eq:first-rank-overshoot}
\end{align}
Taking expectations in \eqref{eq:discrete-first-pointwise} yields
\[
 \E[h(T)]\le1+P_\ell B,
 \qquad
 B:=\gamma_k \frac{n-\ell+1}{k+1}-1+\gamma_k \ell.
\]
Because we are in the case $\ell<n-k+1$, integrality gives
$\ell\le n-k$, and hence
$\frac{n-\ell+1}{k+1}\ge1$.
Moreover, the definition
$\ell=\lfloor1/\gamma_k\rfloor$ implies
$\gamma_k(\ell+1)>1$.
Therefore,
\[
 B
 \ge
 \gamma_k-1+\gamma_k\ell
 =
 \gamma_k(\ell+1)-1
 >0.
\]

Define
\[
 s:=\frac{(k-\rho)\ell}{n}.
\]
The definition of $\ell$ gives $0\le s\le\rho$.  Moreover,
\[
 B=
 \frac{(k-\rho)(n-\ell+1)}{n\rho(k+1)}
 -1+\frac{s}{\rho},
\]
and the first fraction is at most $1/\rho$.  Hence
\begin{equation}
 B\le\frac{1-\rho+s}{\rho}.
 \label{eq:first-rank-B}
\end{equation}
Moreover,
\[
 P_\ell
 =\prod_{j=0}^{\ell-1}\left(1-\frac{k}{n-j}\right)
 \le\exp\!\left(-\frac{k\ell}{n}\right)
 \le e^{-s}.
\]
Combining the positivity of $B$ with
\eqref{eq:first-rank-B} gives
\[
 P_\ell B
 \le\frac{(1-\rho+s)e^{-s}}{\rho}
 \le\frac{e^{-\rho}}{\rho},
\]
because the derivative of $(1-\rho+s)e^{-s}$ is
$(\rho-s)e^{-s}\ge0$ on $[0,\rho]$.  Consequently,
\[
 \E[h(T)]
 \le1+\frac{e^{-\rho}}{\rho}
 =1+\frac{1+\mu}{\mu}
   \exp\!\left(-\frac{\mu}{1+\mu}\right)
 =\Psi(\mu).
\]
Combining this with \cref{lem:first-base} proves the lemma.
\end{proof}

\subsection{High-cost excess at the first point}

The remaining part $H_p$ is the amount by which the facility opened at $p$
costs more than the optimal cluster facility.

\begin{lemma}[High-cost excess]\label{lem:high}
For a cluster $C$ of size $k$,
\[
 \E[H_p]
 \le
 \max\left\{\frac1k,\mu\right\}R(C)
 \le
 \left(\frac1k+\mu\right)R(C).
\]
\end{lemma}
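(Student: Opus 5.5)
Since $H_p>0$ requires $p$ to open a candidate $y$ with $f_y>f$, we apply \cref{lem:envelope} to the candidates relative to the optimal facility $c$. Fix the first point $p$ and its rank $T$. For each $y\in\F$ with $f_y>f$, set
\[
 a_y:=f_y-f>0,
 \qquad
 b_y:=(r_p-d(p,y))^+ .
\]
Add the auxiliary origin $(0,0)$, which represents $c$ itself. By \eqref{eq:penalized-choice}, the selected candidate $y=y_T(p)$ satisfies $d(p,y)+\lambda_T f_y\le r_p+\lambda_T f$. Hence, if $f_y>f$, then
\[
 b_y-\lambda_T a_y\ge 0,
\]
and $y$ maximizes $b-\lambda_T a$ over this point set together with the origin. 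Candidates with $d(p,y')>r_p$ have negative value, so they cannot beat the origin. Candidates with $f_{y'}\le f$ are not in the set, and they do not matter: if one of them were selected, then $H_p=0$, which corresponds to selecting the origin. Applying \cref{lem:envelope} with $z=\lambda_T$ gives increments $(w_i,g_i)$ with slopes $x_i$. We then have $H_p\le a_y=\sum_{i\le\ell}w_i$, and $x_i\ge\lambda_T$ for every $i\le\ell$. Therefore
\[
 H_p\le\sum_i w_i\,\1\{\lambda_T\le x_i\},
 \qquad
 \sum_i g_i\le\max_y b_y\le r_p .
\]

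The difficulty is that this envelope depends on $p$, which is random and is correlated with $T$ only through the uniform choice of $p$. By \cref{lem:clock}, $p$ is uniform in $C$ and independent of $T$, so we can condition on the identity $p=u$ for a fixed $u\in C$ and average over $T$ afterwards. For fixed $u$, the envelope is deterministic, and it remains to bound $\Prb(\lambda_T\le x_i)$.

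If $x_i\ge1$, the probability is at most $1$. We need $w_i\le g_i$ in this case, and $x_i\ge1$ gives exactly $w_i\le g_i$. If $x_i<1$, the event is $T\le n\mu x_i$. A union bound over the $k$ copies of $C$ gives
\[
 \Prb(T\le j)\le \frac{kj}{n},
\]
so the probability is at most $k\mu x_i$. The term then contributes at most $k\mu g_i$. This is too large by a factor of $k$, which is why the argument must be run from the averaged viewpoint.

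Here is the averaged version. For a fixed copy $u$, the event that $u$ is first \emph{and} has rank at most $j$ has probability at most $j/n$. We therefore compute
\[
 \E[H_p]=\sum_{u\in C}\E\bigl[H_u\,\1\{p=u\}\bigr].
\]
For each $u$ and each increment $i$ of $u$'s envelope, the probability of the event $\{p=u,\ \lambda_T\le x_i\}$ is bounded in two ways:
\begin{itemize}
 \item by $\Prb(p=u)=1/k$, which gives a term of at most $w_i/k$, and at most $g_i/k$ when $x_i\ge1$;
 \item by $\Prb(\text{rank of }u\le n\mu x_i)\le\mu x_i$, which gives a term of at most $\mu g_i$.
\end{itemize}
Taking the better of the two bounds, each increment costs at most $\max\{1/k,\mu\}\,g_i$. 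For $x_i\ge1$ this uses the first bound directly. For $x_i<1$ it uses the minimum $\min\{1/k,\mu x_i\}\,w_i\le\max\{1/k,\mu\}\,g_i$, although this step is a little delicate when $1/k>\mu x_i$. Summing over $i$ and using $\sum_i g_i\le r_u$, then summing over $u\in C$, yields $\E[H_p]\le\max\{1/k,\mu\}R(C)$.

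The main obstacle is this two-way bound. When $x_i<1$ and the $1/k$ bound is the smaller one, we need $w_i/k\le\max\{1/k,\mu\}g_i$, which fails because $w_i>g_i$. My plan is to use the bound $\mu x_i w_i=\mu g_i$ whenever $x_i<1$, and the bound $w_i/k\le g_i/k$ whenever $x_i\ge1$. Each increment then costs at most $\max\{1/k,\mu\}g_i$, as required.
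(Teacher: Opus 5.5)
Your final argument is correct and is essentially the paper's proof. The paper conditions on the identity of $p$, uses the union bound $\Prb(T\le n\mu x_i)\le k\mu x_i$ to get $\E[H_p\mid p]\le\max\{1,k\mu\}r_p$, and then averages with $\E[r_p]=R(C)/k$; this is the same estimate as your joint bound $\Prb(p=u,\ \text{rank of }u\le n\mu x_i)\le\mu x_i$, which has the small advantage of not needing $p$ to be independent of the occupied rank set.

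Two remarks in your write-up are mistaken, though neither affects the final argument. Your first, conditional computation was not actually too large by a factor of $k$, because the $1/k$ from averaging $r_p$ cancels it. The ``obstacle'' you describe is also vacuous, since for $x_i<1$ one always has $\min\{1/k,\mu x_i\}w_i\le\mu x_i w_i=\mu g_i$.
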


\begin{proof}
Fix the identity of $p$.  By \cref{lem:clock}, the identity of $p$ is
independent of the set of ranks occupied by $C$.  Hence, under this
conditioning, the occupied rank set remains a uniformly random
$k$-subset of $[n]$.  All probabilities below are taken under this
conditioning.

For every candidate with $f_y>f$, define
\[
 s_y:=f_y-f,
 \qquad
 b_y:=(r_p-d(p,y))^+.
\]
Consider the event $H_p>0$, on which the algorithm opens a candidate
$y$ with $f_y>f$.  On this event,
\eqref{eq:first-choice} implies
\[
 b_y-\lambda_Ts_y\ge0.
\]

The selected candidate $y$ maximizes
\[
 b_{y'}-\lambda_Ts_{y'}
\]
over all candidates $y'$ with $f_{y'}>f$, together with the auxiliary
point $(0,0)$.  Indeed, if $d(p,y')\le r_p$, then
\[
 b_{y'}-\lambda_Ts_{y'}
 =
 r_p+\lambda_Tf
 -\bigl(d(p,y')+\lambda_Tf_{y'}\bigr),
\]
whereas if $d(p,y')>r_p$, then
\[
 b_{y'}-\lambda_Ts_{y'}
 =
 -\lambda_Ts_{y'}<0.
\]
Thus a candidate of the latter type cannot improve upon the selected
candidate, whose objective value is nonnegative.

Apply \cref{lem:envelope}, with the auxiliary-point tie-breaking
specified above, to
\[
 \{(s_y,b_y):f_y>f\}\cup\{(0,0)\}.
\]
Let $v_0,\ldots,v_m$ and the increments $(w_i,g_i)$ be as in the
lemma.  On the event $H_p>0$, the selected point is
\[
 v_\ell=(s_y,b_y).
\]
Then
\[
 H_p=s_y=\sum_{i=1}^{\ell}w_i
\]
and
\[
 x_i\ge\lambda_T
 \qquad(i\le\ell).
\]
Therefore,
\begin{equation}
 H_p
 \le
 \sum_i w_i\1\{\lambda_T\le x_i\}.
 \label{eq:high-increment-bound}
\end{equation}
When no high-cost candidate is opened, $H_p=0$, so the same inequality
remains valid.

If $x_i<1$, then
\[
 \Prb(\lambda_T\le x_i)
 =
 \Prb\!\left(T\le\lfloor n\mu x_i\rfloor\right)
 \le
 k\frac{\lfloor n\mu x_i\rfloor}{n}
 \le
 k\mu x_i,
\]
where the first inequality is a union bound over the $k$ cluster
copies.  Hence the expected contribution of the $i$th term in
\eqref{eq:high-increment-bound} is at most
\[
 k\mu x_iw_i
 =
 k\mu g_i.
\]
If $x_i\ge1$, then
\[
 w_i\le g_i,
\]
so the contribution of the $i$th term is at most $g_i$.  Therefore,
conditional on the identity of $p$,
\[
 \E[H_p\mid p]
 \le
 \max\{1,k\mu\}\sum_i g_i.
\]

Since
\[
 v_m=\sum_i(w_i,g_i),
\]
the quantity $\sum_i g_i$ is the second coordinate of $v_m$.
Every point in the set to which \cref{lem:envelope} is applied has
second coordinate at most
\[
 \max\bigl(\{b_y:y\in\F,\ f_y>f\}\cup\{0\}\bigr).
\]
Hence
\[
 \sum_i g_i
 \le
 \max\bigl(\{b_y:y\in\F,\ f_y>f\}\cup\{0\}\bigr)
 \le
 r_p.
\]
Consequently
\[
 \E[H_p\mid p]
 \le
 \max\{1,k\mu\}r_p.
\]
The identity of $p$ is uniform in $C$, so
$\E[r_p]=R(C)/k$.  Averaging over $p$ proves the first inequality; the second
uses $\max\{1/k,\mu\}\le1/k+\mu$.
\end{proof}

\subsection{The cluster and global bounds}

\begin{lemma}[Cluster bound]\label{lem:cluster}
For every optimal cluster $C$ served by a facility $c$ of cost $f_c$,
\[
 \E[\ALG(C)]
 \le
 \Psi(\mu)f_c+(3+4\mu)R(C).
\]
\end{lemma}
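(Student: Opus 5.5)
Split $\ALG(C)=\ALG(p)+\sum_{u\in C\setminus\{p\}}\ALG(u)$ and bound the two parts separately. The pieces come from the lemmas already proved: \cref{lem:suffix-one} handles the later copies, and Lemmas~\ref{lem:first-base}, \ref{lem:first-expectation} and \ref{lem:high} handle the anchor.

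\emph{Later copies.} Apply \cref{lem:suffix-one} to each fixed $u\in C\setminus\{p\}$. Since $d(p,u)\le r_p+r_u$ and $D_p\le r_p+e$ (because $e=(D_p-r_p)^+$), we get
\[
\E[\ALG(u)\mid\mathcal G_p]\le(1+\mu)(e+2r_p+r_u).
\]
Sum over the $k-1$ later copies, then use the tower property and $\sum_{u\ne p}r_u=R(C)-r_p$. This gives
\[
\E\Bigl[\sum_{u\ne p}\ALG(u)\Bigr]\le \E\bigl[(1+\mu)(k-1)e+(1+\mu)\bigl(R(C)+(2k-3)r_p\bigr)\bigr].
\]
Because $p$ is uniform in $C$ (\cref{lem:clock}), $\E[r_p]=R(C)/k$. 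Hence the $r_p$ and $r_u$ terms contribute at most
\[
(1+\mu)\frac{3k-3}{k}R(C)\le(1+\mu)\Bigl(3-\frac3k\Bigr)R(C).
\]

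\emph{The anchor.} The cost at $p$ is $D_p+L_p+H_p\le r_p+e+L_p+H_p$. Adding this to the later-copy bound, the $e$ terms collect into
\[
e+(1+\mu)(k-1)e=a_ke,
\]
exactly as in the definition of $a_k$. \cref{lem:first-expectation} bounds $\E[L_p+a_ke]$ by $\Psi(\mu)f$. \cref{lem:high} bounds $\E[H_p]$ by $(1/k+\mu)R(C)$, and the anchor's own connection term $\E[r_p]$ gives $R(C)/k$.

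\emph{Combining.} The total coefficient of $R(C)$ is
\[
(1+\mu)\Bigl(3-\frac3k\Bigr)+\frac2k+\mu=3+4\mu-\frac{1+3\mu}{k}\le 3+4\mu,
\]
which gives the claimed bound.

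\emph{Main obstacle.} The delicate step is the conditioning. The bound from \cref{lem:suffix-one} is conditional on $\mathcal G_p$, and the quantities $e$ and $r_p$ are $\mathcal G_p$-measurable. I must make sure the expectation of $r_p$ is taken jointly with $e$ correctly. Since everything is linear, I take full expectations and use $\E[r_p]=R(C)/k$ separately; this is valid because no product of $e$ with $r_p$ appears.

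If $k=1$, there are no later copies and the bound follows directly from the anchor part.
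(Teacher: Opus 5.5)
Your proposal is correct and is essentially the paper's proof. It uses the same split into the anchor and the later copies via \cref{lem:suffix-one}, the same collection of the $e$-terms into $a_ke$ for \cref{lem:first-expectation}, the weak form of \cref{lem:high} for $H_p$, and the same arithmetic giving $3+4\mu-(1+3\mu)/k$; the only difference is that you substitute $D_p\le r_p+e$ and $d(p,u)\le r_p+r_u$ inside the conditional bound rather than after the tower step, which is harmless by linearity.
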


\begin{proof}
Let
\[
 Y_C:=\sum_{u\in C}\1\{u\ne p\}\ALG(u)
\]
be the cost incurred on the points of $C$ arriving after $p$.  The identity
of $p$ is $\mathcal G_p$-measurable.  Conditional linearity and
\cref{lem:suffix-one} yield
\begin{align*}
 \E[Y_C\mid\mathcal G_p]
 &\le
 (k-1)(1+\mu)D_p
 +(1+\mu)\sum_{u\ne p}d(p,u).
\end{align*}
The quantities $L_p,H_p$, and $D_p$ are determined when $p$ has been served.
Adding the opening and connection cost of $p$ and applying the tower property
gives
\begin{align}
 \E[\ALG(C)]
 \le
 \E\!\left[
  L_p+H_p+a_kD_p
  +(1+\mu)\sum_{u\ne p}d(p,u)
 \right].
 \label{eq:cluster-tower}
\end{align}
No independence among later costs is used.

Since $D_p\le r_p+e$, \cref{lem:first-expectation} controls
$L_p+a_ke$.  Also,
\[
 \E[a_kr_p]
 =\frac{a_k}{k}R(C)
 =\left(1+\mu-\frac{\mu}{k}\right)R(C).
\]
By the triangle inequality,
\[
 \sum_{u\ne p}d(p,u)
 \le
 \sum_{u\ne p}(r_p+r_u)
 =R(C)+(k-2)r_p,
\]
and therefore
\[
 \E\!\left[\sum_{u\ne p}d(p,u)\right]
 \le2\left(1-\frac1k\right)R(C).
\]
Using the weaker form of \cref{lem:high}, the coefficient of $R(C)$ in
\eqref{eq:cluster-tower} is at most
\begin{align*}
 &\left(\mu+\frac1k\right)
 +\left(1+\mu-\frac{\mu}{k}\right)
 +2(1+\mu)\left(1-\frac1k\right)\\
 &=3+4\mu-\frac{1+3\mu}{k}
 \le3+4\mu.
\end{align*}
This proves the lemma.
\end{proof}

\subsection{Completing the upper bound}\label{sec:global}

\begin{theorem}[General parameter bound]\label{thm:general}
For every $\mu\in(0,1]$, Algorithm~\ref{alg:main} is a deterministic
polynomial-time online algorithm for known horizon $n$ and satisfies
\[
 \E[\ALG]
 \le
 \max\{\Psi(\mu),3+4\mu\}\OPT,
\]
where $\Psi$ is defined in \eqref{eq:Psi}.
\end{theorem}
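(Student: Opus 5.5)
The plan is to sum the cluster bound of \cref{lem:cluster} over the clusters of a fixed optimal solution, and then cite \cref{prop:runtime} for the algorithmic claims.

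For the algorithmic part, Algorithm~\ref{alg:main} makes no random choices. Every choice is fixed by \eqref{eq:penalized-choice}, \eqref{eq:penalized-cutoff}, and the fixed tie-breaking order, so it is deterministic. It reads $n$ only through $q_t=t/n$, so it is online for a known horizon. \Cref{prop:runtime} supplies the polynomial running time.

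For the cost bound, fix an optimal solution and delete its unused facilities, as in \cref{sec:cluster-analysis}. Since opening costs are positive, this does not increase the cost, so the pruned solution is still optimal. Its open facilities $c$ then induce nonempty clusters $C_c$ that partition the $n$ labeled demand copies. Hence
\[
 \OPT=\sum_c\bigl(f_c+R(C_c)\bigr).
\]
On the algorithm side, each online opening cost is charged to the request that triggers it, and each connection cost to the request that pays it. This gives the exact identity $\ALG=\sum_c\ALG(C_c)$ recorded earlier. Linearity of expectation and \cref{lem:cluster} applied to each cluster then give
\[
 \E[\ALG]\le\sum_c\bigl(\Psi(\mu)f_c+(3+4\mu)R(C_c)\bigr)\le\max\{\Psi(\mu),3+4\mu\}\sum_c\bigl(f_c+R(C_c)\bigr),
\]
which is the claimed bound.

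The only step that needs care is checking that \cref{lem:cluster} applies to every cluster at once, even though the clusters share one random permutation. I would first confirm that \cref{lem:cluster} is stated for an arbitrary fixed cluster, with expectation taken over the full permutation. The lemma does not rely on independence between clusters, and linearity of expectation needs none. So no coupling between clusters arises, and the substantive work has already been done in \cref{lem:suffix-one,lem:first-expectation,lem:high}.
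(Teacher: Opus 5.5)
Your proposal is correct and follows the paper's proof essentially step for step. You sum \cref{lem:cluster} over the clusters of a pruned optimum, use the exact charging identity $\ALG=\sum_C\ALG(C)$ with linearity of expectation (no cross-cluster independence is needed), and bound $\Psi(\mu)F^*+(3+4\mu)R^*$ by $\max\{\Psi(\mu),3+4\mu\}\OPT$; your appeal to \cref{prop:runtime} for the deterministic polynomial-time clause spells out what the paper leaves implicit.
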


\begin{proof}
Let $F^*$ and $R^*$ be the opening and connection costs of a fixed optimal
solution.  Sum \cref{lem:cluster} over its clusters.  The bookkeeping defined
in \cref{sec:cluster-analysis} counts each online opening and connection cost
exactly once, so
\[
 \E[\ALG]
 \le
 \Psi(\mu)F^*+(3+4\mu)R^*.
\]
Since $\OPT=F^*+R^*$, the claimed maximum follows.
\end{proof}

A direct differentiation shows that $\Psi(\mu)$ is strictly decreasing,
whereas $3+4\mu$ is strictly increasing.  Hence the bound in
\cref{thm:general} is minimized at their unique intersection, which occurs
near
\[
 \mu=0.31684
\]
and gives a value approximately equal to $4.26736$.

\begin{corollary}[A $4.2674$ bound]\label{cor:constant}
With $\mu=0.31684$, Algorithm~\ref{alg:main} satisfies
\[
 \E[\ALG]<4.2674\,\OPT.
\]
\end{corollary}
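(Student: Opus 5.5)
The plan is to instantiate \cref{thm:general} at $\mu=0.31684$ and check numerically that both terms of the maximum lie below $4.2674$. The linear term is immediate: $3+4\mu=3+1.26736=4.26736<4.2674$. The remaining work is to evaluate $\Psi(0.31684)$ with a rigorous error bound and confirm that it is also below $4.2674$. Then $\max\{\Psi(\mu),3+4\mu\}<4.2674$, and \cref{thm:general} gives $\E[\ALG]<4.2674\,\OPT$.

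To evaluate $\Psi$, I will set $\rho=\mu/(1+\mu)=0.31684/1.31684$. This gives $\rho\approx 0.240606$, and $1/\rho=(1+\mu)/\mu\approx 4.15617$. Next, I will bound $e^{-\rho}$ from above by a truncated alternating Taylor series. Stopping after a positive term gives a valid upper bound:
\[
 e^{-\rho}\le 1-\rho+\frac{\rho^2}{2}-\frac{\rho^3}{6}+\frac{\rho^4}{24}.
\]
This yields $e^{-\rho}\approx 0.78612$. The product is then
\[
 \frac{1}{\rho}e^{-\rho}\approx 4.15617\times 0.78612\approx 3.2673,
\]
so $\Psi(\mu)\approx 4.2673$.

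The only delicate point is the size of the margin. The intersection value is about $4.26736$, which leaves a gap of only about $4\times10^{-5}$ below $4.2674$. I will therefore carry each quantity to about six significant digits and round in the safe direction throughout. Specifically, I will round $\rho$ down when forming $1/\rho$, round $1/\rho$ up, and use the upper Taylor bound for $e^{-\rho}$, so that the computed value of $\Psi$ is a certified upper estimate. If the hand computation turns out too tight, a fallback is available: since $\Psi$ is decreasing, I can shift $\mu$ very slightly, keeping $3+4\mu<4.2674$ (that is, $\mu<0.316850$), and recheck that $\Psi(\mu)<4.2674$ at the shifted value. The interval $(\mu^*,0.31685)$, where $\mu^*$ is the intersection point, contains valid choices.
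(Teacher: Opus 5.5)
Your reduction is the paper's: apply \cref{thm:general} at $\mu=0.31684$, note $3+4\mu=4.26736$, and check $\Psi(0.31684)$ numerically. The paper simply records $\Psi(0.31684)<4.26738$. The gap is in the certificate you commit to, which fails numerically.

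First, at $\mu=0.31684$ the binding term is $\Psi$, not $3+4\mu$. With $\rho\approx 0.2406063$ and $1/\rho=1+1/\mu\approx 4.1561672$, one has $e^{-\rho}\approx 0.7861511$ and $\Psi(0.31684)\approx 4.267375$. So the slack below $4.2674$ is only about $2.5\times 10^{-5}$, not $4\times 10^{-5}$; the intersection $\mu^*\approx 0.316841$ lies slightly above $0.31684$.

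Second, your degree-four upper bound discards $\rho^5/120-\rho^6/720+\cdots\approx 6.5\times 10^{-6}$. After multiplication by $1/\rho$ this costs about $2.7\times 10^{-5}$, which exceeds the slack. Concretely, $1-\rho+\rho^2/2-\rho^3/6+\rho^4/24\approx 0.7861575$, and $1+0.7861575/\rho\approx 4.267402>4.2674$. So the certified estimate you describe does not prove the corollary. Separately, your displayed value $e^{-\rho}\approx 0.78612$ is wrong in the fifth digit and lies below the true value, so it is not an upper estimate at all.

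The fallback does not rescue this. The corollary is about the specific parameter $\mu=0.31684$. Moving $\mu$ into $(\mu^*,0.31685)$, an interval that does not even contain $0.31684$, proves a bound for a different member of the algorithm family.

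The repair is easy: keep $\mu$ and extend the alternating series to the next positive term. This gives
$e^{-\rho}\le\sum_{j=0}^{6}(-\rho)^j/j!\approx 0.7861511$,
whose overshoot is at most $\rho^7/5040<10^{-8}$. Even with safe rounding ($\rho\ge 0.2406062$, $1/\rho\le 4.1561672$), this yields $\Psi(0.31684)<4.26738<4.2674$, matching the paper's certificate. Together with $3+4\mu=4.26736$, the corollary then follows from \cref{thm:general}.
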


\begin{proof}
Substitution gives
\[
 \Psi(0.31684)<4.26738,
 \qquad
 3+4(0.31684)=4.26736.
\]
The claim follows from \cref{thm:general}.
\end{proof}

\section{\texorpdfstring{A $3-o(1)$ lower bound and a model separation}{A 3-o(1) lower bound and a model separation}}
\label{sec:lower}

We first prove a factor-three lower bound in the restricted candidate-site
model, even though all candidate facilities have the same opening cost.  We
then transfer the construction to the classical full-space model with
nonuniform opening costs.  As in several random-order lower bounds, it is convenient
to begin in the weaker i.i.d. model, in which the algorithm is even given the
demand distribution in advance.

\begin{theorem}[A $3-o(1)$ random-order lower bound]
\label{thm:lower-three}
For every integer $N\ge2$ and every randomized online algorithm, there exists
a demand-point multiset of size $N$ in the metric of \cref{lem:lb-metric} such that
\[
 \frac{\E[\ALG]}{\OPT}
 \ge
 \frac{3N}{N+3},
\]
where the expectation is over the uniformly random arrival order and the
algorithm's internal randomness. The instance has $m=2N(N-1)$
possible demand locations and $\binom{m}{N}$
uniform-cost candidate facilities. In particular, the asymptotic competitive
ratio is at least $3$, even when all candidate facilities have the same
opening cost.
\end{theorem}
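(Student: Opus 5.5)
The plan is a Yao-type argument. I fix a metric, choose the demand set at random, show that every deterministic online algorithm pays at least $3N$ in expectation, and then derandomize the input by averaging to get a single multiset.

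\textbf{Metric and instance.} I expect \cref{lem:lb-metric} to be essentially the following; I construct it here and would need to reconcile it with the lemma. Take $m=2N(N-1)$ demand locations and one candidate facility $c_S$ for every $N$-subset $S$ of locations, each with opening cost $3$. Distances are the shortest-path metric of the bipartite graph with a unit edge between location $x$ and $c_S$ whenever $x\in S$. This gives $d(x,c_S)=1$ if $x\in S$ and $d(x,c_S)=3$ otherwise. Distinct locations are at distance $2$, and every facility contains every location, so every location is within distance $3$ of every facility.

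The random input is a uniformly random $N$-subset $S$ with one copy per location, arriving in uniformly random order. Equivalently, the $t$-th request is a uniformly random location not yet seen. The offline optimum opens $c_S$ and pays at most $3+N$.

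\textbf{Pointwise accounting.} Once any facility is open, a request costs at least $1$. It costs $1$ only if it lies in the index set of an open facility, and otherwise it costs $3$, since opening a new facility costs $3+1>3$. Hence
\[
 \ALG \ge 3N + 3\,\#\{\text{openings}\} - 2\,\#\{\text{requests at distance }1\}.
\]
Each request at distance $1$ is \emph{covered} by some opened facility $c_{S'}$. I charge it to the facility that covers it. Either the request triggered that facility's opening, or the facility was opened earlier and the request arrived afterwards.

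\textbf{Coverage credit.} This is the main step. Fix a facility opened at time $t$, with index set $S'$. Conditional on the history through time $t$, each later request is uniform among the $m-t$ unseen locations. At most $N$ of those locations lie in $S'$, so the expected number of later requests covered by this facility is at most
\[
 (N-t)\,\frac{N}{m-t}\le \frac{N(N-1)}{2N(N-1)-N+1}\le \tfrac12 .
\]
This bound is exactly where $m=2N(N-1)$ is tuned, and the small-$N$ arithmetic has to be checked carefully. The request triggering the opening adds at most $1$ more. So each opening earns expected credit at most $2(1+\tfrac12)=3$, which equals its opening cost. Summing by optional stopping / linearity over the random opening times gives $\E[\ALG]\ge 3N$ for every deterministic algorithm. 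Randomized algorithms then follow by averaging over their coins.

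\textbf{From the i.i.d.-style input to a fixed multiset.} The random input is a mixture, over subsets $S$, of uniformly random orders of $S$. By symmetry $\OPT(S)=N+3$ for every $S$. So for each randomized algorithm some fixed $S$ satisfies $\E[\ALG\mid S]\ge 3N$, which yields the ratio $3N/(N+3)$.

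The main obstacle is the credit bound. The algorithm's choice of $S'$ is adaptive, so I have to make the conditional-uniformity argument rigorous via \cref{lem:clock}-type reasoning, and the double-counting of triggering requests against the inequality $\le 3$ must come out exactly rather than as $3+o(1)$.
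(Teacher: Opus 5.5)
Your architecture is the paper's: the pointwise identity $\ALG=3N+3K-2C$ (the paper's $\ALG=N+K-\tfrac23C$ rescaled by $3$), a bound of $\tfrac32$ on the expected number of covered requests charged to each opening, and averaging over the realized multiset. Your metric differs from the paper's only by this rescaling and in facility--facility distances, which the argument never uses, so that reconciliation is harmless.

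\textbf{The gap.} The failure is in the one quantitative step, and it comes from replacing the paper's i.i.d.\ uniform draws with a uniformly random $N$-subset, i.e.\ sampling without replacement. Your chain
\[
 (N-t)\frac{N}{m-t}\le\frac{N(N-1)}{2N(N-1)-N+1}\le\frac12
\]
is false. Its last inequality fails for every $N\ge2$; for $N=2$ the middle term is $\tfrac23$. Even the exact value at $t=1$, namely $\frac{N(N-1)}{2N(N-1)-1}$, exceeds $\tfrac12$.

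Without replacement, the chance that a later request lands in $S'$ can grow from $N/m$ to $N/(m-t)$. The choice $m=2N(N-1)$ has no slack to absorb this. So your bound on the expected charge to an opening at rank one exceeds $\tfrac32$, and $\E[\ALG]\ge3N$ does not follow. Even charging the trigger to at most one facility per round, the crude count yields only $\E[\ALG]\ge3N-\frac{1}{2N(N-1)-1}$. That is enough for the asymptotic factor $3$, but not for the exact ratio $\frac{3N}{N+3}$ in the statement.

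\textbf{Two fixes.} Either one suffices.

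First, you can use i.i.d.\ draws as the paper does. Then each later request lies in $S'$ with probability exactly $N/m$ whatever the history, so the expected future coverage is $(N-t)N/m\le N(N-1)/m=\tfrac12$. Conditioning the i.i.d.\ sequence on its multiset still gives a uniformly random order, and $\OPT=N+3$ (in your units) holds for every realization.

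Second, you can keep your input and split on whether $v_t\in S'$.
\begin{itemize}
\item If $v_t\in S'$, one location of $S'$ is already seen. The expected future credit is then at most $(N-t)\frac{N-1}{m-t}\le\frac{(N-1)^2}{2N(N-1)-1}\le\tfrac12$, where the last inequality is equivalent to $2N\ge3$. The total is at most $\tfrac32$.
\item If $v_t\notin S'$, the trigger contributes nothing. The future credit is at most $\frac{N(N-1)}{2N(N-1)-1}<1$.
\end{itemize}

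Either way, each opening is charged at most $\tfrac32$ covered requests in expectation. That is at most $3$ in your weighted credit, matching its opening cost, and the rest of your argument goes through.
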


The construction hides a facility tailored to the realized support.  Offline
can choose this facility after observing all demand points.  Online, however,
must choose facilities adaptively, and each opened facility covers only a
limited expected number of future demand points.

\subsection{The distributional construction}

Fix an integer horizon $N\ge2$, and let
\[
 m:=2N(N-1).
\]
Let
\[
 \mathcal X:=\{x_1,\ldots,x_m\}
\]
be the set of possible demand locations. For every set
$I\subseteq[m]$ with $|I|=N$, introduce a distinct candidate facility
$y_I$. Let
\[
 \mathcal Y
 :=
 \{y_I:I\subseteq[m],\ |I|=N\}.
\]
The metric space of the lower-bound instance is
$\mathcal X\cup\mathcal Y$, the candidate-facility set is $\mathcal Y$, and
every candidate facility has opening cost one.

Define a distance function on $\mathcal X\cup\mathcal Y$ by
\begin{align*}
 d(x_i,x_j)&=\frac23 &&(i\ne j),\\
 d(y_I,y_{I'})&=\frac23 &&(I\ne I'),\\
 d(x_i,y_I)&=
 \begin{cases}
  \frac13,&i\in I,\\
  1,&i\notin I,
 \end{cases}
\end{align*}
with $d(z,z)=0$ and symmetry.

\begin{lemma}[Metric validity]\label{lem:lb-metric}
The function $d$ defined above is a metric on
$\mathcal X\cup\mathcal Y$.
\end{lemma}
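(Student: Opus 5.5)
The plan is to check the metric axioms directly.  Identity of indiscernibles and symmetry hold by definition, since every off-diagonal distance is one of $\tfrac13,\tfrac23,1$ and hence positive.  What remains is the triangle inequality $d(u,w)\le d(u,v)+d(v,w)$ for distinct $u,v,w$; if two of the points coincide it is trivial.

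The main observation is a range argument.  Every distance between distinct points lies in $[\tfrac13,1]$, so the right-hand side is at least $\tfrac23$.  Hence the inequality can only fail when $d(u,w)=1$ and the right-hand side is strictly less than $1$.  The only way two distances from $\{\tfrac13,\tfrac23,1\}$ can sum below $1$ is if both equal $\tfrac13$.  So it suffices to show that whenever $d(u,v)=d(v,w)=\tfrac13$, we have $d(u,w)\le\tfrac23$.

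Distance $\tfrac13$ occurs only between some $x_i$ and some $y_I$ with $i\in I$, so the middle point $v$ determines the types of $u$ and $w$.
\begin{itemize}
\item If $v=y_I$, then $u=x_i$ and $w=x_j$ with $i\ne j$, so $d(u,w)=\tfrac23$.
\item If $v=x_i$, then $u=y_I$ and $w=y_{I'}$ with $I\ne I'$, so $d(u,w)=\tfrac23$.
\end{itemize}
In both cases the triangle inequality holds, which finishes the proof.

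There is no real obstacle here.  The only care needed is to organize the case analysis by first using the value range $\{\tfrac13,\tfrac23,1\}$, rather than enumerating all type patterns of the triple $(u,v,w)$.
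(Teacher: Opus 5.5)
Your proof is correct, and it organizes the case analysis differently from the paper. The paper enumerates triangles by point type. Three demand locations, or three facilities, form an equilateral triangle of side $\tfrac23$. Two demand locations with one facility, or one demand location with two facilities, have side-length triples $(\tfrac13,\tfrac13,\tfrac23)$, $(\tfrac13,\tfrac23,1)$, or $(\tfrac23,1,1)$, depending on index membership, and the paper checks each triple.

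You instead start from the value set $\{\tfrac13,\tfrac23,1\}$. From it you show that the inequality can fail only when both legs equal $\tfrac13$ and the third side equals $1$. You then observe that two $\tfrac13$-legs through a common vertex force the third side to be $\tfrac23$.

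Your reduction is shorter and isolates the one structural fact the construction needs: a shared $\tfrac13$-neighbor forces distance $\tfrac23$, never $1$. The paper's enumeration is longer but lists every possible side-length triple explicitly.
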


\begin{proof}
Positivity and symmetry are immediate. It remains to verify the triangle
inequality. A triangle consisting only of demand locations, or only of facility
locations, is equilateral with side length $2/3$.

Consider two demand locations $x_i,x_j$ and one facility $y_I$. If $i,j\in I$, the
three side lengths are $1/3,1/3,2/3$. If exactly one of $i,j$ lies in $I$,
the lengths are $1/3,2/3,1$. If neither lies in $I$, the lengths are
$2/3,1,1$. Each case satisfies the triangle inequality.

For a triangle consisting of one demand location $x_i$ and two facilities
$y_I,y_{I'}$, the possible side lengths are again
\[
 \left(\frac13,\frac13,\frac23\right),
 \qquad
 \left(\frac13,1,\frac23\right),
 \qquad
 \left(1,1,\frac23\right),
\]
according to whether $i$ belongs to both, exactly one, or neither of
$I$ and $I'$. Thus the triangle inequality holds in every case.
\end{proof}

\paragraph{Demand distribution and offline optimum.}

Let the $N$ demand points be drawn independently and uniformly from
$\mathcal X$. The online algorithm is given the entire metric, the candidate
set, the opening costs, the horizon, and the demand distribution.

\begin{lemma}[Exact offline cost]\label{lem:lb-opt}
For every realization of the $N$ demand points,
\[
 \OPT=1+\frac N3.
\]
\end{lemma}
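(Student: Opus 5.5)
We prove $\OPT=1+N/3$ by exhibiting a solution of that cost and a matching lower bound on every feasible solution. Let $S\subseteq[m]$ be the set of distinct indices realized by the $N$ demand points, so $1\le|S|\le N$.

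\emph{Upper bound.} Since $m=2N(N-1)\ge N$, we can choose an $N$-set $I\supseteq S$. Opening only $y_I$ costs $1$, and every demand point is at distance $1/3$ from $y_I$. The total cost is therefore $1+N/3$.

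\emph{Lower bound.} Fix any feasible solution, that is, a nonempty set $O\subseteq\mathcal Y$ with opening cost $|O|$. Every demand-to-facility distance is either $1/3$ or $1$, so each demand point pays at least $1/3$. If $|O|=1$, the cost is at least $1+N/3$. If $|O|\ge2$, the cost is at least $|O|+N/3\ge 2+N/3>1+N/3$. Hence every solution costs at least $1+N/3$, and equality holds for the single tailored facility.

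There is no real obstacle here. The only points to check are that an $N$-superset of $S$ exists, which holds because $|S|\le N\le m$, and that $1/3$ is the minimum demand-to-facility distance. Repeated demand locations are harmless, since each copy is counted separately and all copies sit at distance $1/3$ from $y_I$.
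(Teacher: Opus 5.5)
Your proof is correct and follows the paper's argument essentially verbatim: you extend the realized support $S$ to an $N$-set $I$ and open $y_I$ for the upper bound, and for the lower bound you use that any feasible solution opens at least one unit-cost facility while every demand-to-facility distance is at least $1/3$. The case split on $|O|$ is unnecessary, since $|O|+N/3\ge 1+N/3$ holds directly, but it does no harm.
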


\begin{proof}
Let $S\subseteq[m]$ be the set of distinct indices appearing in the realized
demand-point sequence. Since $|S|\le N$ and $m\ge N$, extend $S$ to a set
$I\subseteq[m]$ of size exactly $N$. Opening $y_I$ costs one and connects
every demand point at distance $1/3$. Hence
\[
 \OPT\le1+\frac N3.
\]

Conversely, every feasible solution must open at least one candidate facility,
and the distance from every demand location to every candidate facility is at
least $1/3$. Hence every feasible solution has cost at least
\[
 1+\frac N3.
\]
The two bounds prove the claim.
\end{proof}

\subsection{The online cost}

Fix an arbitrary randomized online algorithm. All expectations below include
both the i.i.d.\ demand points and the internal randomness of the algorithm.

A demand point at location $x_i$ is called \emph{covered} at its service time if,
after the algorithm has made its opening decisions in that round, some open
facility $y_I$ satisfies $i\in I$. Let
\[
 K:=\text{the number of distinct facilities opened},
 \qquad
 C:=\text{the number of covered rounds}.
\]
A covered demand point has connection cost $1/3$, whereas an uncovered demand point has
connection cost $1$. Since every facility has opening cost one, we have the
pointwise identity
\begin{equation}\label{eq:lb-cost-identity}
 \ALG=N+K-\frac23C.
\end{equation}

\begin{lemma}[Expected coverage per opening]\label{lem:lb-coverage}
For every randomized online algorithm,
\[
 \E[C]\le\frac32\E[K].
\]
\end{lemma}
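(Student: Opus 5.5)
The plan is to bound the covered rounds by splitting them according to \emph{why} each round is covered, and to charge each covered round to a facility. Let $\mathcal H_{t-1}$ denote the history before round $t$: the first $t-1$ demand locations together with the algorithm's internal randomness used so far. It determines the open set $O_{t-1}$. A round $t$ can be covered in only two ways.
\begin{enumerate}[label=(\roman*)]
\item The algorithm opens at least one new facility in round $t$. Each such round contains a distinct opening, so the number of these rounds is at most $K$ pointwise.
\item No facility is opened in round $t$, and the location $x_t$ lies in $\bigcup_{y_I\in O_{t-1}}I$.
\end{enumerate}
This gives the pointwise bound
\[
 C\le K+\sum_{t=1}^N\1\{x_t\text{ is covered by }O_{t-1}\}.
\]

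For the second term, the key fact is that in the i.i.d.\ model the location $x_t$ is uniform on $\mathcal X$ and independent of $\mathcal H_{t-1}$. This holds even though the algorithm may be adaptive and randomized, because its internal coins are independent of future demands. Each facility $y_I$ covers exactly $|I|=N$ of the $m$ locations. A union bound over the open facilities therefore gives
\[
 \Prb\bigl(x_t\text{ covered by }O_{t-1}\mid\mathcal H_{t-1}\bigr)\le\frac{N|O_{t-1}|}{m}.
\]
Taking expectations and summing over $t$ yields
\[
 \E[C]\le\E[K]+\frac Nm\,\E\Bigl[\sum_{t=1}^N|O_{t-1}|\Bigr].
\]

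The main step is to exchange the order of summation in $\sum_t|O_{t-1}|$. A facility opened in round $s$ belongs to $O_{t-1}$ exactly for $t=s+1,\dots,N$, so it is counted $N-s\le N-1$ times. Hence pointwise
\[
 \sum_{t=1}^N|O_{t-1}|\le(N-1)K,
\]
and therefore
\[
 \frac Nm\,\E\Bigl[\sum_{t=1}^N|O_{t-1}|\Bigr]\le\frac{N(N-1)}{2N(N-1)}\,\E[K]=\frac12\,\E[K].
\]
This is exactly where the choice $m=2N(N-1)$ enters. Combining the two bounds gives $\E[C]\le\frac32\E[K]$.

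The only delicate point is the conditional-independence step. It uses the i.i.d.\ model and not the random-order model: the conditional law of $x_t$ given the past must remain uniform on $\mathcal X$, whatever the algorithm has done. This is why the lemma is stated for the distributional construction. The transfer to a fixed random-order multiset is deferred to the later conditioning argument mentioned in the overview.
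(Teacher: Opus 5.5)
Your proof is correct and is essentially the paper's argument. Both charge at most one unit per distinct opening, and both bound the expected future hits per opened facility by $(N-1)N/m=\tfrac12$, using that each later i.i.d.\ demand falls in a fixed $N$-subset with probability $N/m$ independently of the history. The only difference is bookkeeping: you bound per round via a union bound over $O_{t-1}$ and then exchange the order of summation, whereas the paper assigns each covered request a credit to a single facility and bounds credits per first-opening event $A_{t,I}$, so your version avoids the credit tie-breaking.
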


\begin{proof}
Assign each covered demand point one coverage credit, and assign that credit to one
facility that is open at the time the demand point is served and contains the
demand point's index. If several such facilities exist, use any fixed tie-breaking
rule. Thus every covered demand point contributes exactly one credit.

For every round $t$ and every facility $y_I$, let $A_{t,I}$ be the event that
$y_I$ is first opened in round $t$. Let $B_I$ denote the total number of
credits assigned to $y_I$, and define
\[
 Z_{t,I}:=\1_{A_{t,I}}B_I.
\]
Each opened facility has a unique first-opening round, and unopened facilities
receive no credit. Therefore, pointwise,
\[
 C=\sum_{t=1}^{N}\sum_I Z_{t,I}.
\]

Suppose that $A_{t,I}$ occurs. The facility $y_I$ can receive at most one
credit from the demand point served in round $t$. Any other credit assigned to
$y_I$ must come from a future demand point whose index lies in $I$. Hence
\[
 Z_{t,I}
 \le
 \1_{A_{t,I}}
 \left(
  1+\sum_{s=t+1}^{N}
  \1\{v_s\in\{x_i:i\in I\}\}
 \right).
\]

The event $A_{t,I}$ is measurable with respect to the history through the end
of round $t$. Conditional on that history, the future demand points remain
independent and uniform on the $m$ possible demand locations. Since $|I|=N$,
each future demand point lies in $I$ with probability $N/m$. Therefore,
\begin{align*}
 \E[Z_{t,I}]
 &\le
 \Prb(A_{t,I})
 \left(
  1+(N-t)\frac{N}{m}
 \right)\\
 &\le
 \Prb(A_{t,I})
 \left(
  1+\frac{N(N-1)}{m}
 \right)\\
 &=
 \frac32\Prb(A_{t,I}),
\end{align*}
where the last equality uses
\[
 m=2N(N-1).
\]

For each fixed facility $y_I$, the events
$A_{1,I},\ldots,A_{N,I}$ are mutually exclusive, and their union is the event
that $y_I$ is ever opened. Therefore, pointwise,
\[
 K=\sum_{t=1}^{N}\sum_I\1_{A_{t,I}},
\]
and hence
\[
 \sum_{t=1}^{N}\sum_I\Prb(A_{t,I})=\E[K].
\]
Summing the credit bounds over all rounds and facilities now gives
\begin{align*}
 \E[C]
 &=
 \sum_{t=1}^{N}\sum_I\E[Z_{t,I}]\\
 &\le
 \frac32\sum_{t=1}^{N}\sum_I\Prb(A_{t,I})\\
 &=
 \frac32\E[K].
\end{align*}
The argument permits the algorithm to open several facilities in the same
round: the current demand point is credited to only one of them, while every
distinct opened facility is counted exactly once through its unique
first-opening event.
\end{proof}

\begin{lemma}[Distributional online cost]\label{lem:lb-online}
Every randomized online algorithm satisfies
\[
 \E[\ALG]\ge N
\]
under the i.i.d.\ demand distribution above.
\end{lemma}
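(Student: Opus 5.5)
The bound follows by taking expectations in the pointwise cost identity \eqref{eq:lb-cost-identity} and then invoking \cref{lem:lb-coverage}. The identity holds for every realization of the demand sequence and of the algorithm's internal randomness. All quantities are bounded: $K$ is at most the finite number of candidate facilities, and $C\le N$. Linearity of expectation therefore gives
\[
 \E[\ALG]=N+\E[K]-\frac23\E[C].
\]

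Applying \cref{lem:lb-coverage}, $\E[C]\le\frac32\E[K]$, so $\frac23\E[C]\le\E[K]$. Substituting,
\[
 \E[\ALG]\ge N+\E[K]-\E[K]=N.
\]
The subtraction is legitimate because $\E[K]$ is finite. So the $\E[K]$ terms cancel exactly, which is the reason for the choice $m=2N(N-1)$ in the construction.

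I expect no real obstacle here; all of the substance is in \cref{lem:lb-coverage}. Before writing, I would recheck only one point: that \eqref{eq:lb-cost-identity} really is pointwise, even for algorithms that open facilities superfluously or connect a demand suboptimally. The claim to confirm is that each round's connection cost equals $1/3$ exactly when the demand is covered and $1$ otherwise. This holds because every facility is at distance $1/3$ or $1$ from each demand location, and a rational algorithm connects to a nearest open facility. If some algorithm connected suboptimally, its cost could only exceed the identity, so the inequality $\ALG\ge N+K-\frac23 C$ still holds pointwise, and that inequality is all the argument needs.
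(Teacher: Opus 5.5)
Your proof is correct and is exactly the paper's argument. You take expectations in the pointwise identity \eqref{eq:lb-cost-identity} and substitute $\E[C]\le\frac32\E[K]$ from \cref{lem:lb-coverage}, so the $\E[K]$ terms cancel; your side remark on suboptimal connections is harmless but unnecessary, because the model charges $d(v_t,O_t)$ by definition.
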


\begin{proof}
Taking expectations in \eqref{eq:lb-cost-identity} and applying
\cref{lem:lb-coverage}, we obtain
\begin{align*}
 \E[\ALG]
 &=
 N+\E[K]-\frac23\E[C]\\
 &\ge
 N+\E[K]-\frac23\cdot\frac32\E[K]\\
 &=
 N.
\end{align*}
\end{proof}

\subsection{From i.i.d. arrivals to random order}

\begin{lemma}[Conditioning on the demand-point multiset]\label{lem:iid-to-ro}
Let an ordered sequence of demand points be drawn i.i.d.\ from any distribution,
and let $U$ be the resulting demand-point multiset. Conditional on $U$, the ordered
sequence is a uniformly random ordering of the demand-point copies in $U$.

Consequently, if an online algorithm is $c$-competitive for every fixed
demand-point multiset under uniformly random arrival order, then it is also
$c$-competitive in the corresponding i.i.d.\ demand experiment.
\end{lemma}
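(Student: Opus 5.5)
The first claim is a direct computation with the product form of the i.i.d.\ law. Let the distribution on locations be $\pi$, and let $(v_1,\ldots,v_N)$ be the i.i.d.\ sequence. The probability of any particular ordered sequence $(s_1,\ldots,s_N)$ is $\prod_{t}\pi(s_t)$. This product depends only on the multiset $\{s_1,\ldots,s_N\}$, not on the order. Hence, conditional on the realized multiset $U$, every ordered sequence with multiset $U$ has the same probability. So the conditional law is uniform over the distinct arrangements of $U$.

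To match the labeled-copy convention of \cref{sec:model}, we attach to the copies of each repeated location a uniformly random labeling. This is independent of everything else and invisible to the algorithm. A uniformly random distinct arrangement combined with a uniform labeling of identical copies gives a uniformly random permutation of the $N$ labeled copies. The count works out because each distinct arrangement corresponds to exactly $\prod_x U(x)!$ labeled permutations, where $U(x)$ is the multiplicity of $x$. The algorithm never sees the labels, so its behavior is a function of the unlabeled sequence and its internal coins. The labeling therefore changes no cost.

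For the consequence, fix an algorithm that is $c$-competitive for every fixed multiset under random order, so $\E[\ALG\mid U]\le c\,\OPT(U)$ for every $U$. Here $\E[\cdot\mid U]$ is taken over the uniformly random order of $U$ and the internal coins. The internal randomness is independent of the input, so by the first part the conditional experiment given $U$ is exactly the random-order experiment on $U$. We then apply the tower property, $\E[\ALG]=\E\bigl[\E[\ALG\mid U]\bigr]\le c\,\E[\OPT(U)]$. This is the appropriate sense of $c$-competitiveness in the i.i.d.\ experiment, namely $\E[\ALG]\le c\,\E[\OPT]$. In the lower-bound application $\OPT$ is in fact deterministic by \cref{lem:lb-opt}.

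The only delicate point is the bookkeeping between unlabeled sequences and permutations of labeled copies when $U$ has repeated locations. I expect that to be the main (and mild) obstacle, and the uniform auxiliary labeling described above resolves it. Finiteness of the support is not needed for this argument. It is only convenient for writing conditional probabilities as ratios: for a general distribution one conditions on the multiset via exchangeability, since the joint law of an i.i.d.\ sequence is invariant under permutations of coordinates.
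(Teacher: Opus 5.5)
Your proposal is correct and follows the paper's proof. The product form of the i.i.d.\ law makes every ordering with the realized multiplicities equally likely, and the tower property then gives $\E[\ALG]\le c\,\E[\OPT(U)]$. Your uniform auxiliary labeling of repeated copies is a valid extra detail: it makes explicit the match with the labeled-copy convention of \cref{sec:model}, which the paper leaves implicit.
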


\begin{proof}
Conditional on the multiplicity of every demand location, all ordered sequences
with those multiplicities have the same product probability. Thus the
conditional order is uniform.

If an algorithm were $c$-competitive for every fixed demand-point multiset under
uniformly random order, then for every realized multiset $U$,
\[
 \E[\ALG\mid U]\le c\,\OPT(U).
\]
Taking expectation over $U$ would give
\[
 \E[\ALG]\le c\,\E[\OPT(U)]
\]
in the i.i.d.\ experiment.
\end{proof}

\begin{proof}[Proof of \cref{thm:lower-three}]
Under the i.i.d.\ demand distribution, \cref{lem:lb-online} gives
\[
 \E[\ALG]\ge N,
\]
whereas \cref{lem:lb-opt} gives
\[
 \OPT(U)=1+\frac N3
\]
for every realized demand-point multiset $U$. Hence
\[
 \frac{\E[\ALG]}{\E[\OPT]}
 \ge
 \frac{N}{1+N/3}
 =
 \frac{3N}{N+3}.
\]

Suppose, for contradiction, that every realized demand-point multiset $U$ satisfied
\[
 \frac{\E[\ALG\mid U]}{\OPT(U)}
 <
 \frac{3N}{N+3},
\]
where the conditional expectation is over the uniformly random ordering of
the copies in $U$ and the algorithm's internal randomness. By
\cref{lem:iid-to-ro}, averaging this strict inequality over $U$ would give
\[
 \E[\ALG]<N,
\]
contradicting \cref{lem:lb-online}. Therefore at least one demand-point multiset has
the claimed random-order ratio. Letting $N\to\infty$ proves the asymptotic
lower bound.
\end{proof}

\subsection{The full-space nonuniform model}

The preceding construction uses a restricted candidate set: the demand
locations $x_i$ cannot themselves be opened.  We now show that this
restriction is not responsible for the factor-three lower bound.  The same
metric yields the same lower bound when every point is a feasible facility,
by assigning nonuniform opening costs.

\begin{corollary}[Full-space nonuniform lower bound]
\label{cor:fullspace-lower}
For every integer $N\ge2$ and every randomized online algorithm, there is a
finite metric space in which every metric point is a feasible facility
location, together with nonuniform opening costs and a demand-point multiset
of size $N$, such that
\[
 \frac{\E[\ALG]}{\OPT}
 \ge
 \frac{3N}{N+3}.
\]
Consequently, the asymptotic random-order competitive ratio of classical
full-space facility location with nonuniform opening costs is at least $3$.
\end{corollary}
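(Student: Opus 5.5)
We will reuse the metric of \cref{lem:lb-metric} unchanged, make every point feasible, and give the demand locations $x_i$ a large enough opening cost that opening them never helps an online algorithm. Keep opening cost $1$ at every $y_I$ and assign opening cost $f_{x_i}:=1$ as well; as a fallback, if the accounting below needs it, use $f_{x_i}:=2$. The offline optimum does not change: every solution opens at least one facility of cost at least one. A demand point served at its own location $x_i$ pays $0$ connection, and every other demand point pays at least $1/3$. We will show directly that opening some $x_i$'s is never better than opening $y_I$'s. A cleaner route is to make the argument online, through a replacement reduction. Given any online algorithm $\mathcal A$ in the full-space instance, we build an online algorithm $\mathcal A'$ in the restricted instance of \cref{thm:lower-three}. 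Whenever $\mathcal A$ opens $x_i$, $\mathcal A'$ instead opens some candidate $y_I$ with $i\in I$, chosen online. The goal is to show that this replacement does not increase cost pathwise, or at least in expectation.

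For the pathwise comparison, $\mathcal A'$ keeps the set of original $y$'s opened by $\mathcal A$, plus one replacement $y_{I(i)}$ for each opened $x_i$. Opening costs match when $f_{x_i}=1$. For connection costs, a point $x_j$ connected by $\mathcal A$ to an open $y$ keeps that connection in $\mathcal A'$. A point connected by $\mathcal A$ to an opened $x_i$ pays $0$ if $j=i$ and $2/3$ otherwise. In $\mathcal A'$ it can connect to $y_{I(i)}$ and pay $1/3$ if $j\in I(i)$, else $1$. Pathwise domination therefore fails in general: $0$ becomes $1/3$, and $2/3$ may become $1$. So a pure replacement needs larger costs on the $x_i$. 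Take $f_{x_i}:=2$, or more precisely enough extra cost to absorb the losses.

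Alternatively, we rerun the accounting of \cref{lem:lb-online} directly in the full-space metric, with $f_{x_i}=c$. Let $K$ count opened $y$'s and $K_x$ count opened $x$'s. A demand point pays $0$ if its own location is open, $1/3$ if it is covered by a $y$, $2/3$ if only some other $x$ is open, and $1$ otherwise. An opened $x_i$ saves at most $1$ on its opening round. On each later round it saves at most $1/3$ by reducing $1$ to $2/3$. It saves at most $1$ more on a future repeat of $x_i$, which happens with probability at most $N/m$ per round. The expected saving is therefore at most $1+(N-1)/3+(N-1)/m$. With $c$ of order $N/3+2$, the cost of opening $x_i$ exceeds its expected saving. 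This is a martingale credit argument in exactly the style of \cref{lem:lb-coverage}, charging savings to the first-opening event. Combined with the coverage credits for the $y$'s, which give at most $3/2$ credits of value $2/3$ each, this yields $\E[\ALG]\ge N$. We also need $\OPT=1+N/3$. This still holds, because opening any $x_i$ costs $c\ge N/3+1$, which already exceeds $\OPT$. \Cref{lem:iid-to-ro} then transfers the bound to random order exactly as before. Costs are nonuniform since $c\ne1$.

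The main obstacle is overlapping savings. When an $x$ and a $y$ both benefit a round, we must credit the saving to one facility without double counting. We will handle this by a fixed priority rule: credit the saving from the uncovered baseline cost $1$ to the best open facility. Each opened $x$ is then bounded by its first-opening-round expectation, as in \cref{lem:lb-coverage}.
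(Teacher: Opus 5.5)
Your final route, the direct savings accounting with $f_{x_i}=c\ge 1+N/3$, is correct. It is not the paper's proof, however, and your first two paragraphs contain claims you should drop.

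\textbf{The discarded claims.} With $f_{x_i}=1$, or with $f_{x_i}=2$ and $N\ge4$, the optimum does change. If all $N$ demands land on one location $x_i$, opening $x_i$ alone costs $f_{x_i}<1+N/3$, so your first paragraph does not give $\OPT=1+N/3$. With $f_{x_i}=2$ the replacement of your second paragraph also still fails, because the per-round loss of $1/3$ can accumulate to $N/3$ while each replaced $x_i$ saves only $1$.

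\textbf{The paper's route.} The paper's proof is exactly your second paragraph carried through with $f_{x_i}=1+N/3$. Since $d(x_j,y_{I(i)})\le d(x_j,x_i)+\frac13$, the simulation $\mathcal A'$ loses at most $1/3$ per round. Hence it loses at most $N/3$ over the whole run, \emph{no matter how many demand locations $\mathcal A$ opens}. A single replaced $x_i$ already saves $N/3$ in opening cost. So $\operatorname{cost}(\mathcal A')\le\operatorname{cost}(\mathcal A)$ pathwise, and \cref{thm:lower-three} is applied to $\mathcal A'$ as a black box.

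\textbf{Your route.} You instead reopen \cref{lem:lb-coverage}. Write $\ALG=N+K+cK_x-\sum_t\bigl(1-d(v_t,O_t)\bigr)$ and credit each round's saving to one nearest open facility. Then a $y_I$ earns at most $\frac23\cdot\frac32=1$ in expectation per opening. An $x_i$ earns at most $1+(N-1)\bigl(\frac13+\frac{2}{3m}\bigr)=\frac N3+\frac23+\frac1{3N}$. Here a repeat of $x_i$ has probability $1/m$ per round; your $N/m$ is valid but loose. So already $c=1+N/3$ gives $\E[\ALG]\ge N$, and $c\ge 1+N/3$ keeps $\OPT=1+N/3$ on every realization. \Cref{lem:iid-to-ro} then finishes exactly as in \cref{thm:lower-three}.

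\textbf{Comparison.} The paper's reduction is shorter and purely pathwise. It reuses the restricted-model bound, including its i.i.d.-to-random-order step, unchanged, so it would transfer any lower bound proved for that instance. Your accounting is self-contained and shows quantitatively why a demand-location facility priced near $1+N/3$ never pays for itself in the i.i.d.\ experiment. The cost is that you must redo the credit argument, with a nearest-facility priority rule to avoid double counting, and it depends on the specific i.i.d.\ structure. Both routes need $f_{x_i}\approx 1+N/3$ in the end, since that is what pins $\OPT$ at $1+N/3$.
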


\begin{proof}
Use the same metric
\[
 \mathcal M:=\mathcal X\cup\mathcal Y
\]
from the candidate-site construction, but now allow every point of
$\mathcal M$ to be opened as a facility.  Set
\[
 f_{y_I}:=1
 \qquad (y_I\in\mathcal Y),
 \qquad
 f_{x_i}:=1+\frac N3
 \qquad (x_i\in\mathcal X).
\]
For every realization of the $N$ demand points, a suitable subset facility
$y_I$ still gives cost $1+N/3$.  Conversely, a solution that opens a demand
location already pays at least $1+N/3$ in opening cost, while a solution using
only subset facilities pays at least one unit of opening cost and at least
$1/3$ for each assignment.  Hence
\[
 \OPT=1+\frac N3
\]
for every realization.

Fix an arbitrary online algorithm $\mathcal A$ for the full-space instance.
For every $i\in[m]$, fix an $N$-element set
$I(i)\subseteq[m]$ containing $i$, and define a replacement map
\[
 \varphi(y_I):=y_I,
 \qquad
 \varphi(x_i):=y_{I(i)}.
\]
We construct an online algorithm $\mathcal A'$ for the restricted
candidate-site instance.

Algorithm $\mathcal A'$ maintains an internal simulation of the virtual
full-space state of $\mathcal A$, in addition to its own actual restricted
open set.  At every round, it feeds the arriving request to the simulation of
$\mathcal A$, using the same random bits when $\mathcal A$ is randomized.
Whenever the simulated algorithm opens a facility $z$, algorithm
$\mathcal A'$ opens $\varphi(z)$ in the same round, unless that replacement
facility is already open.  Thus the simulation depends only on the revealed
request prefix and is itself online.

We compare the two costs pathwise.  For every demand location $x_j$ and every
$i\in[m]$,
\begin{equation}
 d\bigl(x_j,\varphi(x_i)\bigr)
 =
 d\bigl(x_j,y_{I(i)}\bigr)
 \le
 d(x_j,x_i)+\frac13.
 \label{eq:replacement-distance}
\end{equation}
Indeed, equality holds when $j=i$ and when $j\notin I(i)$, while the remaining
case gives a smaller left-hand side.  Moreover, if $z=y_I$ is an original
subset facility, then
\[
 d\bigl(x_j,\varphi(z)\bigr)=d(x_j,z).
\]

Consider any round and let $z$ be the virtual facility to which
$\mathcal A$ assigns the current request.  The replacement $\varphi(z)$ has
already been opened by $\mathcal A'$ no later than that round.  Therefore
$\mathcal A'$ can assign the request to $\varphi(z)$, and assigning it to its
nearest actual open facility can only be cheaper.  Its connection cost in
that round is consequently at most the connection cost of $\mathcal A$ plus
$1/3$.  If $\mathcal A$ never opens a demand location, every facility it opens
is an original subset facility and is reproduced unchanged by
$\mathcal A'$, so no connection-cost increase occurs.
Otherwise, summing over the $N$ rounds shows that the total connection cost
of $\mathcal A'$ exceeds that of $\mathcal A$ by at most $N/3$.

Let $K_X$ be the number of distinct demand locations opened by
$\mathcal A$.  Each such facility has cost $1+N/3$ and is replaced by a
facility of cost one.  Duplicate replacements are opened only once, so the
total opening cost decreases by at least $K_X\frac N3$.
Openings of original subset facilities are reproduced at the same cost, or at
smaller cost if the corresponding facility is already open.  Hence, when
$K_X\ge1$,
\[
 \operatorname{cost}(\mathcal A')
 \le
 \operatorname{cost}(\mathcal A)
 +\frac N3-K_X\frac N3
 \le
 \operatorname{cost}(\mathcal A).
\]
When $K_X=0$, neither the opening cost nor the connection cost increases, so
the same inequality holds.  The comparison is pathwise and therefore also
holds after taking expectations over the random order and the internal
randomness of $\mathcal A$.

Apply \cref{thm:lower-three} to $\mathcal A'$.  It provides a fixed
random-order demand multiset on which
\[
 \E[\operatorname{cost}(\mathcal A')]
 \ge
 N.
\]
On the same multiset,
$\operatorname{cost}(\mathcal A)\ge
\operatorname{cost}(\mathcal A')$ pathwise and
$\OPT=1+N/3$.  The claimed ratio follows.
\end{proof}

\begin{corollary}[Uniform versus nonuniform opening costs]
\label{cor:separation}
Let $\operatorname{CR}^{\mathrm{full}}_{\mathrm{unif}}$ and
$\operatorname{CR}^{\mathrm{full}}_{\mathrm{nonunif}}$ denote the
optimal random-order competitive ratios, in the known-horizon model,
of classical full-space metric facility location with uniform and
nonuniform opening costs, respectively.  Then
\[
 \operatorname{CR}^{\mathrm{full}}_{\mathrm{unif}}
 <2.42<3
 \le
 \operatorname{CR}^{\mathrm{full}}_{\mathrm{nonunif}}.
\]
In particular, the two models have strictly different optimal competitive
ratios.
\end{corollary}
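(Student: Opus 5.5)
The corollary chains three inequalities, and the plan is to justify each link separately.

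\emph{Left link.} The inequality $\operatorname{CR}^{\mathrm{full}}_{\mathrm{unif}}<2.42$ is imported from Huang and Jiang~\cite{HuangJiang2026}. Their deterministic rank-based algorithm for the full-space uniform-cost model has ratio below $2.42$. It uses the normalized rank and hence the horizon, which is exactly the known-horizon information pattern fixed in the statement. I will check one point: the optimal competitive ratio is an infimum over algorithms of their worst-case ratio. Any single algorithm achieving a ratio below $2.42$ therefore bounds $\operatorname{CR}^{\mathrm{full}}_{\mathrm{unif}}$ strictly below $2.42$. The middle inequality $2.42<3$ is trivial.

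\emph{Right link.} The inequality $3\le\operatorname{CR}^{\mathrm{full}}_{\mathrm{nonunif}}$ follows from \cref{cor:fullspace-lower}. Suppose some (possibly randomized, horizon-aware) algorithm were $c$-competitive on all finite full-space nonuniform instances with $c<3$. Choose $N$ large enough that $3N/(N+3)>c$. For that $N$, \cref{cor:fullspace-lower} produces a finite full-space instance with nonuniform costs and a multiset of size $N$ on which the algorithm's ratio is at least $3N/(N+3)>c$, a contradiction. Hence every algorithm has ratio at least $3$, and so does the infimum. The lower bound of \cref{thm:lower-three}, and therefore of the corollary, holds even when the horizon is revealed, as noted there. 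So it applies within the known-horizon model specified in the statement.

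\emph{Main obstacle.} The only delicate point is matching the models. The full-space uniform model must be a special case in which the $2.42$ result is stated: every point is feasible at common cost, and the horizon is known. The lower-bound instances of \cref{cor:fullspace-lower} must also be legitimate full-space nonuniform instances: every point is feasible and all costs are finite and positive, namely $1$ and $1+N/3$. Once these are verified, chaining gives the strict separation $\operatorname{CR}^{\mathrm{full}}_{\mathrm{unif}}<\operatorname{CR}^{\mathrm{full}}_{\mathrm{nonunif}}$.
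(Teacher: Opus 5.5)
Your proposal is correct and follows the paper's proof. The left link is imported from Huang and Jiang's horizon-aware algorithm with ratio below $2.42$, and the right link comes from \cref{cor:fullspace-lower} by letting $N\to\infty$; your infimum-and-contradiction bookkeeping and your check that the instances with costs $1$ and $1+N/3$ are legitimate full-space instances spell out what the paper's two-line proof leaves implicit.
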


\begin{proof}
Huang and Jiang~\cite{HuangJiang2026} give a deterministic algorithm with
competitive ratio below $2.42$ for the classical full-space uniform-cost
model.  The nonuniform lower bound follows from
\cref{cor:fullspace-lower} by letting $N\to\infty$.
\end{proof}

\section{Discussion}
\label{sec:discussion}
We obtain a $4.2674$ upper bound for random-order online facility location with
an arbitrary finite candidate set and nonuniform opening costs.  Our explicit
candidate-site notation should not be interpreted as a new restriction absent
from the classical literature.  Meyerson's nonuniform node-cost formulation
implicitly encodes forbidden sites through infinite opening costs.  Our result
improves his factor $33$ for that general formulation, and it also improves the
factor $8$ proved by Li et al. for the explicitly stated uniform-cost
candidate-site problem.  Penalized-Distance RankCut compares all candidates
through the single objective $d(x,y)+\lambda_t f_y$ and therefore avoids
cost discretization and separate opening decisions at different scales.  For
unit costs, \cref{prop:unit-specialization} shows that this rule reduces
exactly to the shifted-rank Candidate Marginal DistCut cutoff.
Appendix~\ref{app:uniform-candidates} sharpens the analysis in that setting
by using the closely related zero-start normalization, obtaining the
supplementary $3.2805$ guarantee.  Its class-optimality proof adapts the
TimeDist framework of Huang and Jiang and applies only to rules based on rank
and marginal improvement, not to unrestricted candidate-aware algorithms.
The candidate-specific change is the residual connection distance left after
opening a nearest feasible site.  We therefore treat the sharper unit-cost
bound as a supplementary specialization rather than a separate main
contribution.  Taking $\F=M$ gives the classical finite full-space model
directly. 

The upper-bound algorithm uses the known horizon explicitly through the
normalized rank $q_t=t/n$.  Extending the same guarantee to an unknown horizon
remains open: a direct doubling argument does not preserve the global rank
clock on which the analysis depends.

Our lower bound has two complementary interpretations.  The first construction
shows a factor of three with uniform costs on an explicit restricted candidate
set.  The replacement argument in \cref{cor:fullspace-lower} then removes the
facility restriction altogether: the same lower bound persists when every
metric point is feasible and all opening costs are finite, once the costs may
be nonuniform.  Thus the factor-three obstruction is not an artifact of the
candidate-site presentation or of using infinite costs to forbid demand
locations.  Combined with the sub-$2.42$ uniform-cost algorithm of Huang and
Jiang, it yields a strict separation between the optimal competitive ratios
of the classical full-space uniform and nonuniform models.

The radius coefficient $3+4\mu$ remains the main loss in the upper-bound
analysis.  The proof routes every later cluster point through the first one
and pays a separate high-cost excess at that first arrival.  Improving the
constant appears to require a tighter coupling between these two charges, or
an online mechanism that accumulates support for an expensive facility across
several arrivals rather than certifying it from a single request.

\section*{Acknowledgments and AI disclosure}

The authors designed the paper's overall organization and narrative.
ChatGPT 5.5 and 5.6 were used to prepare initial drafts within this framework
and to assist in shaping the statement and proof of
\cref{lem:envelope}.  All generated material was reviewed and revised by the
authors.  The authors' initial analysis used independent continuous
$\operatorname{Unif}[0,1]$ clocks and obtained the same competitive-ratio
bounds reported in this paper.  ChatGPT suggested derandomizing this
formulation via deterministic normalized arrival ranks, as well as jointly
bounding the base opening cost and the anchor excess in
\cref{lem:first-base}.  The authors take full responsibility for the paper's
correctness and presentation.
% \section*{Acknowledgments and AI disclosure}

% The authors designed the paper's overall organization and narrative.
% ChatGPT 5.5 and 5.6 were used to prepare initial drafts within this framework
% and to assist in shaping the statement and proof of
% \cref{lem:envelope}.  All generated material was reviewed and revised by the
% authors.  ChatGPT also assisted in identifying the construction underlying
% the $3-o(1)$ lower bound.  It further suggested derandomizing an initial
% formulation based on independent continuous $\operatorname{Unif}[0,1]$ clocks
% via deterministic normalized arrival ranks, and jointly bounding the base
% opening cost and the anchor excess in \cref{lem:first-base}.  The authors take
% full responsibility for the paper's correctness and presentation.

\clearpage
\appendix

\section{Unit-cost prescribed candidates: the zero-start rank cutoff}
\label{app:uniform-candidates}

The main algorithm is already deterministic.  Under unit opening costs,
\cref{prop:unit-specialization} shows that it becomes Candidate Marginal
DistCut with the shifted rank $q_t=t/n$.  For the sharper unit-cost analysis,
this appendix instead uses the zero-start rank
\[
 z_t:=\frac{t-1}{n},
\]
which is the normalization used by Huang and Jiang's full-space
$\mu$-DistCut framework~\cite{HuangJiang2026}.  The two rules differ only by
one rank step, but the distinction matters in the nonuniform model:
\cref{prop:zero-start-unbounded} shows that a zero opening-cost penalty at the
first request can select an arbitrarily expensive facility.  With unit costs,
all candidates have the same opening cost, so that obstruction disappears.

We keep the rank-cutoff steps shared with Huang and Jiang brief.  The points
specific to prescribed candidates are the residual connection distance left
after opening a nearest feasible site, the signed baseline cancellation in
the cluster analysis, and the private-candidate geometry in the class lower
bound.

For a request $x$, current open set $O$, and
\[
 D=d(x,O),
 \qquad
 a_x=d(x,\F),
\]
define
\[
 \Delta(x;D)
 :=
 D-\min\{D,a_x\}
 =
 (D-a_x)^+.
\]
Because $f_y=1$ for every candidate, the choice in
\eqref{eq:penalized-choice} minimizes $d(x,y)$.  Thus the shifted-rank opening
test of Algorithm~\ref{alg:main} is exactly
\[
 \Delta(x;D)\ge
 \min\left\{1,\frac{t}{\mu n}\right\}.
\]
The zero-start rule analyzed below replaces $t/n$ by $(t-1)/n$.  This choice
aligns exactly with Huang and Jiang and supports the tighter single-scale
certificate.

\subsection{A deterministic zero-start rule}

For $\mu\in(0,1]$, put
\[
 \theta_\mu(z):=\min\left\{1,\frac{z}{\mu}\right\}.
\]

\begin{algorithm}[H]
\caption{Zero-Start Candidate Marginal DistCut}\label{alg:uniform-candidate}
\begin{algorithmic}[1]
\State $O_0\gets\varnothing$.
\For{$t=1,\ldots,n$}
  \State Receive $x=v_t$ and set
         $D\gets d(x,O_{t-1})$ and $a\gets d(x,\F)$.
  \State $\Delta\gets D-\min\{D,a\}$ and
         $z_t\gets(t-1)/n$.
  \State $O_t\gets O_{t-1}$.
  \If{$\Delta\ge\theta_\mu(z_t)$}
    \State Choose a nearest candidate $y(x)\in\F$ to $x$.
    \State $O_t\gets O_t\cup\{y(x)\}$.
  \EndIf
  \State Connect $x$ to its nearest facility in $O_t$.
\EndFor
\end{algorithmic}
\end{algorithm}

When $O_{t-1}=\varnothing$, $\Delta=+\infty$, so the first request opens a
facility.  A fixed nearest-candidate tie-breaking rule makes the algorithm
deterministic.

\begin{remark}[Why the appendix retains the zero-start rank]
\label{rem:uniform-clock}
As explained in \cref{rem:continuous-clock}, the positive rank $t/n$ used by
the main algorithm and the zero-start rank $(t-1)/n$ used here are two
deterministic discretizations of a common continuous clock.  The former
prevents the unbounded first-step behavior of
\cref{prop:zero-start-unbounded}.  That obstruction disappears at unit costs,
so we retain the zero-start convention for its sharper first-point certificate
and its direct alignment with Huang and Jiang.

The one-rank difference cannot simply be absorbed by the present certificate.
Replacing $z_T$ by
$q_T=z_T+\frac1n$
increases the failed-cutoff bound by at most $1/(n\mu)$.  Since the resulting
first-request residual is multiplied by a coefficient of order $k$ in a
cluster of size $k$, this produces a possible additional term of order $\frac{k}{n\mu}$,
which need not vanish when $k=\Theta(n)$.  This observation concerns the
certificate and does not imply a pointwise performance ordering between the
two deterministic algorithms.
\end{remark}

Define
\[
 \Phi(\mu):=
 1+\frac{e^{-(1+\mu)}}{\mu}.
\]

\begin{theorem}[Deterministic zero-start unit-cost bound]
\label{thm:uniform-candidate-upper}
For every $\mu\in(0,1]$,
Algorithm~\ref{alg:uniform-candidate} satisfies
\[
 \E[\ALG]
 \le
 \max\{\Phi(\mu),\,3+2\mu\}\OPT.
\]
\end{theorem}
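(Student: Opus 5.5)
We follow the cluster architecture of \cref{sec:analysis}, specialized to unit costs, where the envelope machinery collapses. First we fix an optimal solution with nonempty clusters. Let $C$ be a cluster with facility $c$, $k=|C|$, and $R(C)=\sum_{u\in C}r_u$. Let $p$ be the first copy of $C$, with rank $T$ and zero-start rank $z_T=(T-1)/n$. The goal is the per-cluster bound
\[
 \E[\ALG(C)]\le\Phi(\mu)+(3+2\mu)R(C),
\]
from which the theorem follows by summation, exactly as in \cref{thm:general}.

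The first step is a unit-cost analogue of \cref{lem:monotone}. For fixed $x$ and rank $t$, define the charge $W_x(D,t)=D$ if $(D-a_x)^+<\theta_\mu(z_t)$, and $W_x(D,t)=1+a_x$ otherwise. This is nondecreasing in $D$, since at the crossing $D=a_x+\theta_\mu(z_t)\le a_x+1$. For a later copy $u\in C\setminus\{p\}$, set $B=D_p+d(p,u)$. Monotonicity then gives
\[
 \ALG(u)\le W_u(B,J),\qquad W_u(B,J)-B\le (1-\Delta)\,\1\{\Delta\ge\theta_\mu(z_J)\},
\]
where $\Delta=(B-a_u)^+$. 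This is a single envelope increment with $w=1$ and $g=\Delta$. Conditioning as in \cref{lem:suffix-one}, the rank $J$ is uniform on $\{T+1,\dots,n\}$. Hence the excess has expectation at most $(1-\Delta)(\mu\Delta-z_T)^+/(1-q_T)$, which is at most $\mu\Delta$ when $\Delta<1$.

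This is where unit costs should gain over the main proof. We bound $\Delta\le (B-a_u)^+$ and use $a_u\ge$ something not yet spent, rather than $\Delta\le B$. Since $a_u\le r_u$, this gives $\Delta\le D_p+d(p,u)-a_u$. The aim is to charge $\mu$ only against the residual part, which we expect to produce the coefficient $2\mu$ instead of $4\mu$.

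For the first point there is no high-cost excess: $H_p=0$ because all costs are equal. So only the analogue of \cref{lem:first-base} is needed, with $e=(D_p-r_p)^+$ and the nearest-candidate residual $a_p\le r_p$. If $p$ opens, then $D_p=a_p\le r_p$, so $e=0$ and the cost is $1$. If $p$ does not open, then $D_p<a_p+\theta_\mu(z_T)$, so $e<\theta_\mu(z_T)$. Here the ``signed baseline cancellation'' enters: we should write $D_p=a_p+(D_p-a_p)$ and carry the $a_p$ term with a negative sign against the later-point charges $-\mu a_u$ above. Only $D_p-a_p$, which is below $\theta_\mu(z_T)$, is then multiplied by the coefficient of order $k$. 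This leaves a base certificate of the form
\[
 \E\bigl[\max\{1,\alpha_k\theta_\mu(z_T)\}\bigr],
\]
for a coefficient $\alpha_k=k+(k-1)\mu$ or similar.

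The main obstacle is this first-rank expectation, for two reasons. The zero-start clock shifts the threshold rank by one relative to \cref{lem:first-expectation}. And the constant must come out as $e^{-(1+\mu)}/\mu$ rather than $e^{-\rho}/\rho$. I would redo the tail-sum and hockey-stick computation of \eqref{eq:first-rank-overshoot} with $\ell=\lfloor n\mu/\alpha_k\rfloor+1$. Then I would bound $P_\ell\le e^{-k\ell/n}$ and optimize over $s=k\ell/n$. I expect the extra factor $e^{-1}$ to come from $\alpha_k\approx k(1+\mu)$ combined with the overshoot term. If that fails, I would fall back to the Huang--Jiang continuous-clock comparison.

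Finally, we collect the $R(C)$ coefficients: $1$ from $r_p$, $2(1-1/k)$ from the triangle inequality routing, and $2\mu$ from the later-point excess after cancellation. Their sum is at most $3+2\mu$.
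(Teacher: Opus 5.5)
Your later-point charge and the $R(C)$ bookkeeping match the paper, but the first-point step cannot produce $\Phi(\mu)$, for two reasons.

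First, you discard the $-z_T$ in $(\mu\Delta-z_T)^+$, and it must be kept. Since $\eps_p=(D_p-a_p)^+$ is zero if $p$ opens and otherwise satisfies $\eps_p<\theta_\mu(z_T)\le z_T/\mu$, and since $a_u\le a_p+d(p,u)$ ($d(\cdot,\F)$ is $1$-Lipschitz), you get $\mu\Delta-z_T\le\mu(a_p+d(p,u)-a_u)+\mu\eps_p-z_T\le\mu(a_p+d(p,u)-a_u)$. The premium then carries no $\eps_p$, so the residual is multiplied by exactly $k$ rather than $k+(k-1)\mu$.

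Second, and more seriously, no certificate of the form $\E[\max\{1,\alpha\,\theta_\mu(z_T)\}]$ can give $\Phi(\mu)$, however the tail sums are organized. For $1\ll k\ll n$, $kz_T$ is approximately $\mathrm{Exp}(1)$. With $\alpha\approx k(1+\mu)$ this expectation therefore tends to $1+e^{-\rho}/\rho=\Psi(\mu)$, where $\rho=\mu/(1+\mu)$; this is exactly the computation of \cref{lem:first-expectation}. Even with $\alpha=k$ it tends to $1+e^{-\mu}/\mu$, whose excess term is $e$ times that of $\Phi(\mu)=1+e^{-(1+\mu)}/\mu$. A larger coefficient only makes this worse, so the factor $e^{-1}$ cannot come from $\alpha_k\approx k(1+\mu)$. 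As written, your plan yields at best $\max\{\Psi(\mu),3+2\mu\}$.

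The missing idea is a monotone coupling that replaces the random threshold $\theta_\mu(z_T)$ by a single deterministic one. Condition on the identity of $p$ and on the relative order of the requests outside $C$. The rank set of $C$ is still a uniform $k$-subset, and when $p$ sits at rank $t$ its prefix is the first $t-1$ outside requests. The algorithm is deterministic and facilities never close, so the pre-service distance $\widehat D_p(t)$ is nonincreasing in $t$, while $\theta_\mu(z_t)$ is nondecreasing. Hence the triggering ranks form an initial segment $\{1,\ldots,\ell-1\}$. On $\{T\ge\ell\}$ you then have $\eps_p\le(\widehat D_p(\ell)-a_p)^+<\theta_\mu(s)$ with $s=z_\ell$ fixed. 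This gives $\E[L_p+k\eps_p]\le1+(k\theta_\mu(s)-1)^+(1-s)^k$. Put $x=ks$ and use $(1-s)^k\le e^{-x}$. The case $s\le\mu$ is maximized at $x=1+\mu$ and gives $\Phi(\mu)$, while the case $s\ge\mu$ gives $1+(k-1)e^{-k\mu}\le\Phi(\mu)$.

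Both fixes are needed: the coupling alone, with coefficient $k(1+\mu)$, still only gives $1+e^{-(1+\rho)}/\rho>\Phi(\mu)$.
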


\paragraph{Rank facts.}
Fix an optimal cluster $C$ of $k$ demand copies.  Let $p$ be its first copy
in the random permutation and $T$ its rank.  By \cref{lem:clock}, $p$ is
uniform in $C$, and, conditional on the ordered prefix through $T$, the rank
of each fixed $u\in C\setminus\{p\}$ is uniform on
$\{T+1,\ldots,n\}$.  In particular, since
$z_\ell=(\ell-1)/n$, \eqref{eq:first-rank-tail} gives, for every feasible
$\ell$,
\begin{equation}
 \Prb(T\ge\ell)
 =
 \frac{\binom{n-\ell+1}{k}}{\binom nk}
 \le
 (1-z_\ell)^k.
 \label{eq:uniform-rank-tail}
\end{equation}

\paragraph{A cluster bound.}
Let $c\in\F$ serve $C$ in the fixed optimum, and write
\[
 r_u=d(u,c),
 \qquad
 R(C)=\sum_{u\in C}r_u,
 \qquad
 a_u=d(u,\F),
 \qquad
 A(C)=\sum_{u\in C}a_u.
\]
Since $c\in\F$,
\begin{equation}
 a_u\le r_u
 \quad\text{and}\quad
 A(C)\le R(C).
 \label{eq:uniform-a-radius}
\end{equation}

Let $D_p$ be the connection distance paid after serving $p$, let
$L_p\in\{0,1\}$ be its opening cost, and define
\[
 \eps_p:=(D_p-a_p)^+.
\]
If $p$ opens, then $D_p=a_p$ and $\eps_p=0$.  If it does not open, the failed
cutoff gives
\[
 D_p\le a_p+\frac{z_T}{\mu}.
\]
The same inequality is therefore valid in both cases.

\begin{lemma}[Later request]\label{lem:uniform-later}
Let $\mathcal K_p$ be the observed history through the service of $p$.  For every fixed demand copy $u\in C$, on the event $u\ne p$,
\[
 \E[\ALG(u)\mid\mathcal K_p]
 \le
 D_p+d(p,u)
 +\mu\bigl(a_p+d(p,u)-a_u\bigr).
\]
\end{lemma}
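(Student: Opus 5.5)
We follow the proof of \cref{lem:suffix-one}, now specialized to unit costs and the zero-start cutoff $\theta_\mu(z_J)$, where $J$ is the rank of $u$. Set $B:=D_p+d(p,u)$.

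\emph{Step 1: reduce to a two-piece charge at $B$.} Let $y_p$ be an open facility serving $p$. Facilities never close, so the pre-service distance of $u$ satisfies $D_u^-\le d(u,y_p)\le B$. Under unit costs, the one-round cost of $u$ at connection distance $D$ is at most
\[
 W(D,J):=\begin{cases}D,&(D-a_u)^+<\theta_\mu(z_J),\\ a_u+1,&(D-a_u)^+\ge\theta_\mu(z_J).\end{cases}
\]
This charge is nondecreasing in $D$. For $D<a_u+\theta$ it has slope one, and at the crossing $D=a_u+\theta$ it jumps to $a_u+1\ge a_u+\theta$ because $\theta\le1$. This is the unit-cost analogue of \cref{lem:monotone}. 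Hence $\ALG(u)\le W(B,J)$, and
\[
 W(B,J)-B\le(a_u+1-B)^+\,\1\{\theta_\mu(z_J)\le(B-a_u)^+\}.
\]

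\emph{Step 2: average over $J$.} Put $g:=(B-a_u)^+$. The excess term is nonzero only when $g<1$, and then it is at most $1-g$. In that case the event becomes $z_J\le\mu g$, that is, $J-1\le n\mu g$. By \cref{lem:clock}, conditional on $\mathcal K_p$ the rank $J$ is uniform on $\{T+1,\ldots,n\}$. So the event has conditional probability at most
\[
 \frac{(n\mu g-T+1)^+}{n-T}=\frac{(\mu g-z_T)^+}{1-z_T}.
\]
Whenever this is positive we have $z_T<\mu g\le g$, so $(1-g)/(1-z_T)\le1$. The expected excess is therefore at most $(\mu g-z_T)^+\le\mu g$. (Here we use $z_T=(T-1)/n$ exactly, and the count of admissible integers $J$ in $\{T+1,\ldots,\lfloor n\mu g\rfloor+1\}$.)

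\emph{Step 3: bound $g$.} Since $B=D_p+d(p,u)$, it remains to show $D_p-a_u\le a_p+d(p,u)-a_u$, i.e.\ $D_p\le a_p$ plus the failed-cutoff slack. This is the main obstacle. We only have $D_p\le a_p+z_T/\mu$, so $g\le a_p+z_T/\mu+d(p,u)-a_u$. The extra term must be absorbed by the $-z_T$ in $(\mu g-z_T)^+$. Indeed,
\[
 \mu g-z_T\le\mu\bigl(a_p+d(p,u)-a_u\bigr)+z_T-z_T,
\]
which gives exactly $\mu(a_p+d(p,u)-a_u)$. So we keep the $-z_T$ rather than discarding it, and track the positive part carefully. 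When $g\ge1$ or $g=0$ the excess vanishes, so no issue arises there.

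Finally, $\E[\ALG(u)\mid\mathcal K_p]\le B+\mu(a_p+d(p,u)-a_u)$. The right-hand side may be negative in the $\mu$ term, but the bound $(\mu g-z_T)^+$ is dominated by it in the case where it is nonzero, and by $0$ otherwise. We will check that the nonnegativity of $a_p+d(p,u)-a_u$ holds: since $a$ is $1$-Lipschitz, $a_u\le a_p+d(p,u)$. This closes the proof.
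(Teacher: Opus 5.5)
Your proof is essentially the paper's argument. You bound the pre-service distance of $u$ by $B=D_p+d(p,u)$, observe that any premium above $B$ is at most $1-\delta_u$ and requires $z_J\le\mu\delta_u$, average over $J$ uniform on $\{T+1,\ldots,n\}$ to get $(\mu\delta_u-z_T)^+$, and then cancel $z_T$ against the slack $D_p\le a_p+z_T/\mu$ using that $d(\cdot,\F)$ is $1$-Lipschitz. The only organizational difference is that the paper splits directly into the cases $\delta_u=0$, $\delta_u\ge1$, $0<\delta_u<1$, while you route them through the monotone charge $W$.

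There is one algebraic slip in Step 2 to fix. Since $n-T=n(1-T/n)$, the ratio $\frac{n\mu g-T+1}{n-T}$ equals $\frac{\mu g-z_T}{1-T/n}$, not $\frac{\mu g-z_T}{1-z_T}$. So to get $(1-g)/(1-T/n)\le1$ you need $T/n\le g$, not merely $z_T<g$. This does hold: a nonzero count means some $J\ge T+1$ has $J-1\le n\mu g$, which gives $T/n\le\mu g\le g$, exactly as in the paper.

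Your claim that the excess vanishes when $g=0$ also needs a justification. This case includes $B<a_u$, where the factor $(a_u+1-B)^+$ exceeds $1-g$. The claim holds because $\theta_\mu(z_J)>0$, which is true since $J>T\ge1$; state this explicitly.
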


\begin{proof}
Set
\[
 B_u:=D_p+d(p,u),
 \qquad
 \delta_u:=(B_u-a_u)^+.
\]
The facility serving $p$ remains open, so the actual pre-service connection
distance $\widehat D_u$ of $u$ is at most $B_u$.
If $\delta_u=0$, then
\[
 \bigl(\widehat D_u-a_u\bigr)^+
 \le
 \bigl(B_u-a_u\bigr)^+
 =0.
\]
Since $u$ arrives after $p$, its rank satisfies $J>T\ge1$, and hence
$z_J>0$ and $\theta_\mu(z_J)>0$.  Therefore $u$ cannot trigger an
opening, and its cost is at most $\widehat D_u\le B_u$.

If $\delta_u\ge1$, then an opening costs at most
\[
 1+a_u\le B_u,
\]
while not opening also costs at most $B_u$.

It remains to consider the case $0<\delta_u<1$. An opening has cost
at most
\[
 1+a_u=B_u+(1-\delta_u),
\]
and can occur only at a future rank $J$ satisfying
\[
 z_J\le\mu\delta_u.
\]

Conditional on $\mathcal K_p$, $J$ is uniform on
$\{T+1,\ldots,n\}$.  The number of these ranks satisfying the last inequality
is either zero or at most
\[
 n\mu\delta_u-T+1
 =
 n(\mu\delta_u-z_T).
\]
If it is nonzero, then $T/n\le\mu\delta_u\le\delta_u$, where the last
inequality uses $\mu\le1$.  Hence the expected premium above $B_u$ is at most
\begin{align*}
 (1-\delta_u)
 \frac{n(\mu\delta_u-z_T)}{n-T}
 &\le
 \mu\delta_u-z_T.
\end{align*}
If there is no such rank, the premium is zero.

The first-point certificate gives
\[
 B_u-a_u
 \le
 a_p+d(p,u)-a_u+\frac{z_T}{\mu}.
\]
The map $x\mapsto d(x,\F)$ is $1$-Lipschitz, so
$a_u\le a_p+d(p,u)$.  Consequently,
\[
 \mu\delta_u-z_T
 \le
 \mu\bigl(a_p+d(p,u)-a_u\bigr).
\]
Adding this premium to $B_u$ proves the lemma.
\end{proof}

\begin{lemma}[First request]\label{lem:uniform-first}
For every cluster $C$ of size $k$,
\[
 \E[L_p+k\eps_p]\le\Phi(\mu).
\]
\end{lemma}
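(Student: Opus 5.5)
The plan mirrors the proof of \cref{lem:first-expectation}: first a pointwise certificate for $L_p+k\eps_p$ as a function of the first cluster rank $T$ alone, then a tail-sum estimate using \eqref{eq:uniform-rank-tail}. The only step I have not been able to close is reaching the exact constant $e^{-(1+\mu)}$ in $\Phi(\mu)$.

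\textbf{Pointwise certificate.} There are two cases.
\begin{itemize}
\item If $p$ opens, then $L_p=1$ and $D_p=a_p$, so $\eps_p=0$ and $L_p+k\eps_p=1$.
\item If $p$ does not open, then $L_p=0$ and the failed cutoff gives $\eps_p=\Delta(p;D_p^-)<\theta_\mu(z_T)$. Here $D_p^-$ is the pre-service distance, which equals $D_p$ because no facility was opened. Hence $k\eps_p<k\min\{1,z_T/\mu\}$.
\end{itemize}
Together these give
\[
 L_p+k\eps_p\le h(T):=\max\Bigl\{1,\;k\min\Bigl\{1,\frac{z_T}{\mu}\Bigr\}\Bigr\}
 \le 1+\Bigl(\frac{k z_T}{\mu}-1\Bigr)^+ .
\]
This bound is deterministic in $T$. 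By \cref{lem:clock}, $T$ has the same law as the minimum of a uniform $k$-subset of $[n]$, independently of the identity of $p$.

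\textbf{Tail sum.} Next I bound $\E[(kz_T/\mu-1)^+]$. Writing it as a sum over ranks $\ell$ with $z_\ell>\mu/k$, the tail formula gives
\[
 \E\Bigl[\Bigl(\frac{kz_T}{\mu}-1\Bigr)^+\Bigr]\le \frac{k}{\mu n}\sum_{\ell:\,z_\ell>\mu/k}\Prb(T\ge\ell)+(\text{boundary term}),
\qquad
 \Prb(T\ge \ell)\le(1-z_\ell)^k .
\]
I will compare the sum with the integral $\frac{k}{\mu}\int_{\mu/k}^{1}(1-s)^k\,ds=\frac{k(1-\mu/k)^{k+1}}{\mu(k+1)}$. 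I will absorb the floor/boundary term exactly as in the proof of \cref{lem:first-expectation}, setting $\ell_0=\lfloor n\mu/k\rfloor$ and using $\gamma(\ell_0+1)>1$. Both the sum-to-integral comparison and this absorption are routine bookkeeping.

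\textbf{Main obstacle.} The naive integral above behaves like $e^{-\mu}/\mu$, whereas $\Phi$ requires $e^{-(1+\mu)}/\mu$. I therefore expect the difficulty to be in sharpening the pointwise certificate. I would first try to exploit the cap $\theta_\mu\le1$ together with the discreteness of $z_T=(T-1)/n$ at the zero start, since $z_1=0$ forces an opening at rank one. Concretely, I would replace the kink $\max\{1,kz/\mu\}$ by a better affine majorant whose tangent point lies at $z=(1+\mu)/k$ rather than $\mu/k$. Optimizing the tangent location against the tail $(1-s)^k\le e^{-ks}$ should produce the factor $e^{-(1+\mu)}$.

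If this fails, the fallback is to reorganize the accounting so that part of $k\eps_p$ is charged jointly with the baseline terms $a_p$, which appear with the signed cancellation $a_p+d(p,u)-a_u$ in \cref{lem:uniform-later}. That would yield an effective threshold shifted by one unit, and the exponent would become $-(1+\mu)$ after letting $k\to\infty$ in $(1-(1+\mu)/k)^{k}$.
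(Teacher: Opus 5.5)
\textbf{There is a genuine gap.} It sits exactly where you suspected, and neither of your proposed repairs can close it.

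Your certificate $h(T)=\max\{1,k\min\{1,z_T/\mu\}\}$ depends on $T$ alone. That is the problem. For each fixed rank $t$, some prefix of outside requests leaves $\eps_p$ just below $\theta_\mu(z_t)$ without triggering. So any bound that is a function of $T$ only and holds for all histories must dominate $h(T)$. Its expectation is then at least $\E[h(T)]$. For large $k$ and $n\gg k$, the variable $kz_T$ is approximately a standard exponential $X$, so $\E[h(T)]\approx 1+\E[(X/\mu-1)^+]=1+e^{-\mu}/\mu>\Phi(\mu)$.

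Hence choosing a different affine majorant of the kink can only make things worse. Your fallback also fails. Moving part of $k\eps_p$ onto the baseline terms $a_p$ does not prove the lemma as stated, which concerns $\E[L_p+k\eps_p]$ alone. A $k\to\infty$ limit would not give a bound valid for every $k$ either.

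\textbf{The missing idea is a monotone coupling across ranks.}
\begin{enumerate}
\item Condition on the identity of $p$ and on the relative order of the requests outside $C$. The occupied rank set of $C$ is still a uniform $k$-subset, so \eqref{eq:uniform-rank-tail} holds conditionally.
\item When $p$ sits at rank $t$, the prefix consists of the first $t-1$ outside requests in that fixed order. The algorithm is deterministic and facilities never close, so the pre-service distance $\widehat D_p(t)$ is nonincreasing in $t$.
\item Meanwhile $\theta_\mu(z_t)$ is nondecreasing. Hence the triggering ranks form an initial interval.
\item Let $\ell$ be the first non-triggering rank and $s=z_\ell$. Then for every $T\ge\ell$,
\[
 \eps_p\le\bigl(\widehat D_p(\ell)-a_p\bigr)^+<\theta_\mu(s),
\]
a bound that no longer grows with $T$.
\item The charge is therefore at most $1$ on $\{T<\ell\}$ and at most $k\theta_\mu(s)$ on $\{T\ge\ell\}$. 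Its expectation is at most $1+(k\theta_\mu(s)-1)^+(1-s)^k$.
\item For $s\le\mu$, put $x=ks$. The bound is then at most $1+(x/\mu-1)e^{-x}$, which is maximized at $x=1+\mu$.
\item For $s\ge\mu$, the bound is at most $1+(k-1)e^{-k\mu}$. This is the same function evaluated at $x=k\mu$.
\end{enumerate}

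The constant $e^{-(1+\mu)}/\mu$ comes from this single-threshold maximization, not from integrating a ramp. Your integral of the ramp necessarily yields $e^{-\mu}/\mu$.
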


\begin{proof}
Fix the identity of $p$ and the relative order of all requests outside
$C$.  By the same rank-set representation used in
\cref{lem:clock}, the set of ranks occupied by $C$ is independent of
both the identity of $p$ and the relative order of the copies outside
$C$.  Hence, under the present conditioning, the occupied rank set
remains a uniformly random $k$-subset of $[n]$, and the tail bound
\eqref{eq:uniform-rank-tail} remains valid conditionally.

Couple the feasible values
$t\in\{1,\ldots,n-k+1\}$ of the first-cluster rank as follows.  When
$p$ is placed at rank $t$, the first $t-1$ requests are the first
$t-1$ requests in the fixed outside-$C$ order.  Let
\[
 \widehat D_p(t):=d(p,O_{t-1})
\]
be the pre-service connection distance of $p$ in this coupled run. 

Define
\[
 \widehat\eps_p(t)
 :=
 \bigl(\widehat D_p(t)-a_p\bigr)^+.
\]
Compare the coupled runs in which $p$ is placed at ranks $t$ and $t+1$.
The requests in ranks $1,\ldots,t-1$ are the same outside-$C$ requests in
the same ranks in both runs.  Since the algorithm is deterministic, their
histories and decisions are identical.  In the latter run, one additional
outside-$C$ request is processed at rank $t$ before $p$ arrives.  Facilities
never close, so the open set available when $p$ is served can only grow.
Consequently, $\widehat D_p(t)$ and $\widehat\eps_p(t)$ are nonincreasing in
$t$.  On the other hand,
$\theta_\mu(z_t)$ is nondecreasing in $t$.  Hence the ranks satisfying
\[
 \widehat\eps_p(t)\ge\theta_\mu(z_t)
\]
form an initial interval.

If every feasible rank triggers, then $p$ opens a nearest candidate and
$L_p+k\eps_p=1$.  Otherwise, let $\ell$ be the first non-triggering
rank and put $s=z_\ell$.  If $T<\ell$, then $p$ triggers and
$L_p+k\eps_p\le1$.  If $T\ge\ell$, then $p$ does not trigger and its
post-service distance equals its pre-service distance.  Therefore,
\[
 \eps_p
 =
 \bigl(\widehat D_p(T)-a_p\bigr)^+
 \le
 \bigl(\widehat D_p(\ell)-a_p\bigr)^+
 <
 \theta_\mu(z_\ell)
 =
 \theta_\mu(s).
\]
By the conditional form of \eqref{eq:uniform-rank-tail}, the expected
charge under the present conditioning is at most
\[
 1+\bigl(k\theta_\mu(s)-1\bigr)(1-s)^k
\]
whenever the coefficient is positive, and at most one otherwise.

For $s\le\mu$, put $x=ks$ and use $(1-s)^k\le e^{-x}$:
\[
 1+\left(\frac{x}{\mu}-1\right)e^{-x}
 \le
 1+\frac{e^{-(1+\mu)}}{\mu},
\]
because the second term is maximized at $x=1+\mu$.  For $s\ge\mu$,
\[
 1+(k-1)(1-s)^k
 \le
 1+(k-1)e^{-k\mu}
 \le
 1+\frac{e^{-(1+\mu)}}{\mu}.
\]
The bound is independent of the conditioning.
\end{proof}

\begin{proof}[Proof of \cref{thm:uniform-candidate-upper}]
For the random first request $p$, let
\[
 Y_C:=\sum_{u\in C}\1\{u\ne p\}\ALG(u).
\]
The identity of $p$ is $\mathcal K_p$-measurable.  Conditional linearity and
\cref{lem:uniform-later} give
\[
 \E[Y_C\mid\mathcal K_p]
 \le
 (k-1)D_p
 +(1+\mu)\sum_{u\ne p}d(p,u)
 +\mu\bigl(ka_p-A(C)\bigr).
\]
Since $L_p$ and $D_p$ are also $\mathcal K_p$-measurable, the tower property
gives
\begin{align}
 \E[\ALG(C)]
 \le
 \E\!\left[
  L_p+kD_p
  +(1+\mu)\sum_{u\ne p}d(p,u)
  +\mu\bigl(ka_p-A(C)\bigr)
 \right].
 \label{eq:uniform-cluster-tower}
\end{align}
No independence among the later costs is used.

Using $D_p\le a_p+\eps_p$ and \cref{lem:uniform-first},
\begin{align*}
 \E[\ALG(C)]
 \le{}&
 \Phi(\mu)
 +\E[ka_p]
 +(1+\mu)\E\left[\sum_{u\ne p}d(p,u)\right]\\
 &+\mu\,\E[ka_p-A(C)].
\end{align*}
Since $p$ is uniform in $C$,
\[
 \E[ka_p]=A(C),
 \qquad
 \E[ka_p-A(C)]=0.
\]
Furthermore,
\[
 \sum_{u\ne p}d(p,u)
 \le
 R(C)+(k-2)r_p,
\]
and $\E[r_p]=R(C)/k$.  With \eqref{eq:uniform-a-radius},
\[
 \E[\ALG(C)]
 \le
 \Phi(\mu)
 +A(C)
 +2(1+\mu)\left(1-\frac1k\right)R(C)
 \le
 \Phi(\mu)+(3+2\mu)R(C).
\]
Summing over the clusters of an optimum proves the theorem.
\end{proof}

Let $\mu^*$ denote the unique solution of
\[
 \Phi(\mu)=3+2\mu,
\]
and define
\[
 C^*:=\Phi(\mu^*)=3+2\mu^*.
\]
Numerically,
\[
 \mu^*\approx0.140215,
 \qquad
 C^*\approx3.28043.
\]
For example, $\mu=0.140215$ gives a ratio below $3.2805$ for the deterministic
zero-start rule.

\subsection{Comparison with the TimeDist class}

Huang and Jiang define TimeDist rules whose opening probability is an
arbitrary function of the rank and current connection distance.  Their class
lower bound extracts the average opening propensity at one distance scale and
uses a sparse star when that propensity is large and repeated dense locations
when it is small~\cite{HuangJiang2026}.  The analogous scalar here is the
marginal improvement.

\begin{definition}[Candidate Marginal TimeDist]
A Candidate Marginal TimeDist rule is specified by a family of
instance-independent functions
\[
 \mathcal G=(g_n)_{n\ge1},
 \qquad
 g_n:\{2,\ldots,n\}\times\R_{\ge0}\to[0,1],
\]
where the first set is empty when $n=1$.  At rank $1$, the rule opens a fixed
nearest candidate.  At every rank $t\ge2$, let $\mathcal H_t$ denote the
complete revealed history immediately before the opening decision, including
the request prefix, the current open set, and all previous actions and random
outcomes.  Conditional on $\mathcal H_t$ and the current request, the rule
opens a fixed nearest candidate with probability exactly
\[
 g_n(t,\Delta_t),
 \qquad
 \Delta_t=
 \bigl(d(v_t,O_{t-1})-d(v_t,\F)\bigr)^+.
\]
The nearest-candidate tie-breaking is fixed in advance.
\end{definition}

This definition deliberately restricts both the information used by the
opening probability and the facility selected after an opening decision.  It
does not include rules that separately inspect the current distance and the
nearest-candidate distance, use candidate identities, or open a non-nearest
candidate.

\begin{theorem}[Optimality within Candidate Marginal TimeDist]
\label{thm:uniform-class-tight}
For every Candidate Marginal TimeDist rule
$\mathcal G=(g_n)_{n\ge1}$,
\[
 \limsup_{n\to\infty}
 \sup_{\substack{I:\,|I|=n}}
 \frac{\E[\ALG_{\mathcal G}(I)]}{\OPT(I)}
 \ge C^*,
\]
where $\ALG_{\mathcal G}(I)$ denotes the cost incurred by the rule
$\mathcal G$ on instance $I$, and the expectation is over the uniformly
random arrival order and the rule's internal randomness.  The supremum ranges
over unit-cost prescribed-candidate instances with $n$ demand copies.
Consequently, Candidate Marginal DistCut with $\mu=\mu^*$ is asymptotically
optimal in this class; specifically, the
zero-start rule in Algorithm~\ref{alg:uniform-candidate} attains the matching
upper bound.

This optimality statement is confined to the information-restricted class
in the preceding definition; it makes no claim about unrestricted
candidate-aware algorithms.
\end{theorem}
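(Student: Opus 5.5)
The plan is to adapt the two-instance balancing argument of Huang and Jiang. For a fixed rule $\mathcal G$ and large $n$, I would extract a single scalar that measures the rule's opening propensity at one marginal-improvement scale. The candidate is $\Lambda_n(\delta):=\frac1n\sum_{t}g_n(t,\delta)$, or more precisely a rank-profile of $g_n(\cdot,\delta)$ at a small scale $\delta$. I would then build two unit-cost prescribed-candidate instance families. The first is a \emph{sparse} instance that punishes a large opening propensity, forcing cost roughly $\Phi(\mu)$ times $\OPT$. The second is a \emph{dense} instance that punishes a small propensity, forcing cost roughly $3+2\mu$ times $\OPT$. Here $\mu$ is an effective threshold parameter read off from $\mathcal G$. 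Since $\Phi$ is decreasing and $3+2\mu$ is increasing, one of the two instances gives a ratio at least $\max\{\Phi(\mu),3+2\mu\}\ge C^*$, up to $o(1)$ terms.

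\textbf{Sparse instance (first-point cost $\Phi$).} This mirrors \cref{lem:uniform-first}. Take one cluster of $k$ copies at a single location $x$ whose nearest candidate $c$ is at distance $a_x=0$, or tiny. Add $n-k$ outside requests placed so that, before any cluster copy arrives, an open facility sits at distance exactly $\delta$ from $x$. Then every cluster copy sees marginal improvement $\Delta=\delta$ until $c$ is opened. I would choose private candidates for the outside points so their own cost is negligible, or exactly accounted for, relative to $\OPT$. The rule pays $k\delta$ per rank it fails to open and $1$ when it opens. Optimizing over $k$ and $\delta$ against the survival profile $\prod_t(1-g_n(t,\delta))$ yields a cost of roughly $1+(k\delta-1)\Pr(\text{no opening before time }s)$. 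The tightness computation in \cref{lem:uniform-first}, with the maximizer $x=ks=1+\mu$, shows this reaches $\Phi(\mu)$ when the effective cutoff is the one in $\mathcal G$.

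\textbf{Dense instance (radius cost $3+2\mu$).} This mirrors \cref{lem:uniform-later}. Make many clusters, each a star. There are $k$ leaves, each with a \emph{private} nearest candidate at distance $a_u$ slightly below its distance $r$ to the shared optimal center $c$. The leaves are pairwise at distance about $2r$. The first arrival in a cluster opens its private candidate, so the later leaves face connection distance close to $2r$ plus residual. Each leaf then pays either about $3r$ or, by opening, $1+a_u$ with probability governed by $g_n$. Setting the scale so that $\delta\approx$ the rule's threshold makes the premium $\mu\delta$ appear, which gives $3+2\mu$ per unit of $R(C)$. With many clusters, the opening cost is negligible against $R(C)$.

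The main obstacle is defining a single $\mu$ from an arbitrary, non-threshold, randomized $g_n$ so that both instances are evaluated against the \emph{same} propensity profile. I would first try defining $\mu$ through the time at which the cumulative hazard $\sum_{t\le sn}g_n(t,\delta)$ reaches order one at the relevant scale, and then show that both costs are monotone in that hazard profile. A compactness and limiting argument over $n$ would finish it, and I expect the private-candidate geometry of the dense instance to require care about metric validity.
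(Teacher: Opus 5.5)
There is a genuine gap. You have the right two ingredients and even the right statistic, your first candidate $\Lambda_n(\delta)=\frac1n\sum_t g_n(t,\delta)$. But you leave open the step you flag as the main obstacle, and that step is where the proof lives. The route you sketch (an effective threshold read off a cumulative-hazard time, monotonicity in the hazard profile, compactness over $n$) is not needed, and as sketched it is not an argument. What is missing are two facts showing that both instances depend on $g_n$ only through the average propensity at one scale, plus a dichotomy against the fixed target $\mu^*$.

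The paper fixes the scale in advance: $n=K^3$, $L^*=1+1/\mu^*$, $d=L^*/K$. It sets $p_1=1$, $p_t=g_n(t,d)$, $\overline p=\frac1n\sum_t p_t$ and $\eta_K=\overline p/d$, and splits on $\eta_K\ge\mu^*$ versus $\eta_K\le\mu^*$.
\begin{itemize}
\item \emph{The star.} Take a center $c$ and $n$ leaves at distance $d/2$, each with a private candidate at distance $d/2-d^2$. Every leaf after rank one sees marginal exactly $d$. Hence $\E[\ALG]=n(\tfrac{3d}{2}-d^2+(1-d)\overline p)$ is \emph{linear} in $\overline p$. With $\OPT\le 1+nd/2$, the ratio is at least $3+2\mu^*-o(1)$ when $\eta_K\ge\mu^*$.
\item \emph{The repeated-location instance.} Take $K^2$ locations at pairwise distance $d$, with $K$ copies each. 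A location never opens with probability $\E\prod_{t\in S}(1-p_t)$ over its uniform rank set $S$. Comparing $S$ with $K$ i.i.d.\ uniform ranks shows this is at least $(1-\overline p)^K-O(K^2/n)$, for an arbitrary non-monotone, randomized profile. So the ratio is at least $1+(L^*-1)(1-d\mu^*)^K-o(1)=\Phi(\mu^*)-o(1)$ when $\eta_K\le\mu^*$.
\end{itemize}
Since $\Phi(\mu^*)=3+2\mu^*=C^*$, either branch gives $C^*-o(1)$. No rank profile, threshold, or limiting argument enters.

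Several details of your plan also need repair.
\begin{itemize}
\item \emph{Directions.} Your text swaps them. $\Phi$ is decreasing, so the $\Phi$-instance punishes a \emph{small} propensity: a location that never opens pays $Kd=L^*>1$ instead of $1$. The private-candidate star punishes a \emph{large} one: each opening pays $1$ to save only $d$.
\item \emph{Single-cluster $\Phi$-instance.} It does not reach $\Phi$. Whatever outside requests create the open facility at distance $\delta$ also cost $\OPT$ an amount comparable to the cluster's own opening cost $1$. This dilutes the excess $(k\delta-1)P_0$. The cure, as in the paper, is to make every location a cluster whose facility plays the outside role for the others.
\item \emph{Star size.} What makes the opening cost negligible is the size of the star ($nd\to\infty$), not the number of clusters. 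A single star also avoids controlling the first arrival of each additional star, where the marginal is large and $g_n$ is arbitrary.
\item \emph{No threshold.} A randomized $g_n$ has no threshold, so ``taking $\delta$ near the rule's threshold'' must be replaced by the exact linear identity above.
\end{itemize}
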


\begin{proof}
We adapt the sparse-versus-dense dichotomy of Huang and
Jiang~\cite{HuangJiang2026}.  The repeated-location branch is the same after
replacing their distance signal by the marginal signal.  The sparse branch
requires a private candidate at each request and is included explicitly.

For a large integer $K$, set
\[
 n:=K^3,
 \qquad
 L^*:=1+\frac1{\mu^*},
 \qquad
 d:=\frac{L^*}{K}.
\]
Let
\[
 p_1:=1,
 \qquad
 p_t:=g_n(t,d)\quad(t\ge2),
 \qquad
 \overline p:=\frac1n\sum_{t=1}^np_t,
 \qquad
 \eta_K:=\frac{\overline p}{d}.
\]
The two hard instances below have the same horizon and marginal scale $d$.

\smallskip
\noindent\emph{Large propensity: $\eta_K\ge\mu^*$.}
Put $r=d/2$ and $\eps=d^2$.  Since $K\to\infty$, we may assume
\[
 K>2L^*,
\]
so that $d<1/2$ and
\[
 r-\eps
 =
 \frac d2-d^2
 =
 d\left(\frac12-d\right)
 >0.
\]
Thus all edges in the following tree have positive length.

Consider a tree with a central candidate $c$.  For every $i\in[n]$, attach
a request location $u_i$ to $c$ by an edge of length $r$, and attach a
private candidate $y_i$ to $u_i$ by an edge of length $r-\eps$. There is one request at every $u_i$.

After the first request, a fresh $u_i$ is at distance $3r-\eps$ from every
open private candidate, while its unique nearest candidate $y_i$ is at
distance $r-\eps$.  Its marginal improvement is therefore $2r=d$, and it
opens with probability $p_t$ at rank $t$.  Opening costs
$1+r-\eps$ in that round; not opening costs $3r-\eps$.  Hence
\[
 \E[\ALG]
 =
 n\left(
  \frac{3d}{2}-d^2+(1-d)d\eta_K
 \right).
\]
Opening $c$ alone gives $\OPT\le1+nd/2$, so
\[
 \frac{\E[\ALG]}{\OPT}
 \ge
 \frac{3-2d+2(1-d)\eta_K}{1+2/(nd)}
 \ge
 C^*-o_K(1).
\]
The last inequality uses $\eta_K\ge\mu^*$ directly; this avoids requiring
any upper bound on $\eta_K$.

\smallskip
\noindent\emph{Small propensity: $\eta_K\le\mu^*$.}
Take $M=K^2$ candidate locations at pairwise distance $d$, and place $K$
copies at each location.  Thus $MK=n$.  Until a location opens, each of its
requests has marginal improvement $d$.

Fix one location and let $S$ be the uniformly random $K$-subset of ranks
occupied by its copies.  If $1\in S$, then this location opens at the first
request, and the probability that it never opens is zero; this agrees with the
product below because $p_1=1$.

Suppose now that $1\notin S$.  The request at rank one opens a different
candidate location.  On the event that the fixed location is still unopened
immediately before a rank $t\in S$, its nearest candidate is at distance zero
and every open candidate at another location is at distance $d$.  Its marginal
improvement is therefore exactly $d$.  By the definition of Candidate
Marginal TimeDist, conditional on the complete revealed history, the location
remains unopened in that round with probability $1-p_t$.  Iterating these
conditional probabilities over the ranks in $S$ and applying the chain rule
gives
\[
 \Prb(\text{the location never opens}\mid S)
 =
 \prod_{t\in S}(1-p_t).
\]
No independence of the random choices across rounds is required.

Denote the unconditional probability in the last display by $P_0$.  If
$T_1,\ldots,T_K$ are independent uniform ranks and $\mathcal D$ is the event
that they are distinct, then, conditional on $\mathcal D$, their unordered
set is a uniform $K$-subset.  Since all factors lie in $[0,1]$,
\[
 (1-\overline p)^K
 \le
 P_0+\Prb(\mathcal D^c),
\]
and
\[
 \Prb(\mathcal D^c)
 \le
 \frac{K(K-1)}{2n}
 =
 O(1/K).
\]
Because $\overline p=d\eta_K\le d\mu^*$ and $Kd=L^*$,
\[
 P_0
 \ge
 (1-d\mu^*)^K-o(1)
 =
 e^{-\mu^*L^*}-o(1).
\]

If the location ever opens, its total contribution is at least one; if it
never opens, its $K$ copies cost $Kd=L^*$.  Opening every location gives
$\OPT\le M$.  Consequently,
\begin{align*}
 \frac{\E[\ALG]}{\OPT}
 &\ge
 1+(L^*-1)P_0\\
 &\ge
 1+(L^*-1)e^{-\mu^*L^*}-o(1)\\
 &=
 \Phi(\mu^*)-o(1)
 =
 C^*-o(1).
\end{align*}

For every sufficiently large $K$, one of the two cases applies at horizon
$n=K^3$.  Hence the displayed lower bound holds along an infinite subsequence
of horizons, which proves the claimed $\limsup$ bound.  The matching upper
bound is \cref{thm:uniform-candidate-upper}.
\end{proof}

\paragraph{Relation to the full-space result.}
The zero-start cutoff, first-rank argument, and sparse-versus-dense
dichotomy above are deliberately parallel to Huang and Jiang.  In contrast,
the main nonuniform-cost algorithm uses the shifted rank $t/n$; the shift is
invisible asymptotically at a fixed scale but prevents the unbounded first-step
failure in \cref{prop:zero-start-unbounded}.

For full-space DistCut, opening at the request leaves zero connection
distance, and the sparse branch gives $2+2\mu$.  Here an opening leaves the
residual distance $d(x,\F)$; in the upper bound this contributes
$A(C)\le R(C)$, and in the lower bound the private-candidate star makes this
extra charge unavoidable.  This changes the increasing term to
$3+2\mu$.  The proof above should therefore be read as the prescribed-site
counterpart of the TimeDist argument, with the private-candidate geometry as
the substantive modification.  It does not identify the optimal competitive
ratio of unrestricted candidate-aware algorithms, which remains between
$3$ and $3.2805$.

\section{Why one-center lower bounds stop at three}\label{app:one-center}

For completeness, the value three is tight for the following broad
single-facility benchmark. This explains why improving the lower bound requires
instances whose offline optimum genuinely uses a hidden multi-facility
partition.

\begin{proposition}[First-demand-point representative]
\label{prop:one-center-three}
Let $U$ be a multiset of $k\ge1$ demand points in an arbitrary finite
candidate-facility metric with arbitrary nonnegative opening costs. Define
\[
 \OPT_1(U)
 =
 \min_{c\in\F}
 \left\{
  f_c+\sum_{u\in U}d(u,c)
 \right\}.
\]
If $k$ is known and the demand points arrive in uniformly random order, there is an
online algorithm with expected cost at most $3\OPT_1(U)$.
\end{proposition}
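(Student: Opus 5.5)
We propose the following algorithm: when the first demand point $p$ arrives, open
\[
 y^*\in\arg\min_{y\in\F}\{f_y+k\,d(p,y)\},
\]
connect $p$ to $y^*$, and connect every later demand point to $y^*$ without opening anything further. This uses the knowledge of $k$ only in the choice of $y^*$. The algorithm is deterministic, and the expectation comes only from $p$ being uniform over the copies of $U$, as guaranteed by \cref{lem:clock} with $C=U$.

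The analysis is a direct triangle-inequality computation. Fix an optimal one-center $c$ for $\OPT_1(U)$ and write $r_u=d(u,c)$ and $R=\sum_{u\in U}r_u$, so that $\OPT_1=f_c+R$.

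The algorithm's cost is
\[
 f_{y^*}+\sum_{u\in U}d(u,y^*)\le f_{y^*}+\sum_{u}\bigl(d(u,p)+d(p,y^*)\bigr)=f_{y^*}+k\,d(p,y^*)+\sum_u d(u,p).
\]
By the choice of $y^*$, comparing with $c$ gives $f_{y^*}+k\,d(p,y^*)\le f_c+k r_p$. By the triangle inequality through $c$, $\sum_u d(u,p)\le R+k r_p$. Hence
\[
 \ALG\le f_c+R+2k r_p.
\]
Since $p$ is uniform in $U$, $\E[k r_p]=R$. Therefore
\[
 \E[\ALG]\le f_c+3R\le 3\OPT_1(U).
\]

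There is no real obstacle beyond selecting the right objective for $y^*$, namely weighting $d(p,y)$ by $k$ so that it mirrors $\OPT_1$ evaluated at the proxy point $p$. The remaining points are minor. The term $d(p,p)=0$ is harmless. Zero opening costs are allowed, since no positivity is used. Finally, the statement only asks for the expected bound, so nothing beyond uniformity of the first arrival is needed; this also means the bound holds even under adversarial order of the remaining points.
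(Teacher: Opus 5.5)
Your proof is correct and matches the paper's argument. You use the same algorithm (open $y_p\in\arg\min_y\{f_y+k\,d(p,y)\}$ at the first arrival and connect everything to it), the same comparison with the optimal center $c$ to get $\ALG\le f_c+R+2kr_p$, and the same fact $\E[r_p]=R/k$; the only cosmetic difference is that the paper bounds $d(u,y_p)\le r_u+r_p+d(p,y_p)$ in one step, whereas you route through $p$ first.
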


\begin{proof}
Let $p$ be the first demand point and choose
\[
 y_p\in\arg\min_{y\in\F}
 \left\{
  f_y+kd(p,y)
 \right\}.
\]
Open $y_p$ and connect every demand point to it. Let $c$ attain $\OPT_1(U)$, and
write
\[
 f:=f_c,
 \qquad
 r_u:=d(u,c),
 \qquad
 R:=\sum_{u\in U}r_u.
\]
By the choice of $y_p$,
\[
 f_{y_p}+kd(p,y_p)\le f+kr_p.
\]
For every $u\in U$, the triangle inequality gives
\[
 d(u,y_p)\le r_u+r_p+d(p,y_p).
\]
Therefore,
\begin{align*}
 \ALG
 &=
 f_{y_p}+\sum_{u\in U}d(u,y_p)\\
 &\le
 R+kr_p+f_{y_p}+kd(p,y_p)\\
 &\le
 f+R+2kr_p.
\end{align*}
The first demand point is uniform over the $k$ demand-point copies, so
\[
 \E[r_p]=\frac{R}{k}.
\]
Hence
\[
 \E[\ALG]
 \le
 f+3R
 \le
 3(f+R)
 =
 3\OPT_1(U).
\]
\end{proof}

\begin{remark}
The lower-bound construction of \cref{thm:lower-three} has an optimal solution
using one facility and asymptotically attains the factor three. Thus no
one-facility hidden-center construction can yield a lower bound strictly larger
than three when the cluster size is known.
\end{remark}


\begin{thebibliography}{FGGPT25}

\bibitem[AKL21]{AlbersKhanLadewig2021}
Susanne Albers, Arindam Khan, and Leon Ladewig.
\newblock Improved online algorithms for knapsack and GAP in the random order
model.
\newblock \emph{Algorithmica}, 83(6):1750--1785, 2021.

\bibitem[Alm+21]{AlmanzaEtAl2021}
Matteo Almanza, Flavio Chierichetti, Silvio Lattanzi, Alessandro Panconesi,
and Giuseppe Re.
\newblock Online facility location with multiple advice.
\newblock In \emph{Advances in Neural Information Processing Systems 34
(NeurIPS)}, 2021.

\bibitem[ABUV04]{AnagnostopoulosEtAl2004}
Aris Anagnostopoulos, Russell Bent, Eli Upfal, and Pascal Van Hentenryck.
\newblock A simple and deterministic competitive algorithm for online facility
location.
\newblock \emph{Information and Computation}, 194(2):175--202, 2004.

\bibitem[AFGS22]{ArgueEtAl2022}
C.~J. Argue, Alan Frieze, Anupam Gupta, and Christopher Seiler.
\newblock Learning from a sample in online algorithms.
\newblock In \emph{Advances in Neural Information Processing Systems 35
(NeurIPS)}, pages 13852--13863, 2022.

\bibitem[APT22]{AzarPanigrahiTouitou2022}
Yossi Azar, Debmalya Panigrahi, and Noam Touitou.
\newblock Online graph algorithms with predictions.
\newblock In \emph{Proceedings of the 2022 Annual ACM-SIAM Symposium on
Discrete Algorithms (SODA)}, pages 35--66, 2022.

\bibitem[BGSZ20]{BradacEtAl2020}
Domagoj Bradac, Anupam Gupta, Sahil Singla, and Goran Zuzic.
\newblock Robust algorithms for the secretary problem.
\newblock In \emph{11th Innovations in Theoretical Computer Science
Conference (ITCS)}, 2020.

\bibitem[COP03]{CharikarEtAl2003}
Moses Charikar, Liadan O'Callaghan, and Rina Panigrahy.
\newblock Better streaming algorithms for clustering problems.
\newblock In \emph{Proceedings of the 35th Annual ACM Symposium on Theory of
Computing (STOC)}, pages 30--39, 2003.

\bibitem[CCMS18]{CyganEtAl2018}
Marek Cygan, Artur Czumaj, Marcin Mucha, and Piotr Sankowski.
\newblock Online facility location with deletions.
\newblock In \emph{26th Annual European Symposium on Algorithms (ESA)}, pages
21:1--21:15, 2018.

\bibitem[Fot07]{Fotakis2007}
Dimitris Fotakis.
\newblock A primal--dual algorithm for online non-uniform facility location.
\newblock \emph{Journal of Discrete Algorithms}, 5(1):141--148, 2007.

\bibitem[Fot08]{Fotakis2008}
Dimitris Fotakis.
\newblock On the competitive ratio for online facility location.
\newblock \emph{Algorithmica}, 50(1):1--57, 2008.

\bibitem[Fot11]{FotakisSurvey2011}
Dimitris Fotakis.
\newblock Online and incremental algorithms for facility location.
\newblock \emph{ACM SIGACT News}, 42(1):97--131, 2011.

\bibitem[FGGPT25]{FotakisEtAl2025}
Dimitris Fotakis, Evangelia Gergatsouli, Themistoklis Gouleakis, Nikolas
Patris, and Thanos Tolias.
\newblock Improved bounds for online facility location with predictions.
\newblock In \emph{Proceedings of the AAAI Conference on Artificial
Intelligence}, 39(25):26973--26981, 2025.

\bibitem[GMM+03]{GuhaEtAl2003}
Sudipto Guha, Adam Meyerson, Nina Mishra, Rajeev Motwani, and Liadan
O'Callaghan.
\newblock Clustering data streams: Theory and practice.
\newblock \emph{IEEE Transactions on Knowledge and Data Engineering},
15(3):515--528, 2003.

\bibitem[GKLX20]{GuoEtAl2020}
Xiangyu Guo, Janardhan Kulkarni, Shi Li, and Jiayi Xian.
\newblock On the facility location problem in online and dynamic models.
\newblock In \emph{Approximation, Randomization, and Combinatorial
Optimization. Algorithms and Techniques (APPROX/RANDOM)}, volume 176 of
\emph{LIPIcs}, pages 42:1--42:23, 2020.

\bibitem[GKL22]{GuptaKehneLevin2022}
Anupam Gupta, Gregory Kehne, and Roie Levin.
\newblock Random order online set cover is as easy as offline.
\newblock In \emph{2021 IEEE 62nd Annual Symposium on Foundations of Computer
Science (FOCS)}, pages 1253--1264, 2022.

\bibitem[GS21]{GuptaSingla2021}
Anupam Gupta and Sahil Singla.
\newblock Random-order models.
\newblock In Tim Roughgarden, editor, \emph{Beyond the Worst-Case Analysis of
Algorithms}, pages 234--258. Cambridge University Press, 2021.

\bibitem[HJ26]{HuangJiang2026}
Yichen Huang and Shaofeng H.-C. Jiang.
\newblock The power of arrival times in random-order online facility location.
\newblock arXiv:2607.10564, 2026.

\bibitem[JLL+22]{JiangEtAl2022}
Shaofeng H.-C. Jiang, Erzhi Liu, You Lyu, Zhihao Gavin Tang, and Yubo Zhang.
\newblock Online facility location with predictions.
\newblock In \emph{International Conference on Learning Representations
(ICLR)}, 2022.

\bibitem[KNR20]{KaplanNaoriRaz2020}
Haim Kaplan, David Naori, and Danny Raz.
\newblock Competitive analysis with a sample and the secretary problem.
\newblock In \emph{Proceedings of the Thirty-First Annual ACM-SIAM Symposium
on Discrete Algorithms (SODA)}, pages 2082--2095, 2020.

\bibitem[KNR22]{KaplanNaoriRaz2022}
Haim Kaplan, David Naori, and Danny Raz.
\newblock Online weighted matching with a sample.
\newblock In \emph{Proceedings of the 2022 Annual ACM-SIAM Symposium on
Discrete Algorithms (SODA)}, pages 1247--1272, 2022.

\bibitem[KNR23]{KaplanNaoriRaz2023}
Haim Kaplan, David Naori, and Danny Raz.
\newblock Almost tight bounds for online facility location in the random-order
model.
\newblock In \emph{Proceedings of the 2023 Annual ACM-SIAM Symposium on
Discrete Algorithms (SODA)}, pages 1523--1544, 2023.

\bibitem[KRTV18]{KesselheimEtAl2018}
Thomas Kesselheim, Klaus Radke, Andreas T{\"o}nnis, and Berthold V{\"o}cking.
\newblock Primal beats dual on online packing LPs in the random-order model.
\newblock \emph{SIAM Journal on Computing}, 47(5):1939--1964, 2018.

\bibitem[KKN15]{KesselheimKleinbergNiazadeh2015}
Thomas Kesselheim, Robert Kleinberg, and Rad Niazadeh.
\newblock Secretary problems with non-uniform arrival order.
\newblock In \emph{Proceedings of the 47th Annual ACM Symposium on Theory of
Computing (STOC)}, pages 879--888, 2015.

\bibitem[KM20]{KesselheimMolinaro2020}
Thomas Kesselheim and Marco Molinaro.
\newblock Knapsack secretary with bursty adversary.
\newblock In \emph{47th International Colloquium on Automata, Languages, and
Programming (ICALP)}, pages 72:1--72:15, 2020.

\bibitem[Lan18]{Lang2018}
Harry Lang.
\newblock Online facility location against a $t$-bounded adversary.
\newblock In \emph{Proceedings of the Twenty-Ninth Annual ACM-SIAM Symposium
on Discrete Algorithms (SODA)}, pages 1002--1014, 2018.

\bibitem[LMWX26]{LiMiaoWuXu2026}
Mengzhen Li, Runjie Miao, Chenchen Wu, and Dachuan Xu.
\newblock On competitive ratio for online uniform facility location problem in
random-order model.
\newblock \emph{Theoretical Computer Science}, 1072:115878, 2026.

\bibitem[Mey01]{Meyerson2001}
Adam Meyerson.
\newblock Online facility location.
\newblock In \emph{Proceedings of the 42nd IEEE Symposium on Foundations of
Computer Science (FOCS)}, pages 426--431, 2001.

\bibitem[MOGZ12]{MirrokniEtAl2012}
Vahab S. Mirrokni, Shayan Oveis Gharan, and Morteza Zadimoghaddam.
\newblock Simultaneous approximations for adversarial and stochastic online
budgeted allocation.
\newblock In \emph{Proceedings of the Twenty-Third Annual ACM-SIAM Symposium
on Discrete Algorithms (SODA)}, pages 1690--1701, 2012.

\bibitem[Mol17]{Molinaro2017}
Marco Molinaro.
\newblock Online and random-order load balancing simultaneously.
\newblock In \emph{Proceedings of the Twenty-Eighth Annual ACM-SIAM Symposium
on Discrete Algorithms (SODA)}, pages 1638--1650, 2017.

\bibitem[NW13]{NagarajanWilliamson2013}
Chandrashekhar Nagarajan and David P. Williamson.
\newblock Offline and online facility leasing.
\newblock \emph{Discrete Optimization}, 10(4):361--370, 2013.

\end{thebibliography}
\end{document}